\documentclass[11pt]{article}
\usepackage[utf8]{inputenc}
\usepackage[a4paper,margin=1in]{geometry}
\usepackage{microtype} 
\usepackage{hyperref}
\hypersetup{colorlinks=true,linkcolor=[rgb]{0.75,0,0},citecolor=[rgb]{0,0,0.75}}
\usepackage{adjustbox}
\usepackage{graphicx}
\usepackage{amsmath}
\usepackage{mathrsfs}
\usepackage{amssymb}
\usepackage{color}
\usepackage{threeparttable}
\usepackage{amsthm}
\usepackage{thmtools}
\graphicspath{{./graphics/}}
\bibliographystyle{plainurl}
\declaretheorem[name=Problem]{problem}
\declaretheorem[name=Theorem,numberwithin=section]{theorem}
\declaretheorem[name=Lemma,sibling=theorem]{lemma}
\newcommand{\graphicsScale}{1.}
\newcommand{\butterflyScale}{0.8} 
\newcommand{\grey}[1]{{\textcolor[rgb]{.4,.4,.4}{#1}}}
\newcommand{\green}[1]{{\textcolor[rgb]{0,.506,.0}{#1}}}
\newcommand{\R}{R} 
\newcommand{\B}{B} 
\newcommand{\lt}{\ell} 
\newcommand{\M}{M} 
 
\newcommand{\dist}{d} 
\newcommand{\dd}{\mathbf{d}} 
\newcommand{\D}{\mathbf{D}} 
\newcommand{\PP}{\mathcal{P}} 
\newcommand{\OO}{O} 
\newcommand{\oo}{o} 
\newcommand{\V}{\mathcal{V}}  
\newcommand{\K}{\mathcal{S}} 
\newcommand{\length}[1]{\left|#1\right|} 
\newcommand{\opnint}[2]{\left]#1, #2\right[} 
\newcommand{\sgt}[2]{#1 #2} 
\newcommand{\ray}[2]{#1 #2} 
\newcommand{\lineT}[2]{#1 #2} 
 
\newcommand{\X}{\mathbf{X}} 
\newcommand{\HH}{\mathbf{H}} 
\newcommand{\T}{\mathbf{T}} 
\newcommand{\gS}{\mathbf{S}}

\newcommand{\nbf}{f} 
\newcommand{\nbg}{g} 
\newcommand{\nbh}{\bar h} 
\newcommand{\lft}{\swarrow} 
\newcommand{\ctr}{\downarrow} 
\newcommand{\rgt}{\searrow} 
\newcommand{\pair}[2]{#1,#2} 
\newcommand{\ipair}[2]{\langle#1,#2\rangle} 
\newcommand{\card}[1]{\left|#1\right|} 
\newcommand{\trgl}[3]{#1 #2 #3} 
\newcommand{\Figure}{Figure}
\newcommand{\Figures}{Figures}

\title{Complexity Results on Untangling\texorpdfstring{\\}{} Red-Blue Matchings\texorpdfstring{\thanks{Preliminary versions of this paper appeared on the 38th European Workshop on Computational Geometry (EuroCG'22) and the 15th Latin American Theoretical Informatics Symposium (LATIN'22). This work is partially supported by the IFCAM project Applications of Graph Homomorphisms (MA/IFCAM/18/39), and by the French ANR PRC grant ADDS (ANR-19-CE48-0005).}}{}}
\author{
  Arun Kumar Das\thanks{Indian Statistical Institute, Kolkata, India. arund426@gmail.com, sandipdas@isical.ac.in} 
  \and
  Sandip Das\textsuperscript{\dag} 
  \and 
  Guilherme D. da Fonseca\thanks{Aix-Marseille Université and LIS, France. guilherme.fonseca@lis-lab.fr} 
  \and 
  Yan Gerard\thanks{Université Clermont Auvergne and LIMOS, France. \{yan.gerard, bastien.rivier\}@uca.fr} 
  \and 
  Bastien Rivier\textsuperscript{\S} 
  }
\date{}

\begin{document}
\maketitle

\begin{abstract}
  Given a matching between $n$ red points and $n$ blue points by line segments in the plane, we consider the problem of obtaining a crossing-free matching through flip operations that replace two crossing segments by two non-crossing ones.
  We first show that (i) it is NP-hard to $\alpha$-approximate the shortest flip sequence, for any constant $\alpha$. 
  Second, we show that when the red points are collinear, (ii) given a matching, a flip sequence of length at most $\binom{n}{2}$ always exists, and (iii) the number of flips in any sequence never exceeds $\binom{n}{2}\frac{n+4}{6}$. 
  Finally, we present (iv) a lower bounding flip sequence with roughly $1.5 \binom{n}{2}$ flips, which shows that the $\binom{n}{2}$ flips attained in the convex case are not the maximum, and (v) a convex matching from which any flip sequence has roughly $1.5\, n$ flips. 
  The last four results, based on novel analyses, improve the constants of state-of-the-art bounds. 
\end{abstract}

\section{Introduction}
\label{sec:intro}

We consider the problem of untangling a perfect red-blue matching drawn in the plane with straight line segments.
We are given a set of $2n$ points in the plane, partitioned into a set $\R$ of $n$ red points, and a set $\B$ of $n$ blue points, in general position (no three collinear points, unless they have the same color). 

In combinatorial reconfiguration,
a flip is an  operation changing a configuration into another~\cite{BOSE200960,phdJof}. 
In our case, a \emph{configuration} is a set of $n$ line segments where each point of $\R$ is matched to exactly one point of $\B$, i.e., a perfect straight-line red-blue matching (a \emph{matching} for short), and
a \emph{flip} replaces two crossing segments by two non-crossing ones (\Figure~\ref{fig:flip}). 

\begin{figure}[!ht]
  \centering
  \includegraphics[scale=\graphicsScale]{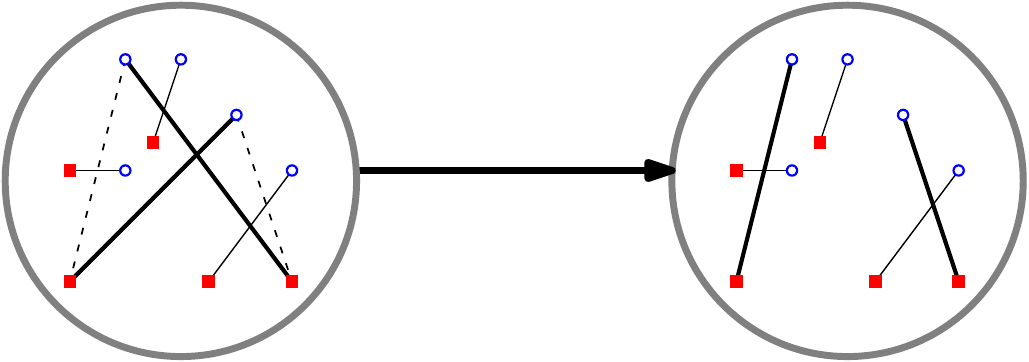}
  \caption{Matchings before and after a flip. Solid squares are red points and hollow circles are blue points.}
  \label{fig:flip}
\end{figure}

The \emph{reconfiguration graph} is the directed simple graph whose vertices $\V$ are the configurations, and such that there is a directed edge from a configuration $\M_1$ to another one $\M_2$ whenever a flip transforms $\M_1$ into $\M_2$.
Note that, in our case, since a flip strictly shortens the total length of the segments (triangle inequality in the two triangles of \Figure~\ref{fig:flip}), the reconfiguration graph is acyclic~\cite{BoM16}.
Let $\K \subseteq \V$ be the set of sinks, which corresponds to the crossing-free matchings. 
Given two configurations $u,v \in \V$, let $\PP(u,v)$ be the set of directed paths from $u$ to $v$. 
Given a path $P$, let the \emph{length} of $P$, denoted $\length{P}$, be the number of edges in $P$. 
We are interested in two parameters of this reconfiguration graph: 
\[
  \dd(\R,\B) = \max_{u \in \V} \min_{v \in \K} \min_{P \in \PP(u,v)} \length{P} 
  \quad \text{ and } \quad 
  \D(\R,\B) = \max_{u \in \V} \max_{v \in \K} \max_{P \in \PP(u,v)} \length{P}.
\]

This leads to the definitions of $\dd(n)$ and $\D(n)$ respectively as the maximum of $\dd(\R,\B)$ and $\D(\R,\B)$ over all sets $\R,\B$ with $\card{\R} = \card{\B} = n$.
An \emph{untangle sequence} is a path in the reconfiguration graph ending in $\K$.
Intuitively, $\dd$ corresponds to the minimum length of an untangle sequence in the worst case, while $\D$ corresponds to the longest untangle sequence.

We also consider a more specific version of the problem where the red points are collinear~\cite{BMS19}, say, on the $x$-axis.
As the flips on each half-plane defined by the $x$-axis are independent, we additionally suppose all blue points to lie on the upper half-plane without loss of generality.
The matchings in this case are called \emph{red-on-a-line} matchings. 

\paragraph{Related Work.}
The parameters $\dd$ and $\D$ have been studied in several different contexts with similar definitions of a flip, but considering other configurations. 

In 1981, an $\OO(n^3)$ upper bound on $\D(n)$ was stated in the context of optimizing a TSP tour~\cite{VLe81} (the configurations are polygons). 
This upper bound should be compared to the exponential lower bound on $\D(n)$ when the flips are not restricted to crossing segments, as long as they decrease the Euclidean length of the tour~\cite{ERV14}.
The convex case (i.e., the case where the points are in convex position) has been studied in~\cite{OdW07,WCL09}. 

In the non-bipartite version of the straight-line perfect matching problem, there are two possible pairs of segments to replace a crossing pair. 
This additional choice yields an $n^2/2$ upper bound on $\dd(n)$~\cite{BoM16}. 

It is also possible to relax the flip definition to all operations that replace two segments by two others with the same four endpoints, whether they cross or not, and generalize the configurations to multigraphs with the same degree sequence~\cite{Hak62,Hak63,phdJof}. 
In this context, finding the shortest path from a given configuration to another in the reconfiguration graph is NP-hard, yet $1.5$-approximable~\cite{BeI08,BeI17,EKM13,Wil99}. If we additionally require the configurations to be connected graphs, the same problem is NP-hard and 2.5-approximable~\cite{BJ20}. 

Reconfiguration problems in the context of triangulations are widely studied~\cite{NiN18}. 
A flip consists of removing one edge and adding another one while preserving a triangulation. It is known that $\Theta(n^2)$ flips are sufficient and sometimes necessary to obtain a Delaunay triangulation~\cite{HNU99,Law72}. 
Determining the flip distance between two triangulations of a point set~\cite{LuP15,Pil14} and between two triangulations of a simple polygon~\cite{AMP15} are both NP-hard.

Considering perfect matchings of an arbitrary graph (instead of the complete bipartite graph on $\R, \B$), a flip amounts to exchanging the edges in an alternating cycle of length four. 
It is then PSPACE-complete to decide whether there exists a path from a configuration to another~\cite{BBH19}. 
There is, actually, a wide variety of reconfiguration contexts derived from NP-complete problems where this same accessibility problem is PSPACE-complete~\cite{TDH11}. 
Many other reconfiguration problems are presented in~\cite{Heu13}.

Getting back to our context of straight-line red-blue matchings, 
the values of $\dd$ and $\D$ have been determined almost exactly in the convex case (see Table~\ref{tab:summary}). 
Notice that the $n-1$ lower bound on $\dd(n)$ carries to both the general and red-on-a-line cases~\cite{BoM16}. 
It is notable that the upper bound on $\D(n)$ is also the best known bound on $\dd(n)$ and has not been improved since 1981~\cite{VLe81}. 

As a final side note, given a red-blue point set, a crossing-free red-blue matching can be computed in $O(n \log n)$ time~\cite{HeSu92}. The algorithm is based on semi-dynamic convex hull data structures and does not use flips. 
The problem has also been considered in higher dimensions~\cite{AkAl89}.

\paragraph{Contributions.}

We show in Section~\ref{sec:NP} that it is NP-hard to $\alpha$-approximate the shortest untangle sequence starting at a given matching, for any fixed $\alpha \geq 1$. 

\begin{table}[tb]
  \centering
  \scalebox{0.93}{
  \begin{threeparttable} 
    \caption{Lower and upper bounds on $\dd(n)$ and $\D(n)$ for red-blue matchings.}
    \label{tab:summary}
      \begin{tabular}{|r|l|l|l|l|}
        \hline
        & \multicolumn{2}{c|}{$\dd(n)$ bounds} & \multicolumn{2}{c|}{$\D(n)$ bounds} \\ \cline{2-5} 
        & \multicolumn{1}{c|}{lower} & \multicolumn{1}{c|}{upper} & \multicolumn{1}{c|}{lower} & \multicolumn{1}{c|}{upper} \\ \hline
        general \rule{0pt}{3ex} & $\frac{3}{2} n - 2$, Thm.~\ref{thm:lowerBd}$^*$& $\binom{n}{2} (n-1)$, \cite{BoM16,VLe81} & $\frac{3}{2}\binom{n}{2} - \frac{n}{4}$, Thm.~\ref{thm:lowerBD}$^*$ & $\binom{n}{2} (n-1)$, \cite{BoM16,VLe81} \\ 
        convex~ & $\frac{3}{2} n - 2$, Thm.~\ref{thm:lowerBd}$^*$ & $2n-2$, \cite{BMS19} & $\binom{n}{2}$, \cite{BoM16} & $\binom{n}{2}$, \cite{BMS19} \\
        red-on-a-line \rule[-1.5ex]{0pt}{0pt} & $n-1$, \cite{BoM16} & $\binom{n}{2}$, Thm.~\ref{thm:algo} & $\frac{3}{2}\binom{n}{2} - \frac{n}{4}$, Thm.~\ref{thm:lowerBD}$^*$ & $\binom{n}{2}\frac{n+4}{6}$, Thm.~\ref{thm:upperB} \\
        \hline
      \end{tabular}
    \begin{tablenotes} 
    \item [$^*$ ] For even $n$.
    \end{tablenotes}
  \end{threeparttable}}
\end{table}

The following results are summarized in Table~\ref{tab:summary}.
An improved lower bound on $\dd(n)$ in the convex case is presented in Section~\ref{sec:fenceLowerB}.
The remainder of the paper considers the red-on-a-line case.
In Section~\ref{sec:algo}, we slightly improve the former $2\binom{n}{2}$ upper bound on $\dd(n)$~\cite{BMS19}, using a simpler algorithm and a novel analysis.
In Section~\ref{sec:upperB}, we asymptotically divide by $6$ the historical $\binom{n}{2} (n-1)$ upper bound on $\D(n)$~\cite{BoM16,VLe81}, using a different potential argument.

In Section~\ref{sec:butterflyLowerB}, we present a counterexample to the intuition that the longest untangle sequence is attained in the convex case (where the number of crossings is maximal). 
We take advantage of points that are not in convex position to increase the lower bound by a factor of $\frac{3}{2}$. 
This red-on-a-line lower bound on $\dd(n)$ carries over to the general case (and even to non-bipartite perfect matchings). 
The conjecture that $\D(n)$ is quadratic~\cite{BoM16} remains open, though.

\section{NP-Hardness}
\label{sec:NP}

In this section, we prove the NP-hardness of the following problem. Let $\dist(\M)$ denote the minimum path length from a matching $\M$ to $\K$, the set of crossing-free matchings, in the reconfiguration graph.

\begin{problem}
  \label{pb:approx} 
  Let $\alpha \geq 1$ be a constant. \\
  {Input:} $\M$, a red-blue matching with rational coordinates. \\
  {Output:} An untangle sequence starting at $\M$ of length at most $\alpha$ times $\dist(\M)$. 
\end{problem}

We have the following theorem.

\begin{theorem}
  \label{thm:NPapx}
  Problem~\ref{pb:approx} is NP-hard for all $\alpha \ge 1$.
\end{theorem}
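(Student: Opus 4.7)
The plan is to reduce from 3-SAT in polynomial time, producing a red-blue matching $\M_\phi$ for each 3-CNF formula $\phi$, such that the shortest untangle sequence length $\dist(\M_\phi)$ is small when $\phi$ is satisfiable and large when it is not, with a gap large enough that any $\alpha$-approximation algorithm would distinguish the two cases and thereby decide satisfiability.

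The construction assembles three types of geometric gadgets: variable gadgets with two locally-optimal first-flip patterns encoding the two truth values; wire gadgets that propagate a truth value across the plane; and clause gadgets that can be untangled cheaply iff at least one incoming wire carries the satisfying value. The gadgets are placed in general position with rational coordinates and are geometrically separated so that, using the fact that each flip strictly decreases total segment length (already invoked in the introduction to prove acyclicity), no flip can pair segments from different gadgets. To amplify the hardness to every $\alpha \ge 1$, each clause gadget is augmented with a \emph{penalty chain} of tunable length $L$: a stack of crossings that releases at unit cost whenever the clause is satisfied by its incoming wires, but which must otherwise be paid off one flip at a time. Choosing $L$ polynomial in $\alpha$ and in the size of $\phi$ yields an explicit upper bound $k(\phi)$ on $\dist(\M_\phi)$ in the satisfiable case and a strict lower bound exceeding $\alpha \cdot k(\phi)$ in the unsatisfiable case, so any polynomial-time $\alpha$-approximation algorithm for Problem~\ref{pb:approx} would separate the two and decide 3-SAT.

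The main obstacles are the geometric realization of the gadgets and the NO-case lower bound. For the realization, the challenge is to ensure that wires transmit truth values faithfully and that no unintended shortcut flip sequence short-circuits the encoding; I would address this by a local case analysis of every gadget's reconfiguration subgraph, leveraging the small number of crossings within each gadget to enumerate the admissible flip sequences. For the lower bound, the triangle inequality alone is insufficient: I would introduce a weighted-crossing potential function that assigns large weight to penalty-chain crossings, show that each flip decreases the potential by at most a bounded amount, and argue that an unsatisfiable instance forces every untangle sequence to eventually release at least one clause's penalty chain in full, paying $\Omega(L)$ flips there. Tuning the chain length and the weights then converts this into the quantitative gap required for hardness at the prescribed $\alpha$.
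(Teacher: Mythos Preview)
Your high-level plan---variable gadgets with two one-flip resolutions, clause gadgets, and a tunable penalty to amplify the YES/NO gap---is exactly the paper's strategy, and the final arithmetic (choose the penalty length so that $\alpha$ times the satisfiable bound is still below the unsatisfiable bound) is the same. However, two steps in your sketch do not go through as written.

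First, your isolation argument is wrong. You write that gadgets are ``geometrically separated so that, using the fact that each flip strictly decreases total segment length, no flip can pair segments from different gadgets.'' The length-decrease property only gives acyclicity of the reconfiguration graph; it says nothing about which pairs of segments may be flipped. What actually isolates submatchings is having pairwise disjoint convex hulls (the paper's Lemma~\ref{lem:convex}). But disjoint convex hulls are incompatible with wires that carry a truth value from a variable gadget into a clause gadget: the wire must geometrically enter the clause gadget, and then segments from the two gadgets \emph{can} and \emph{do} get flipped together. The paper does not avoid this interaction; it embraces it, and spends a separate lemma (the branching analysis, Lemma~\ref{lem:branching}) to show that all untangle sequences through the variable/clause interface have the same length and the same outcome.

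Second, reducing from arbitrary 3-SAT forces your wires to cross one another, and you give no crossover gadget for red-blue matchings. This is not a detail one can defer: a crossing between two wires is itself a flippable pair, and a single uncontrolled flip there can scramble the truth values both wires carry. The paper sidesteps the issue entirely by reducing from \emph{Rectilinear Planar Monotone} 3-SAT, whose planar layout lets all connections be realized by vertical segments that never cross each other; monotonicity further lets positive and negative clauses live on opposite sides of the variable line. Without a planar source problem or an explicit crossover gadget, your reduction is incomplete.

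A smaller point: your proposed weighted-crossing potential for the NO case is heavier machinery than needed. The paper's $k$-padding gadget is built inductively so that, once triggered, it has a \emph{unique} untangle sequence of length exactly $k$ (each new segment crosses only the last segment created by the unique sequence of the $(k{-}1)$-gadget). The lower bound then follows by direct enumeration, with no potential function required.
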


\paragraph{Reduction Strategy.}

De Berg and Khosravi~\cite{deBerg2012} showed that the rectilinear planar monotone $3$-SAT problem (\emph{RPM $3$-SAT}) is NP-hard. The RPM $3$-SAT problem is a special case of the classic $3$-SAT problem in which the clauses consist only of either all positive or all negative literals and the layout is planar (\Figure~\ref{fig:RPM3CNF}).
We reduce RPM $3$-SAT to Problem~\ref{pb:approx}. 
The key elements of the reduction are described next.

Given a planar embedding of an RPM $3$-CNF formula $\Phi$ (\Figure~\ref{fig:RPM3CNF}), we construct a matching $\M_{\Phi}$ of polynomial size.
The property of this matching $\M_{\Phi}$ is that its shortest untangle sequence has a length below a certain constant if $\Phi$ is satisfiable and above $\alpha$ times this constant otherwise.
Figure~\ref{fig:MPhi} shows the matching $\M_{\Phi}$ corresponding to the formula $\Phi=(x_1 \vee x_2 \vee x_3)\wedge(x_3 \vee x_4 \vee x_5)\wedge(x_3 \vee x_5 \vee x_6)\wedge(\overline{x_2} \vee \overline{x_3} \vee \overline{x_ 4})$ from \Figure~\ref{fig:RPM3CNF}.

The aforesaid matching $\M_{\Phi}$ is built using two types of gadgets. 
The variable rectangles are replaced by variable gadgets (\Figure~\ref{fig:variableGadget}). 
The clause rectangles together with the corresponding edges are replaced with padded clause gadgets. 
A padded clause gadget is represented in \Figure~\ref{fig:clauseGadgetWithPadding} with plain segments. 
Throughout all the figures in Section~\ref{sec:NP}, the dashed segments represent all the possibly created segments after any sequence of flips. 

A \emph{variable gadget} is a three-segment matching with two crossings. It allows for two possible flips, either of which produces a crossing-free matching, as shown in \Figure~\ref{fig:variableGadget}. 
The flip generating the topmost segment stands for \emph{false} ($x=0$ in \Figure~\ref{fig:variableGadget}), while the flip generating the bottom segment stands for \emph{true} ($x=1$). 

A \emph{clause gadget} is an OR gate with three inputs (\Figure~\ref{fig:clauseGadget}). 
The RPM $3$-CNF clauses are either positive or negative. We describe the gadget for a positive clause, but the gadget for a negative clause can be defined analogously (by a vertical reflection).
Three variable gadgets are the inputs of a clause gadget.
In the crossing-free matching obtained for the clause gadget, the presence of the topmost segment ($\sgt{r_4}{b_7}$ in \Figures~\ref{fig:clauseGadgetbis}, \ref{fig:paddingGadget}, \ref{fig:clauseGadgetWithPadding},  and~\ref{fig:clauseGadget}) stands for a false output. 

A \emph{padding gadget} is a gadget that serves to force an arbitrarily large number $k$ of flips if a clause is false. It consists of a series of $k$ non-crossing segments (the plain segments in \Figure~\ref{fig:paddingGadget}, $\sgt{r_4}{b_7}$ aside). 
A \emph{padded clause gadget} is a clause gadget coupled with a padding gadget in such a way that the presence of the output segment triggers $k$ extra flips (\Figure~\ref{fig:clauseGadgetWithPadding}).

Let $c$ be the number of clauses and $v$ be the number of variables of the formula $\Phi$.
If $\Phi$ is \emph{satisfiable}, then the shortest untangle sequence of  $\M_ \Phi$ has at most $5$ flips per clause plus $1$ flip per variable. In this case, we have $\dist(M_ \Phi) \leq 5c + v$.
We choose the size of the padding gadget so that a non-satisfied clause triggers $k = \alpha (5c + v) + 1$ flips. 
If the formula $\Phi$ is \emph{not satisfiable}, then at least one of the padding gadgets is triggered and $\dist(M_ \Phi) > \alpha (5c + v)$.

\paragraph{The Problem to Be Reduced.}

In \emph{RPM $3$-SAT}, the \emph{graph} of a conjunctive normal form (CNF) formula is the bipartite graph with the variables and clauses as vertices, and where there is an edge between a variable and a clause if and only if the clause contains the variable.
A clause is said to be \emph{positive} if it contains only positive variables; it is said to be \emph{negative} if it contains only negative variables.
A CNF formula is \emph{monotone} if each clause is either positive or negative. 

A \emph{rectilinear planar monotone} 3-CNF (\emph{RPM 3-CNF}) formula is a monotone formula with $3$-variables per clause whose graph can be drawn with the following conventions (\Figure~\ref{fig:RPM3CNF}). 
(i) The variables and the clauses are represented by axis-parallel non-overlapping closed rectangles. 
(ii) The variable rectangle centroids lie on the $x$-axis. 
(iii) The positive clause rectangles are above the $x$-axis, the negative ones, below. 
(iv) The edges connecting a variable to a clause are vertical line segments and do not cross any other rectangle. 
We call such a drawing a \emph{planar embedding} of $\Phi$.

\begin{figure}[!ht]
  \centering
  \includegraphics[scale=\graphicsScale]{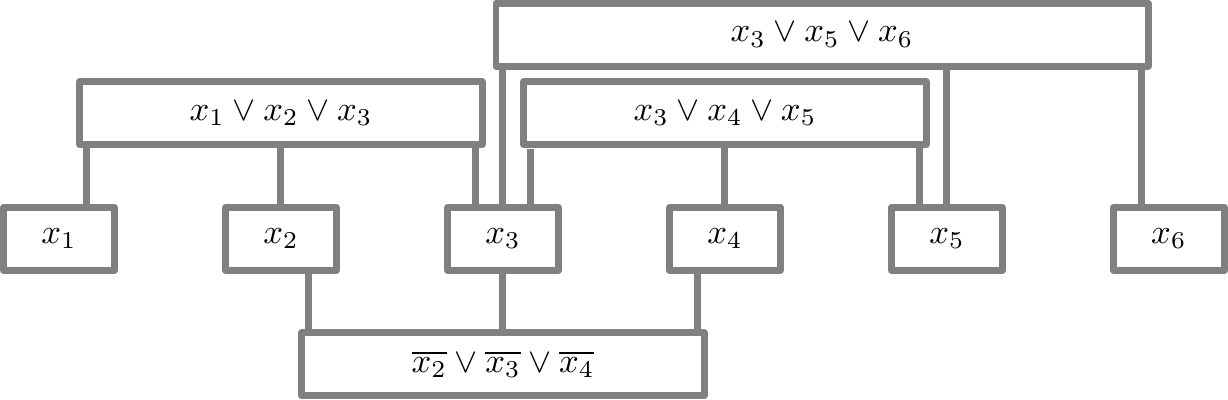}
  \caption{A planar embedding of an RPM $3$-CNF formula $\Phi$.}
  \label{fig:RPM3CNF}
\end{figure}

\begin{figure}[!ht]
  \centering
  \includegraphics[page=2,scale=\graphicsScale]{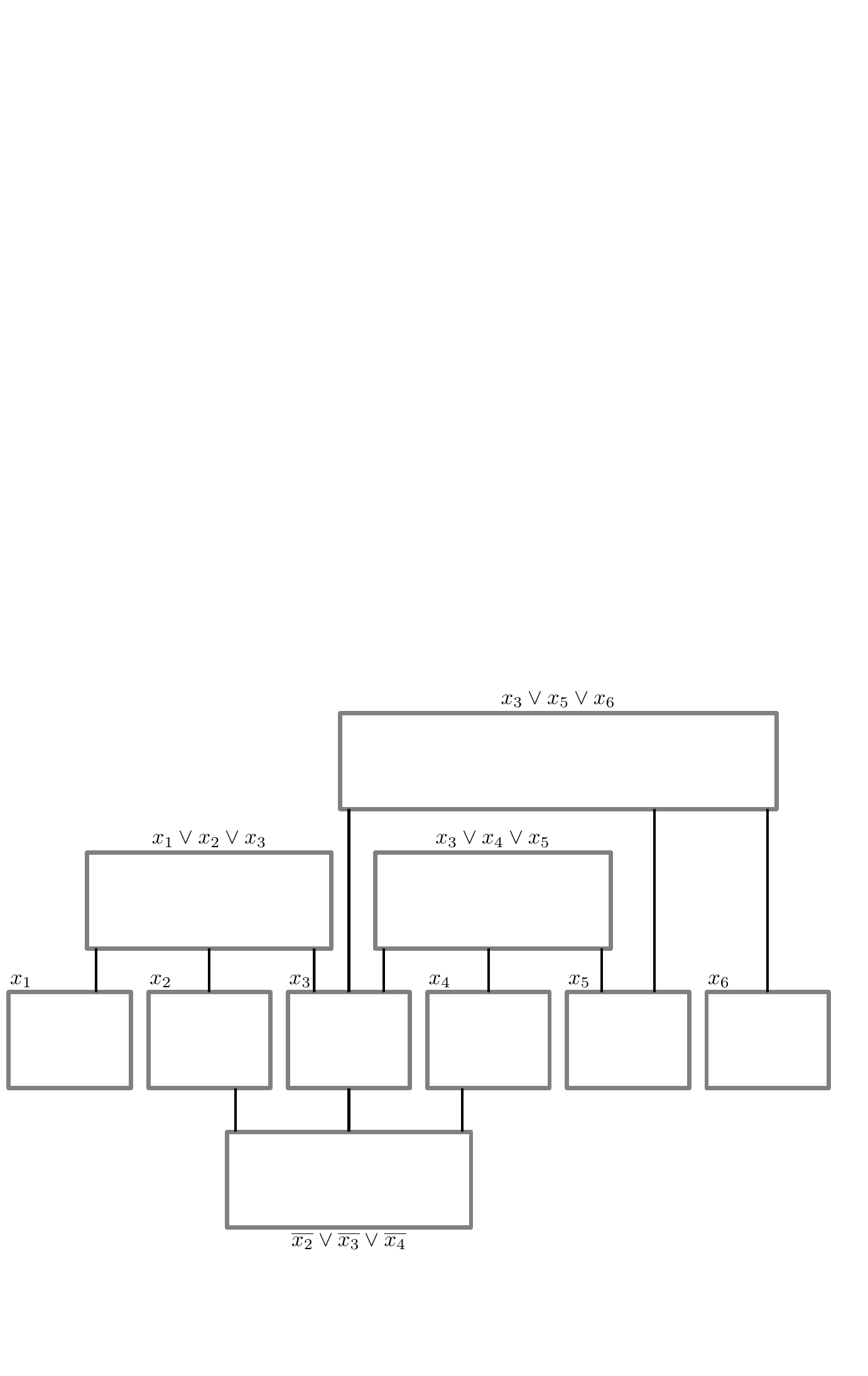}
  \caption{The matching $\M_\Phi$ of the formula $\Phi$ from \Figure~\ref{fig:RPM3CNF}.}
  \label{fig:MPhi}
\end{figure}

\paragraph{Variable Gadgets.} 
A \emph{variable gadget} is a three-segment matching built on the four endpoints of an axis-parallel rectangle as follows (\Figure~\ref{fig:variableGadget}). 
The two leftmost endpoints of the rectangle are colored red, the two rightmost ones are colored blue. 
One of the segments of the matching is the diagonal joining the bottom left red point to the top right blue point. 
We add one red point on the vertical line splitting the rectangle in two symmetric halves, just above the diagonal, in the inside of the rectangle. 
This red point is connected to the bottom right blue point. 
Similarly, we add one blue point on the same vertical, just below the diagonal. 
This blue point is connected to the top left red point.

We will refer to the triangle consisting of the three topmost points of a variable gadget as the \emph{top triangle} of the variable gadget.

\begin{figure}[ht]
  \centering
  \includegraphics[scale=\graphicsScale]{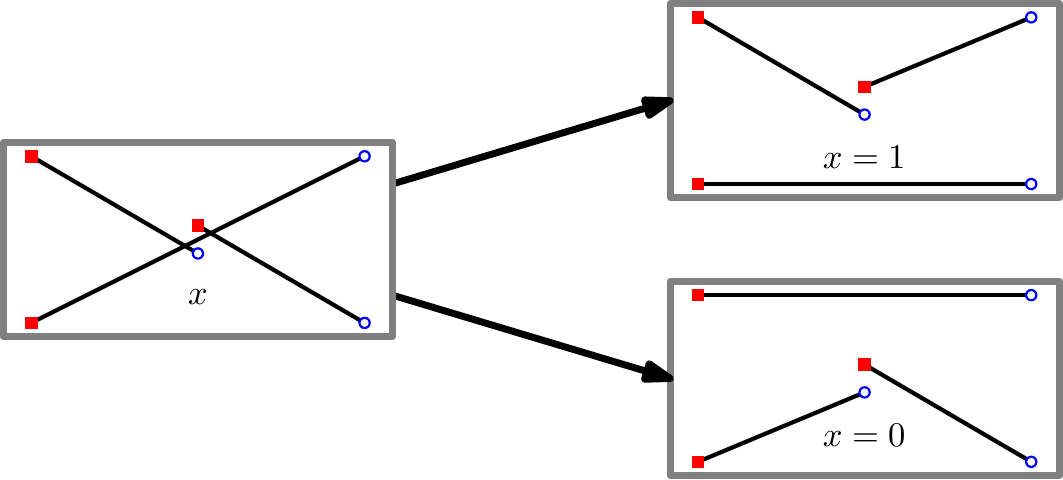}
  \caption{A variable gadget and its two untangle sequences.}
  \label{fig:variableGadget}
\end{figure}

\begin{lemma}
  \label{lem:var}
  A variable gadget is the starting matching of exactly two untangle sequences of length $1$ ending in distinct matchings.
\end{lemma}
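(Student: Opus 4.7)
The plan is to proceed in three short steps: count the crossings in the variable gadget, then show that each of the two available flips immediately produces a crossing-free matching, and finally observe the two resulting matchings differ.

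First I would set up convenient coordinates for the bounding rectangle, say $R_{BL}=(0,0)$, $R_{TL}=(0,1)$, $B_{BR}=(2,0)$, $B_{TR}=(2,1)$, with the added red point $R_{\text{add}} \approx (1, \tfrac12+\varepsilon)$ and added blue point $B_{\text{add}} \approx (1, \tfrac12-\varepsilon)$ on the vertical line $x=1$. The three segments of the gadget are the diagonal $D=\sgt{R_{BL}}{B_{TR}}$, the segment $S_1=\sgt{R_{\text{add}}}{B_{BR}}$, and the segment $S_2=\sgt{B_{\text{add}}}{R_{TL}}$. By construction $S_1$ and $S_2$ each cross $D$ (since $R_{\text{add}}$ and $B_{\text{add}}$ are on opposite sides of $D$ from $B_{BR}$ and $R_{TL}$ respectively), and $S_1,S_2$ do not cross each other because $S_1 \subset \{x\geq 1\}$ while $S_2 \subset \{x\leq 1\}$ and they share no endpoint. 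So the gadget has exactly the two crossings $(D,S_1)$ and $(D,S_2)$; thus any untangle sequence must begin with one of exactly two flips.

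Next I would check each flip resolves to a crossing-free matching. Flipping $(D,S_1)$ replaces $D$ and $S_1$ by $\sgt{R_{BL}}{B_{BR}}$ and $\sgt{R_{\text{add}}}{B_{TR}}$. The first lies on the line $y=0$ and the second in $\{x\geq 1\}$, and $S_2$ remains in $\{x\leq 1\}$ with $y\geq \tfrac12-\varepsilon$; a quick check against each pair shows no crossings. Symmetrically, flipping $(D,S_2)$ replaces them by $\sgt{R_{TL}}{B_{TR}}$ (on $y=1$) and $\sgt{R_{BL}}{B_{\text{add}}}$ (in $\{x\leq 1\}$), while $S_1$ remains in $\{x\geq 1\}$, again yielding no crossings. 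Therefore each of the two possible flips is itself a complete untangle sequence of length $1$.

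Finally, the two resulting matchings are distinct: the first contains $\sgt{R_{BL}}{B_{BR}}$, whereas the second contains $\sgt{R_{TL}}{B_{TR}}$, and these segments are disjoint so they cannot both appear in the same matching. Combining the three steps gives exactly two length-one untangle sequences with distinct terminal matchings, as claimed.

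The only mild subtlety is justifying that $S_1$ and $S_2$ do not cross; everything else is routine coordinate-based verification. Since the subtlety is immediate from the disjoint $x$-ranges enforced by the construction, I do not expect a real obstacle here.
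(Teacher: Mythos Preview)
Your proof is correct and follows the same explicit case-check the paper reduces to a single line (``It is straightforward to check the two possible cases''). One small slip: the clause ``these segments are disjoint so they cannot both appear in the same matching'' is not a valid inference---two segments sharing no endpoint can certainly coexist in a matching on six points---but the distinctness you want is immediate anyway, since $R_{BL}$ is matched to $B_{BR}$ in the first outcome and to $B_{\text{add}}$ in the second.
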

\begin{proof}
  It is straightforward to check the two possible cases.
\end{proof}

We can therefore represent each variable $x$ of a propositional formula by a variable gadget.
Assigning $x$ to a truth value amounts to choosing one of the two possible untangle sequences, with the convention that the lower edge of the rectangle is present in the final matching if $x=1$ (i.e., $x$ is ``true''), and that the upper edge of the rectangle is present if $x=0$ (\Figure~\ref{fig:variableGadget}). 

\paragraph{OR Gadgets.}
An \emph{OR gadget} consists of four three-segment matchings built on a common point set, say $\{ r_1, r_2, r'_2, r_3 \}$ for the red points, and $\{ b_1, b'_1, b_2, b_3\}$ for the blue points, as follows (see the first matching in each of \Figures~\ref{fig:ORGadget00}(a), \ref{fig:ORGadget00}(b), \ref{fig:ORGadget00}(c), and \ref{fig:ORGadget00}(d), ignoring the dashed segments).
The \emph{$0\vee 0$ matching} consists of the segments $\sgt{r_1}{b'_1}, \sgt{r'_2}{b_2}, \sgt{r_3}{b_3}$, and only the first two are not crossing.
The \emph{$0\vee 1$ matching} consists of the segments $\sgt{r_1}{b'_1}, \sgt{r_3}{b_3}, \sgt{r_2}{b_2}$, and only the first two are crossing.
The \emph{$1\vee 0$ matching} consists of the segments $\sgt{r'_2}{b_2}, \sgt{r_3}{b_3}, \sgt{r_1}{b_1}$, and only the first two are crossing.
The \emph{$1\vee 1$ matching} consists of the segments $\sgt{r_1}{b_1}, \sgt{r_2}{b_2}, \sgt{r_3}{b_3}$, and is crossing-free.
In addition to these constraints, the point set also satisfies the following ones.
The following three matchings are crossing-free: 
$\{ \sgt{r_1}{b_2}, \sgt{r'_2}{b_3}, \sgt{r_3}{b'_1} \}$, 
$\{ \sgt{r_1}{b_3}, \sgt{r_2}{b_2}, \sgt{r_3}{b'_1} \}$, and
$\{ \sgt{r_1}{b_1}, \sgt{r'_2}{b_3}, \sgt{r_3}{b_2} \}$.
In each of the following two matchings, only the first two segments are crossing: 
$\{ \sgt{r_1}{b_3}, \sgt{r'_2}{b_2}, \sgt{r_3}{b'_1} \}$, and
$\{ \sgt{r_1}{b'_1}, \sgt{r_3}{b_2}, \sgt{r'_2}{b_3} \}$.

Note that, in any of the four matchings of an OR gadget, there is one unused blue point and one unused red point. 
If the unused blue point is $b_1$ (respectively $b'_1$), we say that the \emph{left input} of the OR gadget is $0$ (respectively $1$). 
Similarly, if the unused red point is $r_2$ (respectively $r'_2$), we say that the \emph{right input} of the OR gadget is $0$ (respectively $1$).
To complete the similarity with a logical gate, we also define the \emph{output} of the OR gadget as $0$ if the segment $\sgt{r_1}{b_2}$ is present in all the final matchings of any untangle sequence starting at the OR gadget and as $1$ if the segment $\sgt{r_1}{b_2}$ is absent of all the same final matchings. The output is undefined otherwise.
The following lemma states that the truth table of the logical gate associated with an OR gadget is indeed the one of an OR gate.

We will refer to the smallest of the triangles consisting of the segment $\sgt{r_1}{b_2}$ and induced by all the other segments we have mentioned in the definition of an OR gadget as the \emph{top triangle} of the OR gadget. It is the shaded triangle in \Figure~\ref{fig:ORGadget00}(d).

\begin{figure}[ht]
  \centering
  \includegraphics[scale=\graphicsScale,page=1]{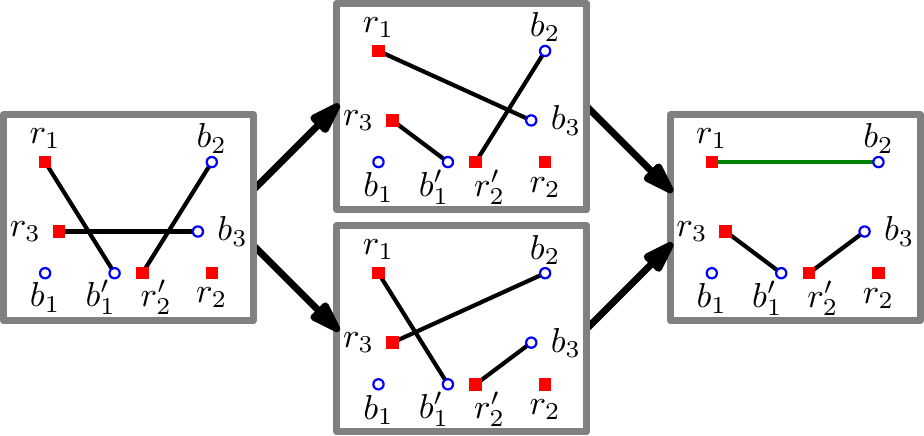}\qquad \qquad
  \includegraphics[scale=\graphicsScale,page=4]{ORGadget}\\
  \hspace{0.2\textwidth}(a) $0\vee0$ \hspace{0.36\textwidth}(d) $1\vee1$ \\\vspace{\baselineskip}
  \includegraphics[scale=\graphicsScale,page=2]{ORGadget}\qquad \qquad
  \includegraphics[scale=\graphicsScale,page=3]{ORGadget}\\
  (b) $0\vee1$ \hspace{0.38\textwidth}(c) $1\vee0$
  \caption{The four matchings of an OR gadget, with their untangle sequences.}
  \label{fig:ORGadget00}
\end{figure}

\begin{lemma} \label{lem:ORGadget}
  The output of an OR gadget is always well defined, and is $0$ if and only if the two inputs of the OR gadget are both $0$. More precisely, we have the following.
  \begin{enumerate} 
  \item The $0\vee 0$ matching is the starting matching of exactly two untangle sequences, each of length $2$, and ending at the same matching containing the upper segment $\sgt{r_1}{b_2}$ (\Figure~\ref{fig:ORGadget00}(a)).
  \item The $0\vee 1$ matching is the starting matching of a unique untangle sequence of length $1$ ending at a matching excluding the upper segment $\sgt{r_1}{b_2}$ (\Figure~\ref{fig:ORGadget00}(b)).
  \item The $1\vee 0$ matching is the starting matching of a unique untangle sequence of length $1$ ending at a matching excluding the upper segment $\sgt{r_1}{b_2}$ (\Figure~\ref{fig:ORGadget00}(c)).
  \item The $1\vee 1$ matching is already crossing free. It excludes the upper segment $\sgt{r_1}{b_2}$ (\Figure~\ref{fig:ORGadget00}(d)).
  \end{enumerate}
\end{lemma}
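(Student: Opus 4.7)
The plan is a direct case analysis on the four input combinations. The key observation I would invoke at the start is that in a bipartite matching, any crossing pair $\sgt{r_a}{b_a}, \sgt{r_b}{b_b}$ admits exactly one flip, the unique alternative matching $\{\sgt{r_a}{b_b}, \sgt{r_b}{b_a}\}$ on those four endpoints. So every flip is deterministic once a crossing is chosen, and the enumeration of untangle sequences reduces to enumerating the choices of crossings at each step.

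Cases 2, 3, and 4 are essentially immediate from the definition. Case 4 ($1\vee 1$) is crossing-free by construction and does not contain $\sgt{r_1}{b_2}$, so the output $1$ is read off directly. In cases 2 and 3, the defining clause "only the first two are crossing" tells me there is a single crossing, hence a single forced flip; the resulting matching is then exactly one of the three configurations listed as crossing-free in the gadget specification (namely $\{\sgt{r_1}{b_3}, \sgt{r_2}{b_2}, \sgt{r_3}{b'_1}\}$ for $0\vee 1$ and $\{\sgt{r_1}{b_1}, \sgt{r'_2}{b_3}, \sgt{r_3}{b_2}\}$ for $1\vee 0$), neither of which contains $\sgt{r_1}{b_2}$. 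This yields unique length-$1$ untangle sequences with output $0$.

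The only case with branching is case 1 ($0\vee 0$), which is also the main bookkeeping step. Here the two crossings are $\sgt{r_1}{b'_1} \times \sgt{r_3}{b_3}$ and $\sgt{r'_2}{b_2} \times \sgt{r_3}{b_3}$. Flipping the first produces $\{\sgt{r_1}{b_3}, \sgt{r'_2}{b_2}, \sgt{r_3}{b'_1}\}$, and flipping the second produces $\{\sgt{r_1}{b'_1}, \sgt{r_3}{b_2}, \sgt{r'_2}{b_3}\}$; these are precisely the two matchings the gadget definition pins down as having "only the first two segments crossing", so in each case the next flip is again forced. Both forced second flips land in the matching $\{\sgt{r_1}{b_2}, \sgt{r'_2}{b_3}, \sgt{r_3}{b'_1}\}$, which is on the list of crossing-free matchings. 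Hence both branches have length $2$ and converge to a common final matching containing $\sgt{r_1}{b_2}$, making the output well defined and equal to $0$.

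The main (mild) obstacle is ensuring that the listed crossing-freeness and single-crossing relations exactly cover every matching reachable by flips from each input configuration, so that no case gets overlooked. This is really a matter of checking that the gadget's specification was written to enumerate precisely the intermediate matchings of the four untangle sequences above, which it was by design; once aligned, the analysis is purely mechanical.
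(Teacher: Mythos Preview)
Your proposal is correct and takes essentially the same approach as the paper: a direct enumeration of all untangle sequences from each of the four input matchings, which the paper's proof condenses to a single sentence deferring to \Figure~\ref{fig:ORGadget00}. One small slip: in cases 2 and 3 you correctly argue that the final matching excludes $\sgt{r_1}{b_2}$, but you then label this ``output $0$'' where it should be output $1$ (absence of $\sgt{r_1}{b_2}$ means output $1$ by the gadget's convention).
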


\begin{proof}
  \label{lem:OR}
  For each of the four $x \vee y$ matchings whith $x,y \in \{0,1\}$, we enumerate all the possible untangling sequences. These sequences are all shown in \Figure~\ref{fig:ORGadget00}. Lemma~\ref{lem:ORGadget} then follows.
\end{proof}

\paragraph{Clause Gadgets.}
A \emph{clause gadget} consists of two OR gadgets, the output of the first one being ``connected'' to the left input of the second one (\Figure~\ref{fig:clauseGadgetbis}). 
More precisely, a clause gadget is built on seven red points, say $r_4, r_5, r_6, r_7, r_8, r_{10}, r_{11}$, and six blue points, say $b_4, b_5, b_6, b_7, b_8, b_9$ such that the following maps correspond to two OR gadgets (using the OR gadget previous notations), and such that $r_8$ lie in the inside of the top triangle of the first OR gadget and is the only overlap between the two OR gadgets.
\begin{align*}
  \text{First OR gadget: } (r_4, b_4, r_5, b_5, r_6, b_6, b_9, r_{10}) & \mapsto (r_1, b_1, r_2, b_2, r_3, b_3, b'_1, r'_2).\\
  \text{Second OR gadget: } (r_4, b_6, r_7, b_7, r_8, b_8, b_5, r_{11}) & \mapsto (r_1, b_1, r_2, b_2, r_3, b_3, b'_1, r'_2),
\end{align*}
with the exception that the segment $\sgt{r_6}{b_5}$ may also play the role of $\sgt{r_1}{b_1}$.

Similarly to an OR gadget, a clause gadget consists of $2^3$ matchings, namely the \emph{$x \vee y \vee z$ matchings} with $x,y,z \in \{0,1\}$. We define the \emph{left}, \emph{middle}, and \emph{right input} of a clause gadget as the left input of the first OR gadget, the right input of the first OR gadget, and the right input of the second OR gadget. We define the \emph{output} of a clause gadget as the output of the second OR gadget.

Note that the middle input segment, i.e., the vertical segment lying in between the two other vertical segments ($\sgt{r_5}{b_5}$ in \Figure~\ref{fig:clauseGadgetbis}), need not be evenly placed between the left input segment and the right input segment. This feature is used to build clause gadgets with non-consecutive variables, such as the topmost clause gadget in \Figure~\ref{fig:MPhi}.

The idea is to have a $0 \vee 0 \vee 0$ clause gadget in $\M_\Phi$ for each clause in $\Phi$.
As we will see next, in the beginning of an untangling sequence starting at $\M_\Phi$, each input may be set to $1$ or may be kept as $0$, changing the $0 \vee 0 \vee 0$ clause gadget into one of the $x \vee y \vee z$ matchings with $x,y,z \in \{0,1\}$.

The following lemma states that the truth table of the logical gate associated with a clause gadget is indeed the expected one.

\begin{figure}[ht]
  \centering
  \includegraphics[scale=\graphicsScale,page=1]{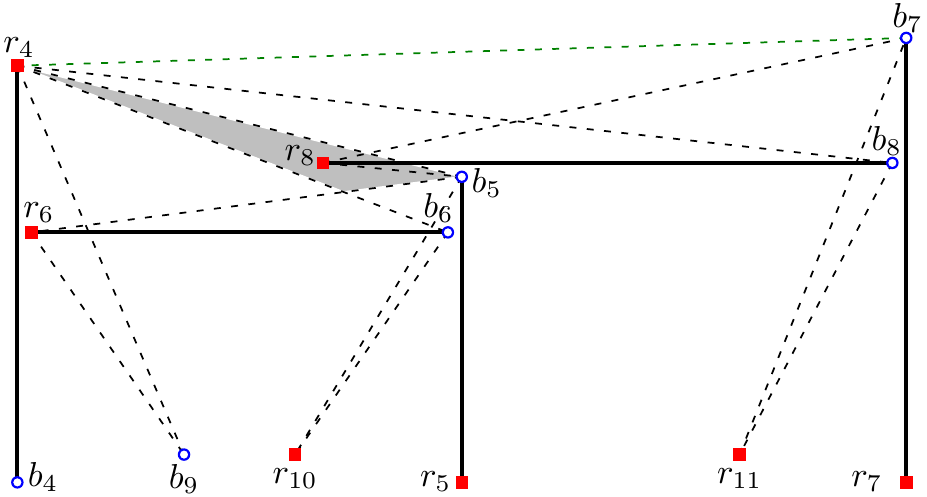}
  \caption{A clause gadget.The $0 \vee 0 \vee 0$ matching is drawn with plain segments.}
  \label{fig:clauseGadgetbis}
\end{figure}

\begin{lemma}
  \label{lem:clausev2}
  The output of a clause gadget is always well defined, and is $0$ if and only if the three inputs of the clause gadget are all $0$.
  More precisely, we have the following.
  \begin{enumerate} 
  \item All the untangle sequences starting at the $0 \vee 0 \vee 0$ matching are of length $4$, and they end at the same matching containing the upper segment $\sgt{r_4}{b_7}$.
  \item All the untangle sequences starting at each of the $x\vee y \vee z$ matchings, where exactly one of $x,y$, or $z$ is $1$, are of length $2$, and they end at matchings excluding the upper segment $\sgt{r_4}{b_7}$.
  \item The unique untangle sequence starting at each of the $x\vee y \vee z$ matchings, where exactly two of $x,y$, and $z$ are $1$, is of length $1$, and it ends at a matching excluding the upper segment $\sgt{r_4}{b_7}$.
  \item The $1 \vee 1 \vee 1$ matching is already crossing free, and it excludes the upper segment $\sgt{r_4}{b_7}$.
  \end{enumerate}
\end{lemma}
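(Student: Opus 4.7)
The plan is to prove all four statements by a case analysis over the eight input configurations $(x, y, z) \in \{0, 1\}^3$, reducing each case to two applications of Lemma~\ref{lem:ORGadget}, one per OR gadget. I would first establish the \emph{wiring} linking the two OR gadgets: the segment $\sgt{r_4}{b_5}$ is simultaneously the image of the first OR gadget's $\sgt{r_1}{b_2}$ (its output-$0$ segment) and the image of the second OR gadget's $\sgt{r_1}{b'_1}$ (its left-input-$0$ segment); analogously, $\sgt{r_4}{b_6}$ is the second OR gadget's left-input-$1$ segment and also records that the first OR gadget has output $1$. Consequently, once the first OR gadget is untangled in a way producing output $x\vee y$, the second OR gadget is in configuration $(x \vee y, z)$, and Lemma~\ref{lem:ORGadget} applies to it verbatim.

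From this wiring the qualitative part of the lemma follows immediately: by Lemma~\ref{lem:ORGadget} applied to the second OR gadget, its upper segment $\sgt{r_4}{b_7}$ (its $\sgt{r_1}{b_2}$) is present in every final matching if and only if its output is $0$, which happens if and only if $(x \vee y) \vee z = 0$, i.e.\ if and only if $x = y = z = 0$. This proves the four presence/absence claims for $\sgt{r_4}{b_7}$.

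For the flip counts, I would then enumerate the untangle sequences in each of the eight cases and show that their total length matches $n_1(x, y) + n_2(x \vee y, z)$, where $n_1, n_2 \in \{0, 1, 2\}$ are given by Lemma~\ref{lem:ORGadget}. This yields the claimed totals $4$, $2$, $1$, $0$ for the cases with zero, one, two, or three inputs equal to $1$ respectively, and shows that each case has the stated \emph{uniqueness} properties (a unique sequence when exactly two inputs are $1$, multiple sequences of equal length otherwise).

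The hardest part is justifying this compositional count rigorously. Because the two OR gadgets share the red point $r_4$ and the blue points $b_5, b_6$, and because $r_8$ is positioned inside the top triangle of the first OR gadget, flips in one OR gadget can modify the crossing structure of the other and cross-gadget crossings may transiently appear during an untangle sequence. I would resolve this by a finite enumeration of the possible first flips in each initial $(x, y, z)$ configuration, in the same spirit as the proof of Lemma~\ref{lem:ORGadget}: for each branch, check that the remaining matching can be untangled in exactly the predicted number of flips and no fewer, and that the resulting final matching contains or excludes $\sgt{r_4}{b_7}$ as prescribed. Special care is required for the configurations where the first OR gadget starts in its $0 \vee 0$ state, since two distinct first flips are available and their interaction with the second OR gadget through $r_8$ must be tracked; this bookkeeping is tedious but terminates quickly since both OR gadgets are small.
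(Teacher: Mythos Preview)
Your approach is essentially the same as the paper's: reduce the claim to two applications of Lemma~\ref{lem:ORGadget}, one per OR gadget, after checking that the gadgets do not interfere. The paper's proof is only a few lines and invokes exactly this decomposition.

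There is one point worth flagging. You treat the placement of $r_8$ inside the top triangle of the first OR gadget as a \emph{source of difficulty} (``cross-gadget crossings may transiently appear'') and propose to resolve it by brute-force enumeration of first flips. The paper uses this same fact in the opposite direction: precisely because $r_8$ lies inside that top triangle and is the only geometric overlap between the two gadgets, no segment created during any untangle sequence of one OR gadget can cross a segment of the other. This single geometric observation rules out all cross-gadget crossings at once, so the compositional count $n_1(x,y)+n_2(x\vee y,z)$ is immediately valid and no per-case enumeration of interleavings is needed. Your fallback enumeration would still work, but it is unnecessary once you see that the placement of $r_8$ is the non-interference guarantee rather than an obstacle to it.
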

\begin{proof}
  It is a consequence of Lemmas~\ref{lem:OR} and of the fact that the OR gadgets are connected so as to not interfere. 
  Indeed, by construction, $r_8$ lies in the inside of the top triangle of the first OR gadget, and is the only overlap between the two OR gadgets. 
  This ensures that all untangle sequences never give rise to an extra crossing that does not already belong to one of the two OR gadgets. 
  In \Figure~\ref{fig:clauseGadgetbis}, we have drawn with dashed line segments all the possible created segments during any possible untangle sequence.
\end{proof}

\paragraph{Padding Gadgets.}
Let $k$ be a non-negative integer. 
A \emph{$k$-padding gadget} triggered by the segment $s$ consists of two matchings built by induction as follows. 

The first matching, denoted $\M_k$, contains $s$ ($s = \sgt{r_4}{b_7}$ in \Figure~\ref{fig:paddingGadget}) and is called the \emph{triggered matching} of the padding gadget ($s$ creates a crossing).
The second matching is called the \emph{non-triggered matching}, and is deduced from the triggered one by removing $s$ (it is crossing free). 

\begin{figure}[ht]
  \centering
  \includegraphics[scale=\graphicsScale,page=2]{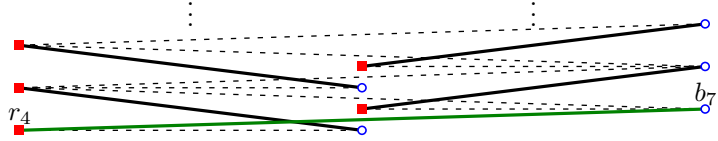}
  \caption{The triggered matching of a padding gadget.}
  \label{fig:paddingGadget}
\end{figure}

If $k=0$, then the triggered matching of a $k$-padding gadget consists of only the segment $s$.
If $k\geq1$, then the triggered matching of a $k$-padding gadget consists of $\M_{k-1}$, the triggered matching of a $(k-1)$-padding gadget, to which we add one new segment crossing only the last created segment of the only untangle sequence starting at $\M_{k-1}$ (\Figure~\ref{fig:paddingGadget}, the dashed segments are all the possible created segments in the unique untangle sequence).

\begin{lemma}
  \label{lem:padding}
  Let $k$ be a non-negative integer.
  There is a unique untangle sequence starting at the triggered matching of a padding gadget, and it is of length $k$.
  The non-triggered matching of a padding gadget is already crossing free.
\end{lemma}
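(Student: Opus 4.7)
The plan is to proceed by induction on $k$. The base case $k=0$ is immediate from the definition: the triggered matching consists of the single segment $s$, hence has no crossings and its unique untangle sequence has length $0$; the non-triggered matching is empty, hence trivially crossing-free.

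For the inductive step, assume the statement for $k-1$, and let $\M_k$ denote the triggered matching of the $k$-padding gadget, obtained from $\M_{k-1}$ by adding one new segment $s_k$ chosen to cross only the last segment created during the unique untangle sequence starting at $\M_{k-1}$. For the non-triggered matching, removing $s$ from $\M_k$ yields $(\M_{k-1}\setminus\{s\})\cup\{s_k\}$. The set $\M_{k-1}\setminus\{s\}$ is crossing-free by the induction hypothesis, and by construction $s_k$ does not cross any segment of $\M_{k-1}$ itself (it only crosses a segment that appears later in the untangling process). Therefore no new crossing is introduced and the non-triggered matching is crossing-free.

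For the triggered matching, I argue that the untangle sequence of $\M_k$ is the untangle sequence of $\M_{k-1}$ followed by a single forced flip. Since $s_k$ does not cross any segment of $\M_{k-1}$, the set of crossings in $\M_k$ coincides with that of $\M_{k-1}$, so every possible first flip is a flip of $\M_{k-1}$. Moreover, by the defining property of $s_k$, none of the intermediate segments created along the untangle sequence of $\M_{k-1}$ cross $s_k$, so the inductive uniqueness carries over: the first $k-1$ flips of any untangle sequence of $\M_k$ must be exactly the flips of the unique untangle sequence of $\M_{k-1}$. After these $k-1$ flips, the configuration is the crossing-free endpoint of $\M_{k-1}$'s untangle sequence together with $s_k$; by construction $s_k$ crosses exactly the last created segment, so one crossing remains and exactly one final flip is possible, producing a crossing-free matching. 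The total length is $k$, and uniqueness holds at every step.

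The main subtlety is the uniqueness argument inside the inductive step: I must rule out any alternative flip ordering in which $s_k$ participates prematurely or in which an unexpected crossing involving $s_k$ is produced by an intermediate flip of the $\M_{k-1}$-sequence. Both facts follow directly from the inductive construction of $s_k$ as crossing only the final created segment, but it is worth stating them explicitly, since this is where the "cascading" nature of the padding gadget is exploited to force exactly $k$ flips and no fewer.
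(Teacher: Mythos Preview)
Your proof is correct and follows the same route as the paper, which simply states that the lemma is yielded by the inductive definition of a $k$-padding gadget. You have merely unpacked that one-line justification into an explicit induction on $k$, correctly isolating the key fact that the added segment $s_k$ crosses no segment appearing before the final step of the $\M_{k-1}$ sequence.
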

\begin{proof}
  The definition of a $k$-padding gadget yields Lemma~\ref{lem:padding}.
\end{proof}

We complete each clause gadget with a padding gadget in order to penalize a non-satisfied clause by an arbitrary long untangle sequence (\Figure~\ref{fig:clauseGadgetWithPadding}). 
Notice that a padded clause gadget can be arbitrarily scaled and that the position of a clause rectangle is only constrained by the planar embedding of $\Phi$. 

\begin{figure}[ht]
  \centering
  \includegraphics[scale=\graphicsScale,page=3]{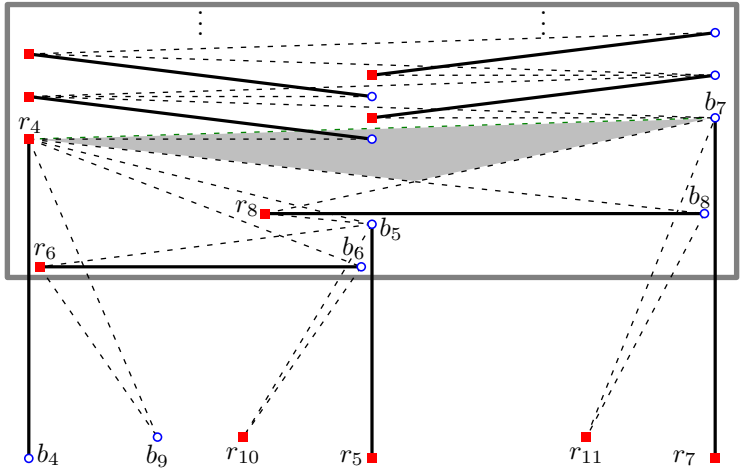}
  \caption{A clause gadget connected to a padding gadget.}
  \label{fig:clauseGadgetWithPadding}
\end{figure}

\paragraph{Matching Computation.}

We now describe, given a planar embedding of $\Phi$, the construction steps of the matching $\M_\Phi$.
Without loss of generality, we only specify the construction of the positive clauses, the construction of the negative clauses being similar.

We need the following definitions for the description.
The \emph{top vertical points} of a positive clause gadget are the topmost endpoints of the vertical segments (e.g. $b_{14}, r_4, r_{12}, b_5, b_7$ in \Figure~\ref{fig:clauseGadget}).
Similarly, the \emph{bottom vertical points} are the bottom endpoints of the same vertical segments (e.g. $r_{14}, b_4, b_{12}, r_5, r_7$ in \Figure~\ref{fig:clauseGadget}).
Let $\overline{p}$ be a top vertical point.
The \emph{horizontal segment of $\overline{p}$} is the horizontal segment lying below $\overline{p}$ which is the closest to $\overline{p}$.
Finally, we define the \emph{substitute point} of $\overline{p}$ as the endpoint of the horizontal segment of $\overline{p}$ which is the closest to $\overline{p}$ (e.g. $r_6$ is the substitute point of $r_4$ in \Figure~\ref{fig:clauseGadget}). 

The construction steps of the matching $\M_\Phi$ are the following.
\begin{enumerate}
  \item Place a clause gadget connected to a $k$-padding gadget in each clause rectangle, and a variable gadget in each variable rectangle, with appropriate scaling.
  \item Connect each clause gadget to its corresponding three variable gadgets with the three vertical segments of the clause gadget aligned with the corresponding vertical edges of the planar embedding of $\Phi$.
  \item Adjust the $x$-coordinates of the vertical segments of each variable gadget to have the top vertical points and the two topmost points of the variable gadget, all in convex position (e.g. in \Figure~\ref{fig:clauseGadget}, $r_{12}$ is on the right of the segment $\sgt{r_4}{b_9}$).
  \item Adjust the $y$-coordinates of the bottom vertical points in the top triangle of each variable gadget so as to place them and the two topmost points of the variable gadget in convex position.
  \item \label{item:adjust} Let $\overline{p}$ be a top vertical point which is not the highest of a variable gadget (e.g. $\overline{p}=r_{12}$ in \Figure~\ref{fig:clauseGadget}).
  Let $\underline{p}$ be the corresponding bottom vertical point (e.g. $\underline{p}=b_{12}$).
  Let $\overline{q}$ be the top vertical point immediately above $\overline{p}$ (e.g. $\overline{q}=r_4$). 
  Let $\underline{q'}$ be the point immediately above $\underline{p}$, taken among the bottom vertical points together with the two topmost points of the variable gadget (e.g. $\underline{q}=b_9$). 
  Adjust the $x$-coordinate of $\tilde{p}$, the substitute point of $\overline{p}$ (e.g. $\tilde{p}=r_{13}$), so that $\tilde{p}$ lies in the triangle $\trgl{\overline{p}}{\underline{q}}{\overline{q}}$ (e.g. a shaded triangle in \Figure~\ref{fig:clauseGadget}; segment $\sgt{r_{13}}{b_{13}}$ must not cross $\sgt{r_4}{b_9}$, but it has to cross $\sgt{r_{12}}{b_9}$).
\end{enumerate}

We have the following lemma.

\begin{lemma}
  \label{lem:matchingComputation}
  Let $\Phi$ be an instance of RPM 3-SAT with $c$ clauses and $v$ variables.
  Let $k$ be a non-negative integer, polynomial in $c$ and $v$.
  The matching  $\M_\Phi$ with $k$-padding gadgets is computed in polynomial time in $c$ and $v$.
\end{lemma}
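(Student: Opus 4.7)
My plan is to count the points of $\M_\Phi$, to show that each of the five construction steps can be realized with rational coordinates of polynomial bit-length, and to bound the total running time.

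First, I would bound the size of $\M_\Phi$. Each variable gadget contributes a constant number of points, as does each clause gadget. Each $k$-padding gadget is built inductively by adding two points per step, so it contributes $O(k)$ points. With $v$ variable gadgets, $c$ clause gadgets, and $c$ padding gadgets, the total number of points is $O(v + c + ck)$, which is polynomial in $c$ and $v$ since $k$ is assumed to be polynomial in $c$ and $v$.

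Second, I would show that each placement step yields rational coordinates of polynomial bit-length. Start from a planar embedding of $\Phi$ with rational vertices of polynomial bit-length (available from the reduction of~\cite{deBerg2012}). Step~1 scales a fixed rational template gadget into each variable/clause rectangle: an affine transformation with rational parameters, preserving polynomial bit-length. Step~2 is routine. Each of the remaining geometric constraints in Steps 3--5 (several points in convex position, a substitute point lying in a specified triangle, an added segment crossing one segment and not another) is an open condition on the coordinate being adjusted, so a rational witness can always be chosen, e.g.\ as the centroid of a suitable rational triangle or the midpoint of a rational interval. Each such choice inflates the bit-length by at most an additive constant, so after processing all $O(c+v)$ gadgets, bit-lengths remain polynomial.

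Third, I would handle the inductive definition of the $k$-padding gadget. At each of the $k$ inductive steps, a new pair of points is inserted so that the new segment crosses precisely the last created segment of the (unique) untangle sequence of $\M_{k-1}$, which by Lemma~\ref{lem:padding} is well-defined; again this is an open condition admitting a rational witness with only a constant additive blow-up in bit-length. Summed over $k$ steps with $k$ polynomial, the bit-length stays polynomial.

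The main obstacle I foresee is the bookkeeping in Step~5: making sure that the rational choice of each substitute point $\tilde{p}$ simultaneously (i) places $\tilde p$ in the triangle $\trgl{\overline{p}}{\underline{q}}{\overline{q}}$, (ii) makes $\sgt{\tilde p}{\underline{p}}$ cross $\sgt{\overline{p}}{\underline{q'}}$, and (iii) avoids crossing $\sgt{\overline{q}}{\underline{q'}}$, and that all these local choices remain consistent across consecutive top vertical points within a single clause gadget. I would resolve this by processing the top vertical points of each gadget from top to bottom, at each stage exhibiting an explicit nonempty open rational region (the intersection of finitely many rational half-planes coming from the already-placed points) from which $\tilde p$ is drawn; a rational representative of such a region can be computed in polynomial time by standard convex-combination arguments, completing the construction.
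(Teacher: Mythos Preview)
Your proposal is correct and considerably more thorough than the paper's own proof, which consists of two sentences: it asserts that the construction steps take a number of operations linear in $c$ and $v$, and that the coordinates are rationals on $O(\log n)$ bits. The paper offers no justification for either claim; your argument supplies the missing analysis (size bound, open conditions admitting rational witnesses, inductive padding placement, Step~5 consistency).

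One small point worth tightening: your claim that each rational choice ``inflates the bit-length by at most an additive constant'' is not literally true if you implement it by taking midpoints or centroids of previously constructed rationals, since the denominator of a midpoint can be the product of the input denominators. The cleaner way to phrase it---and what the paper's $O(\log n)$-bit claim implicitly relies on---is that every open region you need to hit is a nonempty intersection of finitely many rational half-planes whose defining coefficients have polynomial bit-length, so by a standard argument (e.g., vertex of the arrangement, or snapping to a sufficiently fine polynomial-size grid fixed in advance) a rational witness of polynomial bit-length always exists. With that adjustment your argument goes through, and in fact justifies the paper's terse assertion more carefully than the paper itself does.
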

\begin{proof}
  The number of operations in any execution of these construction steps is linear in $c$ and $v$.
  The coordinates of the points of $\M_\Phi$ are rational numbers with $\OO(\log n)$ bits.  
\end{proof}

\begin{figure}[ht]
  \centering
  \includegraphics[scale=\graphicsScale,page=4]{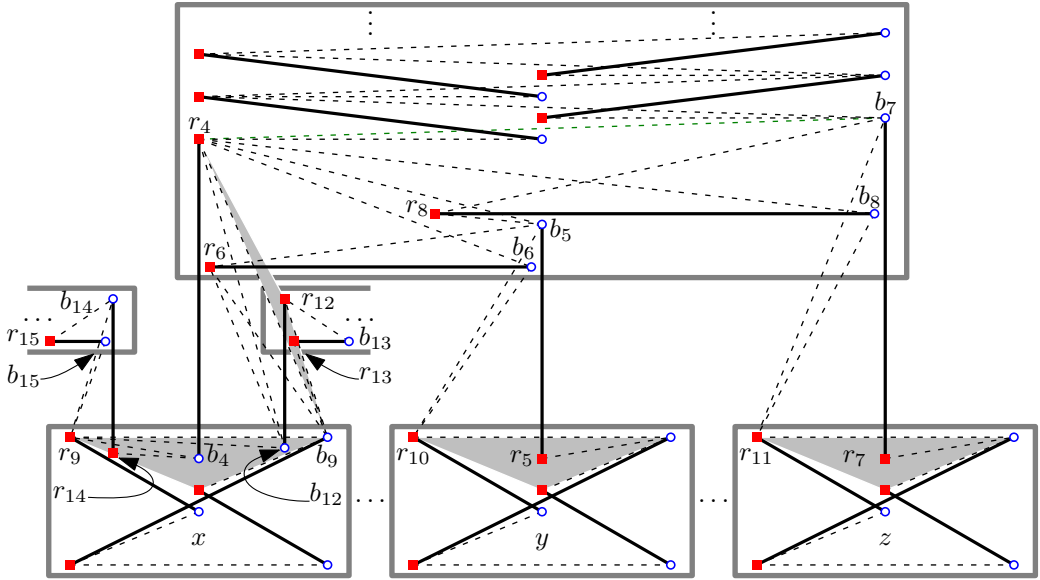}
  \caption{A padded clause gadget connected to $x, y, z$, with branching on $x$.}
  \label{fig:clauseGadget}
\end{figure}

\paragraph{Branching.} 
The following lemma ensures that the connection of multiple vertical segments to a same variable gadget always triggers all the corresponding clause gadgets. 
We start with some definitions.

The set consisting of the top segment of a variable gadget set to true, together with the vertical segments crossing it, and their horizontal segment is called a \emph{branching matching} (such as drawn in \Figure~\ref{fig:sbranching}(b) with plain segments).
The bottom vertical points of a branching matching, listed from left to right, always consist of a certain number, say $a$, of red points followed by a certain number, say $b$, of blue points.
We say that such a branching matching has parameters $a,b$. 
These matchings have the following property.

\begin{lemma}
  \label{lem:branching}
  All the untangle sequences starting at a branching matching with parameters $a,b$ have length $2(a+b)$ and end at the same crossing-free matching (e.g. the segments $\sgt{r_{15}}{b_{14}}, \sgt{r_9}{b_{15}}, \sgt{r_{14}}{b_4}, \sgt{r_4}{b_6}, \sgt{r_6}{b_{12}}, \sgt{r_{12}}{b_{13}}, \sgt{r_{13}}{b_9}$ in \Figure~\ref{fig:sbranching}(b)).
\end{lemma}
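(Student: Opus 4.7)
The plan is to proceed by induction on $n = a+b$, exploiting the rigidity imposed on a branching matching by the convexity and triangle-containment constraints of construction steps 3--5.

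For the base case $n = 1$, a single vertical segment crosses the variable gadget's top segment, and the associated horizontal segment sits just below. The unique initial crossing must be flipped; by the convex position enforced in steps 3 and 4, the new segment connecting the top-right endpoint of the variable gadget to the bottom vertical point crosses exactly the horizontal segment, and this second crossing is then flipped. This yields exactly $2$ flips and a unique crossing-free terminal matching.

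For the inductive step, I would first establish a structural claim: the only crossings in a branching matching with parameters $(a,b)$ are the $a+b$ crossings between the top segment and the vertical segments, since by step~\ref{item:adjust} each horizontal segment lies strictly inside a triangle separating it from every other vertical. Hence any first flip resolves one top/vertical crossing, say between the top segment and $\sgt{\overline{p}}{\underline{p}}$. By the convex position of the top vertical points (step 3) and the substitute-point placement (step~\ref{item:adjust}), the two new segments are a shortened top segment (still crossing the remaining $a+b-1$ verticals and no horizontal) and a segment from one former endpoint of the top to $\underline{p}$ that crosses exactly the horizontal segment associated with $\underline{p}$. A second flip resolves this induced crossing, after which the configuration splits cleanly into a branching matching with parameters $(a-1,b)$ or $(a,b-1)$, depending on the color of $\underline{p}$, plus a crossing-free local subfigure that does not interact with the remainder. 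The induction hypothesis supplies $2(a+b-1)$ further flips and a unique terminal matching, for a total of $2(a+b)$.

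The main obstacle is proving that, regardless of which top/vertical crossing is chosen first and in which order the resulting cascade of induced crossings is resolved, the local resolutions remain independent and never trigger spurious crossings with segments already produced by earlier flips or with horizontal segments belonging to not-yet-processed verticals. I plan to handle this via a helper invariant, proved by a secondary induction on the number of already-resolved verticals: after any prefix of flips, the segments emanating from the variable gadget's top-right endpoint stay inside the convex region bounded by the current top-type segment and the unprocessed verticals, while each unprocessed horizontal segment remains confined to the triangle prescribed by step~\ref{item:adjust}. The staircase geometry enforced in steps 3--5 is precisely what guarantees this invariant and forces the uniqueness of the final matching.
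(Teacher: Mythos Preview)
Your inductive step rests on a false geometric claim. You assert that after flipping the top segment with an \emph{arbitrary} vertical $\sgt{\overline{p}}{\underline{p}}$, one of the two created segments is ``a shortened top segment still crossing the remaining $a+b-1$ verticals.'' This holds only when the chosen vertical is extremal. For a middle vertical, the convex position of the top vertical points and of the bottom vertical points with the two endpoints of the top segment (steps~3 and~4) forces the two created segments to \emph{partition} the remaining verticals: each remaining vertical crosses exactly one of the two new segments, the one on its own side of $\sgt{\overline{p}}{\underline{p}}$. Consequently your reduction to a single branching matching with parameters $(a-1,b)$ or $(a,b-1)$ does not go through, and the helper invariant you sketch tracks the wrong structure.

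The paper's proof uses precisely this left/right split. It first treats the simplified configuration (top segment plus verticals, no auxiliary horizontals) and argues by induction on $a+b$ that after any flip both created segments play the role of a top segment for two subinstances whose convex hulls are disjoint; the induction hypothesis then applies to each side. The auxiliary horizontal segments are handled in a second step: the substitute-point placement of step~\ref{item:adjust} guarantees that once a vertical's horizontal has been flipped, the substitute point can stand in for the corresponding top vertical point in the convex-position argument. To repair your argument you would need to rebuild the induction around this two-sided decomposition rather than the single-shrink picture.
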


\begin{figure}[ht]
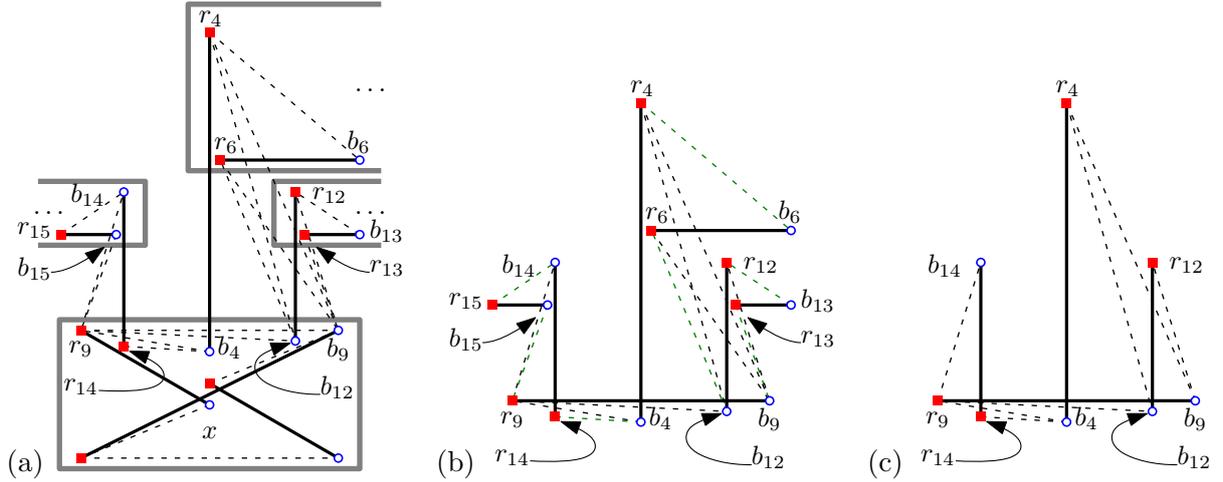

  \centering
  (a)\hspace{-1em}\includegraphics[scale=\graphicsScale,page=5]{fullGadget}\quad
  (b)\hspace{-1em}\includegraphics[scale=\graphicsScale,page=6]{fullGadget}\quad
  (c)\hspace{0.5em}\includegraphics[scale=\graphicsScale,page=7]{fullGadget}
  \caption{Three views of a branching matching.}
  \label{fig:sbranching}
\end{figure}

\begin{proof}
  First note that a simplified version of this result has been proven in~\cite{BoM16}.
  This simplified version amounts to forget all the horizontal segments, except the top segment of the variable gadget (\Figure~\ref{fig:sbranching}(c)).
  
  It is useful to start by proving this simplified version before Lemma~\ref{lem:branching}.
  We do an induction on $a+b$.
  The base case is trivial, but it provides the possible positions of created segments in the untangle sequences (the dashed segments in \Figure~\ref{fig:sbranching}(c)).
  The inductive case relies on the fact that the points are in convex position.
  Indeed, after any flip, the two created segments play the role of the initial horizontal segment because convex position ensures that any of the non-flipped vertical segments will cross exactly one of the two created segments, and that no extra crossing is created.
  The induction hypothesis then applies on both right and left submatching whose convex hulls are now disjoint.
  
  We now address the issue where each vertical segment is paired with its horizontal segment. 
  Recall that at step~\ref{item:adjust} of the construction of $\M_\Phi$, we have adjusted the $x$-coordinate of $\tilde{p}$, the substitute point of each top vertical point $\overline{p}$, so that $\tilde{p}$ lies in the triangle $\trgl{\overline{p}}{\underline{q}}{\overline{q}}$.
  This ensures that each substitute point can play the role of its corresponding top vertical point from whenever the corresponding horizontal segment has been flipped in an untangle sequence.
\end{proof}

\paragraph{Result.}
The RPM 3-SAT instance being encoded in the matching  $\M_\Phi$, we have the property that the shortest untangling sequence of  $\M_\Phi$ is short if the instance $\Phi$ is satisfiable, and long otherwise. 

\begin{lemma}
  \label{lem:full}
  We have the following case distinction.
  \begin{itemize}
  \item $\Phi$ is satisfiable if and only if there exist untangle sequences starting at $\M_{\Phi}$ which do not trigger any padding gadget, in which case $\dist(\M_{\Phi})$ is at most $v + 5c$.
  \item $\Phi$ is not satisfiable if and only if all untangle sequences starting at $\M_\Phi$ trigger at least one padding gadget, in which case $\dist(\M_{\Phi})$ is at least $v + 7 + k$ where $k$ is arbitrarily large.
  \end{itemize}
\end{lemma}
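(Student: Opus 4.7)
The plan is to prove both biconditionals by linking the gadget-level lemmas through the combinatorial interpretation of $\M_\Phi$. The guiding observation is that the variable gadgets encode a truth assignment, the branching matchings propagate these values without interference to the clause inputs, and, by Lemma~\ref{lem:clausev2}, the upper segment $\sgt{r_4}{b_7}$ of a clause gadget appears in every completion of the untangle sequence if and only if all three of its inputs are $0$; by construction of a padded clause gadget, the appearance of this segment is exactly what creates the triggering crossing, so by Lemma~\ref{lem:padding} it forces $k$ additional flips.

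I would first establish that every untangle sequence of $\M_\Phi$ induces a truth assignment. By Lemma~\ref{lem:var}, each variable gadget admits exactly two untangle sequences of length $1$, corresponding to $x=0$ and $x=1$, and these variable flips must occur in any untangling since they are the only way to resolve the internal crossings of the variable gadget. The geometric construction steps (especially step~\ref{item:adjust}) together with the convex-position arguments of Lemma~\ref{lem:branching} then guarantee that flips internal to a gadget never create crossings outside of it, so the gadgets decouple and the assignment read off from the variable flips determines the input value seen by each clause gadget through its connecting vertical segments.

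For the satisfiable direction, I fix a satisfying assignment and construct an explicit untangle sequence: apply the $v$ variable flips; resolve the branching matching attached to each variable set to true using Lemma~\ref{lem:branching}; and finally untangle each clause gadget using Lemma~\ref{lem:clausev2}. Since every clause has at least one true literal, no clause gadget ends with the segment $\sgt{r_4}{b_7}$, so no padding gadget is triggered, and a gadget-wise accounting bounds the total by $v+5c$. Conversely, if some untangle sequence avoids triggering any padding, then, by Lemma~\ref{lem:clausev2}, no clause gadget ends up in the $0\vee 0\vee 0$ case, so the variable flips that started the sequence encode a satisfying assignment. For the unsatisfiable direction, any untangle sequence still performs the $v$ variable flips and thereby defines a truth assignment; since $\Phi$ is unsatisfiable, at least one clause gadget is set to $0\vee 0\vee 0$, and Lemma~\ref{lem:clausev2} forces that clause gadget to produce $\sgt{r_4}{b_7}$, which by Lemma~\ref{lem:padding} triggers the unique chain of $k$ padding flips. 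Adding the $v$ variable flips and the unavoidable local flips around the unsatisfied clause (at least $7$, coming from the $4$ internal clause flips plus the $3$ connecting vertical flips that the triggered branching necessitates) yields $\dist(\M_\Phi)\ge v+7+k$.

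The main obstacle is proving the non-interference between gadgets rigorously: that no untangle sequence can create segments crossing gadget boundaries so as to shortcut either a clause's output or the padding chain. This rests on the geometric adjustments of step~\ref{item:adjust} and the convex-position analysis underlying Lemma~\ref{lem:branching}, which together guarantee that every crossing produced during untangling stays inside a single gadget. Once locality is formalized, the two directions of the lemma follow immediately from the gadget-wise counting sketched above.
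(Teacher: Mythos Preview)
Your proposal is correct and follows essentially the same approach as the paper: invoke Lemmas~\ref{lem:var}, \ref{lem:clausev2}, \ref{lem:padding}, and \ref{lem:branching} to reduce the global count to a gadget-by-gadget accounting, obtaining $v+5c$ in the satisfiable case and $v+7+k$ in the unsatisfiable case with the same $4+3$ breakdown for the triggered clause. If anything, you are slightly more explicit than the paper about the non-interference issue, which the paper treats as already settled by the construction steps and the earlier lemmas.
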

\begin{proof}
  It is consequence of Lemmas~\ref{lem:var}, \ref{lem:clausev2}, \ref{lem:padding} and \ref{lem:branching}, as we examine the longest possible untangle sequences of $\M_\Phi$ which do not trigger any padding gadget, and the shortest possible untangle sequences of $\M_\Phi$ which trigger at least one padding gadget. 
  In any case, $v$ flips will be performed, one per variable (Lemma~\ref{lem:var}).
  
  In the case where no padding gadget is triggered, the length of the longest possible untangle sequences starting at a clause gadget connected to three variable gadgets is $5$, and is obtained by adding $3$, the length of the untangle sequences of a $0 \vee 0 \vee 1$ matching, and $2$, for the two connections to the negative variables. 
  Counting $5$ flips per clause yields $v + 5c$.
  
  If at least one padding gadget is triggered, this very padding gadget generates $k$ flips. 
  In this case, the length of the shortest possible untangle sequences starting at a clause gadget connected to three variable gadgets and which is known to trigger its padding gadget is $7$, and is obtained by adding $4$, the length of the untangle sequences of a $0 \vee 0 \vee 0$ matching, and $3$, for the three connections to the negative variables.
  All the other cause gadgets may be set to their $1 \vee 1 \vee 1$ matching, adding no flip to the shortest untangle sequence, the length of which is thus $v + 7 + k$.
\end{proof}

We now prove Theorem~\ref{thm:NPapx}, reducing RPM 3-SAT to Problem~\ref{pb:approx}.
Let $\Phi$ be an instance of RPM 3-SAT with $c$ clauses and $v$ variables.
We build the matching $\M_{\Phi}$, which serves as an instance of Problem~\ref{pb:approx}, choosing $k = \alpha (v + 5c) + 1$. 
As $k$ is polynomial in the size of the input ($\alpha$ is a constant), the computation of the matching $\M_{\Phi}$ is polynomial (Lemma~\ref{lem:matchingComputation}).

By hypothesis, we compute an untangle sequence starting at $\M_{\Phi}$ of length $\ell$ at most $\alpha \dist(\M_{\Phi})$.
We decide that $\Phi$ is satisfiable if $\ell \leq \alpha (v+5c)$, and that $\Phi$ is not satisfiable if $\ell > \alpha (v+5c)$.

Indeed, Lemma~\ref{lem:full} ensures the following. 
If $\Phi$ is satisfiable, then the length of the shortest untangle sequence of $\M_{\Phi}$ is at most $v + 5c$. Otherwise the length of the shortest sequence is at least $v+7+k \geq k = \alpha (v+5c) + 1$. This ends the reduction, and proves Theorem~\ref{thm:NPapx}.

\section{Upper Bound on \texorpdfstring{$\dd(n)$}{d(n)}}
\label{sec:algo}

In this section, we prove the following upper bound. 

\begin{theorem}
  \label{thm:algo}
  In the red-on-a-line case, $\dd(n) \leq \binom{n}{2}$.
\end{theorem}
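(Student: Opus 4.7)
The plan is to use a greedy algorithm with a per-red-point accounting, aided by a canonical crossing-free ``target'' matching $\M^*$.

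I first construct $\M^*$ by processing the red points in left-to-right order $r_1, \ldots, r_n$ on the $x$-axis and greedily matching each $r_i$ to the unmatched blue point $b^*_i$ of maximum angle from $r_i$. A straightforward induction shows $\M^*$ is crossing-free: the chosen segment $r_i b^*_i$ supports a line placing all remaining unmatched blue points (smaller angle from $r_i$) and all subsequent red points (at angle $0$ from $r_i$, on the $x$-axis to the right) strictly on one side, so no later segment of $\M^*$ can cross it.

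The algorithm then transforms a given $\M$ toward $\M^*$. At each step it finds the leftmost red index $i$ such that $\M(r_i) \ne \M^*(r_i)$, and, if $r_i$'s current segment has a crossing, performs the ``closest-crossing'' flip on $r_i$'s segment (flipping with the segment whose intersection with $r_i\M(r_i)$ is nearest to $r_i$). The key claim is that, under this rule, $r_i$'s partner moves strictly closer to $b^*_i$ in angular order from $r_i$ at every flip involving $r_i$. Because only the $n-i$ blue points not yet locked by finalized reds $r_1, \ldots, r_{i-1}$ are candidates, the number of flips charged to $r_i$ is at most $n-i$, and summing gives $\sum_{i=1}^n (n-i) = \binom{n}{2}$.

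The main obstacle is establishing the angular-monotonicity invariant together with an edge case. After a closest-crossing flip of $r_i$'s segment with a segment $r_j b'$ at crossing point $p$, the new segment $r_i b'$ lies inside the triangle $r_i p b'$; one must verify geometrically that $b'$'s angle from $r_i$ strictly exceeds that of the old partner, using crucially that the red points are collinear (which constrains the cyclic order around the flipping quadrilateral). The subtle edge case is when $r_i\M(r_i)$ has no crossings yet $\M(r_i) \ne b^*_i$: here an auxiliary flip involving two later reds must first be performed to create a crossing on $r_i$'s segment, and one must argue that such auxiliary flips can be charged to the later reds without double counting, preserving the per-red-point budget. Once these invariants are in place, the half-plane property of $\M^*$ guarantees that finalized segments are never disturbed by subsequent flips, and the bound $\dd(n) \le \binom{n}{2}$ follows.
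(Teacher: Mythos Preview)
Your approach differs substantially from the paper's. You process the red points left to right, targeting a fixed crossing-free matching $\M^*$ and arguing via angular monotonicity; the paper instead repeatedly frees the segment of the \emph{topmost} blue point and analyzes the sequence with a state-tracking potential (the number of pairs not in state~$\HH$).

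Your core angular-monotonicity claim is correct: since every $r_j$ with $j>i$ lies to the right of $r_i$, the convex cyclic order on $\{r_i,r_j,b,b'\}$ forces $b'$ to have strictly larger angle than $b$ from $r_i$, so each flip of $r_i$'s segment raises that angle. But the edge case you flag is a genuine gap, and the auxiliary-flip patch does not close it. The difficulty is the accounting. If you charge an auxiliary flip between $r_j$ and $r_k$ (with $i<j<k$) to $r_j$, you spend from $r_j$'s budget; yet later, whenever some $r_{i'}$ with $i'<j$ is processed and its segment is flipped against $r_j$'s, the angle of $r_j$'s partner from $r_j$ \emph{decreases} (the same cyclic-order argument, read from $r_j$). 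Hence $r_j$'s partner angle is not monotone over the whole run, and nothing you have written rules out an up--down--up--down pattern in which every ``up'' is a charged flip, so the number of flips charged to $r_j$ need not be bounded by $n-j$.

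The paper's choice of the \emph{topmost} blue point is precisely what sidesteps this. Once the top segment is crossing-free, a one-line $y$-coordinate argument shows it is automatically \emph{free} (every other segment lies entirely on one side of its supporting line), so later flips can never disturb it and the recursion is clean. By contrast, when your $r_i b$ is crossing-free but $b\ne b^*_i$, the segment need not be free: blues at higher angle from $r_i$ can sit on the far side of the line $r_i b$ while their segments miss the \emph{segment} $r_i b$, and subsequent flips can then re-create crossings on it---which is exactly what forces you into auxiliary flips. The paper's analysis (Lemmas~\ref{lem:tracking} and~\ref{lem:Tzone}) then shows that its specific flip choices never turn an $\HH$-pair into $\X$ or $\T$, so the count of non-$\HH$ pairs drops by at least one per flip, yielding $\binom{n}{2}$ directly.
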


The proof consists of the analysis of the number of flips performed by the recursive algorithm described next.
This analysis is based on a novel approach called \emph{state tracking}.
State tracking is in fact not specific to the red-on-a-line case, which is why Lemma~\ref{lem:tracking} is stated and proven in the non-bipartite setting.
Lemma~\ref{lem:tracking} is then used in the red-on-a-line case to prove Lemma~\ref{lem:Tzone}, which in turn is used to prove Theorme~\ref{thm:algo}.
Lemma~\ref{lem:tracking} also provides an alternative proof of the well-known Theorem~\ref{thm:convexUpperBound}~\cite{BMS19}, which we present at the end of this section.

Throughout, we assume general position (no two blue points with the same $y$-coordinate).
Let the \emph{top} segment of a red-on-a-line matching be the segment with the topmost blue endpoint (\Figure~\ref{fig:topSegments}(a)).

\begin{figure}[!ht]
  \centering
  (a)\includegraphics[page=1,scale=\graphicsScale]{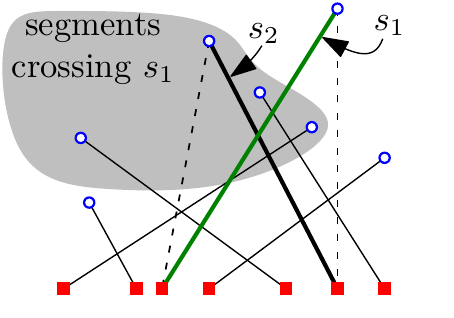}
  \includegraphics[page=16,scale=\graphicsScale]{algo2green}
  (b)\includegraphics[page=15,scale=\graphicsScale]{algo2green}
  \caption{(a) A red-on-a-line matching with $s_1$ as the top segment. 
  (b) The matching just before the first recursive calls of the algorithm, where $s_1$ is free.
  }
  \label{fig:topSegments}
\end{figure}

\paragraph{Algorithm.}

While the top segment $s_1$ of the matching crosses another segment $s_2$, we flip $s_1$ and $s_2$. If multiple segments cross $s_1$, then we choose $s_2$ as the top segment among the segments crossing $s_1$.

The previous loop stops when the top segment $s_1$ has no crossings. At this point, we have that $s_1$ splits the matching into at most two non-empty submatchings, one to each side of $s_1$. We recursively call the algorithm on these submatchings (\Figure~\ref{fig:topSegments}(b)).

\paragraph{Correctness.} 

The next two lemmas prove the correctness of the algorithm. 

\begin{lemma}[\cite{BoM16}]
  \label{lem:convex}
  If a matching admits a partition of submatchings whose convex hulls are all disjoint, then, any sequence of flips in one of the submatchings never affects the other submatchings (\Figure~\ref{fig:hullPartitioning}). 
\end{lemma}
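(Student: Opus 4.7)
The plan is a simple convexity argument combined with induction on the number of flips. First I would observe that, by definition, a flip replaces two crossing segments by two non-crossing segments on the same four endpoints. Hence a flip performed in a submatching $\M_i$ never alters the vertex set of any submatching, and in particular the convex hull of each submatching of the partition is unchanged after the flip.

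Next I would argue that the two new segments produced by a flip inside $\M_i$ remain contained in the convex hull of $\M_i$. Indeed, their four endpoints all lie in $\M_i$, and a segment between two points of a set always lies inside the convex hull of that set. So after a flip in $\M_i$, the partition of the full matching into submatchings with pairwise disjoint convex hulls is preserved, both as a vertex partition and as a segment partition.

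The core consequence, which I would draw next, is that a segment of $\M_i$ can never cross a segment of another submatching $\M_j$: each such segment lies in its own convex hull and the two convex hulls are disjoint. Therefore any crossing pair in the current matching, and in particular any pair of segments eligible for a flip, is entirely contained within a single submatching. Induction on the length of a flip sequence performed inside $\M_i$ then gives the conclusion: at every step the segments of the other submatchings are neither crossed by, nor involved in, any flip, so they are left untouched.

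The main (modest) obstacle is to make the invariant "the submatchings' convex hulls remain pairwise disjoint, and each submatching's segments stay inside its own hull" precise and confirm it is maintained by every flip; once this invariant is in hand, disjointness of the convex hulls immediately forbids any cross-submatching crossing, and the lemma follows.
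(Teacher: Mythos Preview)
Your argument is correct and is essentially the same as the paper's: the key observation is that a flip leaves the convex hull of the submatching unchanged, so the disjoint-hull partition is preserved and no cross-submatching crossing can ever arise. The paper states this in one line, while you spell out the invariant and the induction explicitly, but the content is identical.
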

\begin{proof}
  This result can be found in~\cite{BoM16}. Its proof amounts to the observation that the flip operation leaves the convex hull unchanged (in \Figure~\ref{fig:hullPartitioning}, the dashed segments are the results of possible flip sequences).
\end{proof}

\begin{figure}[!ht]
  \centering
  \includegraphics[scale=\graphicsScale]{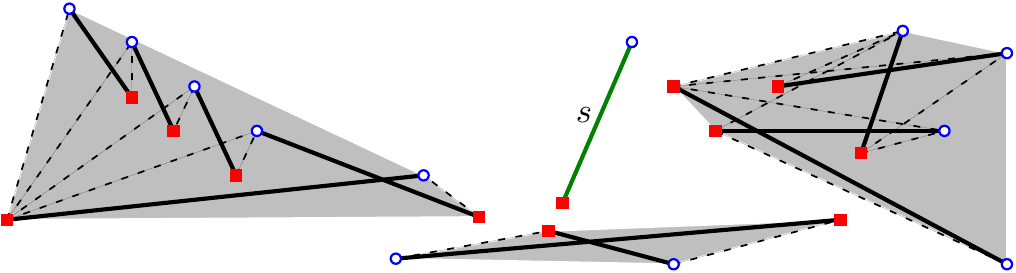}
  \caption{A partition of $4$ submatchings whose convex hulls are all disjoint. The segment $s$ is the only free segment.}
  \label{fig:hullPartitioning}
\end{figure}

We say that a segment $s$ is \emph{free} if the matching admits a partition of submatchings whose convex hulls are all disjoint, and one of the submatchings consists of the segment $s$ alone. 
In \Figure~\ref{fig:hullPartitioning}, the segment $s$ is the only free segment.

\begin{lemma}
  \label{lem:free}
  The algorithm always makes the top segment free before recursive calls.
\end{lemma}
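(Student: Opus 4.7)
The plan is to show that after the loop terminates, the line $L_1$ through the current top segment $s_1 = r_i b_j$ canonically separates the rest of the matching, so that $\{s_1\}$, together with the segments on each open side of $L_1$, realises the partition required by the definition of a free segment.

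First I would establish an invariant of the loop: the top segment always has blue endpoint $b_j$, the globally highest blue point. Indeed, flipping $s_1 = r_i b_j$ with a segment $r_k b_\ell$ produces $r_i b_\ell$ and $r_k b_j$, and since $b_j$ remains the topmost blue point, the new top segment is $r_k b_j$. Combined with the fact that every flip strictly decreases the total segment length and therefore the loop terminates, this gives that when the loop exits, we have a segment $s_1 = r_i b_j$ with $b_j$ globally topmost and $s_1$ crossing no other segment.

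The core step is to prove that every segment other than $s_1$ lies strictly in one of the two open half-planes bounded by $L_1$. General position rules out any third point on $L_1$, because $L_1$ already carries the red point $r_i$ and the blue point $b_j$, so a third point on $L_1$ would yield three collinear points of not all the same colour. Now take any other segment $t = r_k b_\ell$. Its red endpoint $r_k \ne r_i$ lies on the $x$-axis, hence strictly on one side of $L_1$. Suppose for contradiction that $b_\ell$ is on the opposite side; then $t$ crosses $L_1$ at some interior point. The key geometric observation is that every point of $t$ has $y$-coordinate in $[0, y_{b_j}]$, since $r_k$ is on the $x$-axis and $b_\ell$ is a blue point no higher than the topmost blue point $b_j$. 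On the other hand, the intersection of $L_1$ with this horizontal slab is precisely the segment $s_1$. Thus $t$ would cross $s_1$, contradicting the termination condition of the loop.

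Once this separation is in hand, the left and right submatchings (each of which may be empty) lie in disjoint open half-planes, so their convex hulls are disjoint from each other, and both are disjoint from $s_1$ since $s_1 \subset L_1$ and no other point touches $L_1$. This is exactly the partition demanded by the definition of a free segment, so $s_1$ is free when the recursive calls begin. The main obstacle is the middle step, and specifically ruling out that a hypothetical crossing of $t$ with $L_1$ happens below the $x$-axis or above $b_j$; the upper and lower bounds on the $y$-coordinates of $t$ are precisely what forbid these two escapes and force the crossing to land inside $s_1$.
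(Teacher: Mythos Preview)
Your argument is correct. In fact, you supply considerably more detail than the paper does: the paper's own proof of this lemma is essentially two sentences, observing that any flip sequence is finite (so the loop halts) and that the loop's exit condition is that the top segment has no crossings. The paper tacitly identifies ``the top segment has no crossings'' with ``the top segment is free'', relying on the sentence in the algorithm description that $s_1$ then splits the matching into two sides; it does not spell out why this is so.

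Your proof fills in exactly that gap. The invariant that the blue endpoint of the top segment stays fixed at the globally topmost blue point $b_j$, together with the slab argument (any other segment lives in the horizontal strip $0 \le y \le y_{b_j}$, and $L_1$ meets that strip precisely in $s_1$), cleanly forces any segment straddling $L_1$ to cross $s_1$. The appeal to general position to exclude points on $L_1$ is also correct given the paper's conventions. So your approach is not different in spirit---it is the natural justification of the claim the paper leaves implicit---but it is a genuinely more complete proof.
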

\begin{proof}
  The algorithm repeats the flip step until the top segment is free. 
  As any sequence of flips is finite, this eventually happens.
  The recursive calls of the algorithm happen when and only when the top segment is free. 
\end{proof}

The correctness of the algorithm follows from Lemma~\ref{lem:convex} and~\ref{lem:free}.

\paragraph{Flip Complexity.}

The analysis of the number of flips performed by the algorithm stems from the following observations.
We define three possible \emph{states} for a pair of segments (\Figure~\ref{fig:states}).
\begin{itemize}
    \item State $\X$: the segments are crossing.
    \item State $\HH$: the segments are not crossing and their endpoints are in convex position.
    \item State $\T$: the endpoints are not in convex position. 
\end{itemize}
\begin{figure}[!ht]
  \centering
  \includegraphics[scale=\graphicsScale]{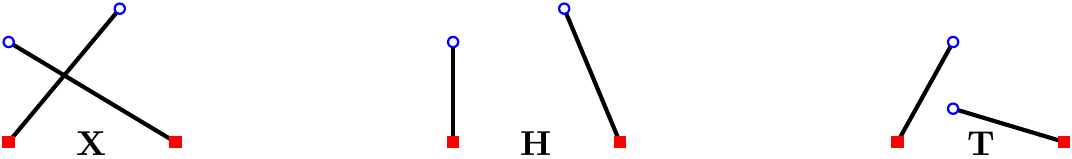}
  \caption{The three different states of pairs of segments.}
  \label{fig:states}
\end{figure}

In the convex case, there are no $\T$-states and a flip increases the number of $\HH$-pairs by at least $1$ unit, and decreases the number of $\X$-pairs as well. Hence, counting either $\X$ or $\HH$-pairs yields the $\binom{n}{2}$ upper bound on $\D(n)$ (this upper bound is in~\cite{BMS19} and the alternative proof we mentioned is made precise in Theorem~\ref{thm:convexUpperBound}).
However, when the points are not in convex position, counting 
$\HH$ and $\X$-pairs is fundamentally different. We will see that counting $\HH$-pairs is more useful to prove the desired bounds.

When the points are not in convex position, a flip may decrease the number of $\HH$-pairs.
\Figure~\ref{fig:HT} shows two such situations where flipping $\pair{s_1}{s_2}$ does not increase the number of $\HH$-pairs. There is one $\HH$-pair involving segment $s$ before the flip, and none after the flip. Notice that, if we added multiple segments close to $s$, the number of $\HH$-pairs would actually decrease.
However, the algorithm avoids these situations by choosing to flip top segments. 
The full proof involves state tracking, a novel approach to analyze flip sequences, which is described next.

\begin{figure}[!ht]
  \centering
  \includegraphics[scale=\graphicsScale,page=1]{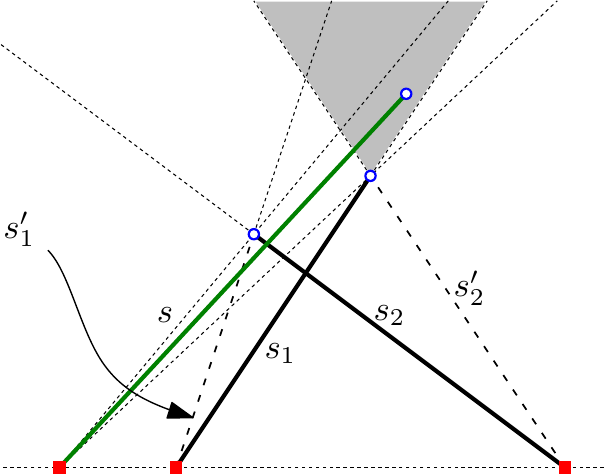}\qquad\qquad
  \includegraphics[scale=\graphicsScale,page=2]{THStateTracking}\\ \vspace{\baselineskip}
    (a) \quad 
     \begin{tabular}{r@{: }c}
        $\pair{\green{s}}{s_1}$ & $\HH$ \\
        $\pair{\green{s}}{s_2}$ & $\X$ \\
        \grey{$\pair{s_1}{s_2}$} & \grey{$\X$}
      \end{tabular}
      \quad
      \begin{tabular}{r@{: }c}
        $\pair{\green{s}}{s_2'}$ & $\T$ \\
        $\pair{\green{s}}{s_1'}$ & $\X$ \\
        \grey{$\pair{s_1'}{s_2'}$} & \grey{$\HH$}
      \end{tabular}
        \hspace{0.14\linewidth}
    (b) \quad 
      \begin{tabular}{r@{: }c}
        $\pair{\green{s}}{s_1}$ & $\HH$ \\
        $\pair{\green{s}}{s_2}$ & $\T$ \\
        \grey{$\pair{s_1}{s_2}$} & \grey{$\X$}
      \end{tabular}
      \quad 
      \begin{tabular}{r@{: }c}
        $\pair{\green{s}}{s_1'}$ & $\T$ \\
        $\pair{\green{s}}{s_2'}$ & $\T$ \\
        \grey{$\pair{s_1'}{s_2'}$} & \grey{$\HH$}
      \end{tabular}
  \caption{Two cases where flipping $\pair{s_1}{s_2}$ does not increase the number of $\HH$-pairs. The upper cone of $s_1,s_2'$ is shaded.}
  \label{fig:HT}
\end{figure}

\paragraph{State Tracking.}
We have $\binom{n}{2}$ pairs of segments before and after a flip. Each pair has an associated state. However, since two segments change in the matchings, there is no clear correspondence between the state of each pair before and after the flip. State tracking establishes this correspondence by making choices of which pair of segments in the initial matching corresponds to which pair of segments in the resulting matching. These choices are performed deliberately to obtain certain state transitions instead of others and prove the desired bounds.

The following notations will be used throughout the rest of this section and are summarized in \Figure~\ref{fig:stateTrackingFlip}. 
Let $r_1, r_2$ be two red points and $b_1, b_2$ be two blue points. 
Let $s_1, s_2, s_1', s_2'$ be the following four segments respectively: $\sgt{r_1}{b_1}$, $\sgt{r_2}{b_2}$, $\sgt{r_1}{b_2}$, $\sgt{r_2}{b_1}$.
We consider a flip that replaces the pair of segments $\pair{s_1}{s_2}$ by $\pair{s_1'}{s_2'}$. 
Let $\M$ denote the matching before the flip and $\M'$ denote the resulting matching after the flip.

\begin{figure}[!ht]
  \centering
  \includegraphics[scale=\graphicsScale,page=1]{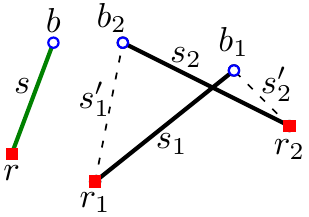}
  \caption{Notations for a generic flip and for a variable segment $s$.}
  \label{fig:stateTrackingFlip}
\end{figure}

We order the $\binom{n}{2}$ pairs of segments of $\M$ in a column vector. 
There are three types of pairs of segments in $\M$ with respect to the flip: the \emph{unaffected} pairs (involving neither $s_1$ nor $s_2$), the \emph{flipping} pair $\pair{s_1}{s_2}$, and the \emph{affected} pairs (involving exactly one of $s_1$ or $s_2$). 
We choose the new order of the $\binom{n}{2}$ pairs of segments of $\M'$ in a way that satisfies the following properties with respect to the previous vector. 
The unaffected pairs keep the same indices.
The pair $\pair{s_1'}{s_2'}$ gets the index of $\pair{s_1}{s_2}$. Next, we describe the remaining indices.

Let $s$ be a segment of $\M$ distinct from $s_1$ and $s_2$. 
Let $r$ and $b$ be the red and blue endpoints of $s$.
Let $i_1$ and $i_2$ be the indices of $\pair{s}{s_1}$ and $\pair{s}{s_2}$, and let $\gS_1$ and $\gS_2$ be their respective states.
Let $\gS_1'$ and $\gS_2'$ be the respective states of $\pair{s}{s_1'}$ and $\pair{s}{s_2'}$.
We restrict our choice to the following two options: 
\begin{itemize}
 \item index $\pair{s}{s_1'}$ with $i_1$, and $\pair{s}{s_2'}$ with $i_2$, or
 \item index $\pair{s}{s_1'}$ with $i_2$, and $\pair{s}{s_2'}$ with $i_1$. 
\end{itemize}
We call such a choice a \emph{tracking choice}. 
We say that a pair of segments in $\M$ \emph{turns into} a pair in $\M'$ when they have the same index. 
We denote $\gS \to \gS'$ to specify that the pairs of segments with a given index go from the state $\gS$ to the state $\gS'$. 
In the following, we use $\gS_1 \gS_2 \to \gS_1' \gS_2'$ as a shorthand notation to say that we have the two following tracking choices: either $\gS_1\to \gS_1'$ and $\gS_2 \to \gS_2'$ or $\gS_1\to \gS_2'$ and $\gS_2 \to \gS_1'$.

There are $3^2$ possible such \emph{transitions} $\gS \to \gS'$.
Yet, the next two lemmas ensure that some transitions can be ruled out by tracking choices.
Lemma~\ref{lem:tracking} actually holds for any (possibly non-bipartite) matching, while Lemma~\ref{lem:Tzone} is specific to the red-on-a-line case.
Both lemmas are proved analyzing the tracking choices of each possible position of a segment $s$ relatively to the flipping pair.

\begin{lemma}
  \label{lem:tracking}
  There always exists a tracking choice avoiding the $\HH \to \X$ transition.
\end{lemma}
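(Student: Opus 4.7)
The plan is to show that the only obstructions to the existence of a good tracking choice would violate the convex position of $\{r_1, r_2, b_1, b_2\}$, which is forced by the crossing of $s_1$ and $s_2$.

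First, I would enumerate the combinations that defeat both tracking choices. Option~1 (i.e., $\gS_1 \to \gS_1'$ and $\gS_2 \to \gS_2'$) contains an $\HH \to \X$ transition iff $(\gS_1, \gS_1') = (\HH, \X)$ or $(\gS_2, \gS_2') = (\HH, \X)$, whereas Option~2 does iff $(\gS_1, \gS_2') = (\HH, \X)$ or $(\gS_2, \gS_1') = (\HH, \X)$. Intersecting these conditions and collapsing the $1 \leftrightarrow 2$ symmetry, the obstructing scenarios reduce to just two cases: (a) $\gS_1 = \gS_2 = \HH$ with some $\gS_j' = \X$, or (b) some $\gS_i = \HH$ with $\gS_1' = \gS_2' = \X$.

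Second, I would translate these state constraints into half-plane conditions on the line $\ell$ through $s$. If $\gS_i = \HH$, then $\{r, b, r_i, b_i\}$ is in convex position with $s$ and $\sgt{r_i}{b_i}$ as opposite sides of the resulting quadrilateral, which forces $r_i$ and $b_i$ into a common open half-plane of $\ell$. If $\gS_j' = \X$, then $s$ crosses $\sgt{r_1}{b_2}$ (for $j = 1$) or $\sgt{r_2}{b_1}$ (for $j = 2$), forcing the two endpoints of that segment into opposite half-planes of $\ell$. These remarks pin down, in each obstructing case, on which side of $\ell$ each of $r_1, r_2, b_1, b_2$ must sit.

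Third, the geometric contradiction comes from the convex position of $\{r_1, r_2, b_1, b_2\}$: because $s_1, s_2$ cross, they are the diagonals of a convex quadrilateral with cyclic order $r_1, r_2, b_1, b_2$, so any line can separate the four points only along a contiguous arc, giving the splits $\{r_1, r_2\} \mid \{b_1, b_2\}$ or $\{r_2, b_1\} \mid \{b_2, r_1\}$, but never $\{r_1, b_1\} \mid \{r_2, b_2\}$. In case (b), tracing the half-plane constraints yields exactly this forbidden split, a contradiction. In case (a), the constraints force all four of $r_1, r_2, b_1, b_2$ onto a single side of $\ell$, and then neither $\sgt{r_1}{b_2}$ nor $\sgt{r_2}{b_1}$ is crossed by $s$, so no $\gS_j'$ equals $\X$, again contradicting the assumption.

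The main obstacle is the bookkeeping of the first paragraph: condensing the four logical sub-cases into the two clean configurations (a), (b) and tracking the half-plane constraints for each. The geometric core reduces to a one-line observation about which $2$--$2$ splits of a convex quadrilateral's vertex set are realizable by a line, and only the cyclic order of the four flip endpoints enters the argument, so the proof extends verbatim to the non-bipartite setting claimed in the statement.
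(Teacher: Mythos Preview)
Your proposal is correct and uses the same two-case decomposition as the paper: your cases (a) and (b) are exactly the paper's cases (i) $\HH\HH\to\X\gS$ and (ii) $\HH\gS\to\X\X$.

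Where you diverge is in the geometric contradiction inside each case. The paper argues at the level of segment intersections: since each $s_j'$ is a side of the triangle with apex $c=s_1\cap s_2$ and with the other two sides contained in $s_1$ and $s_2$, any segment $s$ crossing $s_j'$ must cross $s_1$ or $s_2$ (and crossing both $s_1',s_2'$ forces crossing both $s_1,s_2$), immediately contradicting the assumed $\HH$ states. You instead pass to the supporting line $\ell$ of $s$, translate the $\HH$ and $\X$ hypotheses into ``same side / opposite side'' constraints on $r_1,r_2,b_1,b_2$, and then invoke the fact that a line can only cut the vertex set of a convex quadrilateral along a contiguous arc, so the split $\{r_1,b_1\}\mid\{r_2,b_2\}$ is impossible. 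In case (b) your constraints force exactly this forbidden split; in case (a) they force either this forbidden split or the $4$--$0$ split, and the latter makes any $\gS_j'=\X$ impossible. One small remark: in your write-up of case (a) the sentence ``the constraints force all four \ldots onto a single side'' silently uses the forbidden-split argument from the previous paragraph to eliminate the $2$--$2$ alternative; making that explicit would tighten the exposition. Your line-separation argument is arguably more self-contained than the paper's (which states the key implication ``crossing $s_j'$ implies crossing $s_1$ or $s_2$'' without proof), and, as you note, it uses only the cyclic order of the four flip endpoints, so it carries over to the non-bipartite setting.
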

\begin{proof}
There clearly exists a tracking choice avoiding the $\HH \to \X$ transition unless we have either a transition (i) $\HH  \HH \to \X  \gS$ or (ii) $\HH  \gS \to \X  \X$, where $\gS \in \{ \X, \HH, \T \}$. We show that these two cases are not possible.

(i) $\HH  \HH \to \X  \gS$: If both the pairs $s,s_1$ and $s,s_2$ are $\HH$ while at least one of the two pairs $s,s'_1$ and $s,s'_2$ is $\X$, then the final $\X$ state implies that $s$ crosses $s_1$ or $s_2$, which contradicts the two initial $\HH$ states.    

(ii) $\HH  \gS \to \X  \X$: If one of the two pairs $s,s_1$ and $s,s_2$ is $\HH$ while both pairs $s,s'_1$ and $s,s'_2$ are $\X$, then the two final $\X$ states imply that $s$ crosses $s'_1$ and $s'_2$. It follows that $s$ also crosses $s_1$ and $s_2$, which is again a contradiction.
\end{proof}

\paragraph{State Tracking in the Red-on-a-Line Case.}

\Figure~\ref{fig:stateTrackingMapNotations} summarizes the notations for a generic red-on-a-line flip and an variable segment $s$.
\Figures~\ref{fig:stateTrackingMapCases}, \ref{fig:stateTrackingMapCase1}, \ref{fig:stateTrackingMapCase2}, and \ref{fig:stateTrackingMapCase3.1} then provide ``maps'' of essentially all the possible situations of tracking choices in the red-on-a-line case. 
These figures are used to prove the next lemma.

  \begin{figure}[!ht]
  \centering
  \includegraphics[scale=\graphicsScale,page=1]{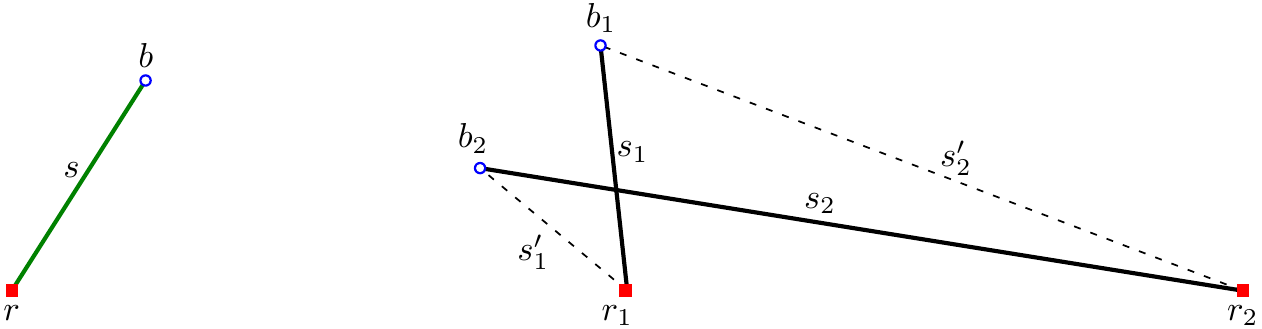}
  \caption{Notations used in \Figures~\ref{fig:stateTrackingMapCases}, \ref{fig:stateTrackingMapCase1}, \ref{fig:stateTrackingMapCase2}, and \ref{fig:stateTrackingMapCase3.1} for a generic red-on-a-line flip and an variable segment $s$.}
  \label{fig:stateTrackingMapNotations}
  \end{figure}

  \begin{figure}[!ht]
  \centering
  \includegraphics[scale=\graphicsScale,page=2]{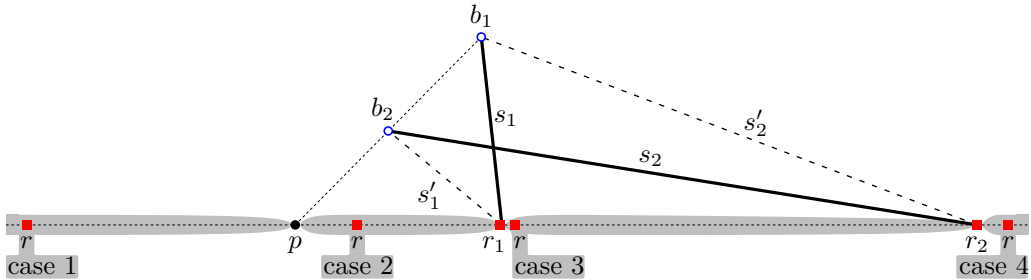}
  \caption{The four possible cases for the position of $r$.}
  \label{fig:stateTrackingMapCases}
  \end{figure}

  \begin{figure}[p]
  \centering
  \includegraphics[scale=\graphicsScale,page=3]{stateTrackingMap}
  \caption{The case 1 ``map'' of all the possible red-on-a-line tracking choices. Tracking choices cannot avoid the transition $\HH \to \T$ in the shaded region.}
  \label{fig:stateTrackingMapCase1}
  \end{figure}
  
  \begin{figure}[p]
  \centering
  \includegraphics[scale=\graphicsScale,page=4]{stateTrackingMap}
  \caption{The case 2 ``map'' of all the possible red-on-a-line tracking choices. Tracking choices cannot avoid the transition $\HH \to \T$ in the two shaded regions.}
  \label{fig:stateTrackingMapCase2}
  \end{figure}
  
  \begin{figure}[p]
  \centering
  \includegraphics[scale=\graphicsScale,page=5]{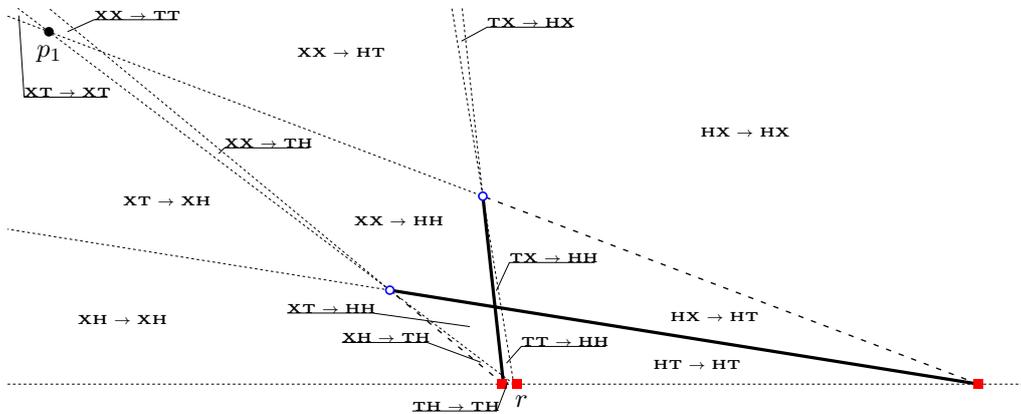}
  \caption{The case 3.1 ``map'' of all the possible red-on-a-line tracking choices.}
  \label{fig:stateTrackingMapCase3.1}
  \end{figure}

  \Figures~\ref{fig:stateTrackingMapCase1}, \ref{fig:stateTrackingMapCase2}, and \ref{fig:stateTrackingMapCase3.1} are generated by a brute force computation of the states $\gS_1, \gS_2, \gS_1', \gS_2'$ of the four pairs $\pair{s}{s_1}$, $\pair{s}{s_2}$, $\pair{s}{s_1'}$, $\pair{s}{s_2'}$ for each position case for $r$ (\Figure~\ref{fig:stateTrackingMapCases}) and for each position case for $b$ (in \Figures~\ref{fig:stateTrackingMapCase1}, \ref{fig:stateTrackingMapCase2}, and \ref{fig:stateTrackingMapCase3.1}, each cell of the arrangement of lines corresponds to a position case for $b$). 
  In the following, we make sure that no case is forgotten.
  
  We assume, without loss of generality, that $r_1$ is on the left of $r_2$, and that $b_1$ is higher than $b_2$. 
  Let $p$ be the intersection between the line $\lineT{b_1}{b_2}$ and the red-point line.
  There are, indeed, four possible open intervals for the position of $r$ on the red-point line: $\opnint{-\infty}{p}$, $\opnint{p}{r_1}$, $\opnint{r_1}{r_2}$, and $\opnint{r_2}{\infty}$ (\Figure~\ref{fig:stateTrackingMapCases}). 
  This yields four cases, respectively. 
  We do not explicitly describe case~4 as it is similar to case~2. Indeed, case~2 and case~4 map to each other by exchanging the labels of $r_1$ and $r_2$, as well as $b_1$ and $b_2$. The fact that the point $p$ is still on the left of $r_1$ and $r_2$ is not a problem since we are studying incidence proprieties. Another way to see it, is to consider the projective plane.
    
  As we have assumed the blue points to lie in the upper half-plane, these four cases split further into sub-cases. 
  However, no-loss-of-generality assumptions and symmetries simplify the analysis.
  Without loss of generality, we first assume that the lines $\lineT{r_1}{b_2}$ and $\lineT{r_2}{b_1}$ intersect in the upper half-plane, as it will only generate more cells to the upper part of the arrangement of lines. 
  
  Second, we examine case~3. 
  Let $p_1$ be the intersection of the lines $\lineT{r}{b_2}$ (see \Figure~\ref{fig:stateTrackingMapCase3.1}) and $\lineT{r_2}{b_1}$, and $p_2$ be the intersection of the lines $\lineT{r}{b_1}$ and $\lineT{r_1}{b_2}$. 
  Case~3 decomposes into:
  \begin{itemize}
      \item case~3.1 where $p_1$ lies in the upper half-plane and $p_2$ in the lower, 
      \item case~3.2 where both $p_1$ and $p_2$ lie in the upper half-plane, 
      \item case~3.3 where $p_1$ lies in the lower half-plane and $p_2$ in the upper, and 
      \item case~3.4 where both $p_1$ and $p_2$ lie in the lower half-plane.
  \end{itemize}
  Cases~3.1 and~3.3 are similar, while case~3.2 is just a superposition of both of them. 
  More precisely, when compared to case~3.4, the extra cell of the arrangement generated by case~3.1 (the cell in the top left corner of \Figure~\ref{fig:stateTrackingMapCase3.1}) corresponds to the possible tracking choices summarized by the notation $\X \T \to \X \T$. 
  Similarly, the extra cell generated by case~3.3 corresponds to $\T \X \to \T \X$. 
  The two extra cells generated by case~3.2 are the same as the two previous ones.
  We thus assume case~3.1 (as it is easier to draw in our setting) without loss of generality.
  All these assumptions made, the remaining cases now corresponds to \Figures~\ref{fig:stateTrackingMapCase1}, \ref{fig:stateTrackingMapCase2}, and \ref{fig:stateTrackingMapCase3.1}.

The next lemma is similar to Lemma~\ref{lem:tracking}, but specific to red-on-a-line matchings.  
We will use it to additionally avoid the $\HH \to \T$ transition.
To state Lemma~\ref{lem:Tzone}, we define the \emph{upper cone} of two segments $\sgt{r_3}{b_3}, \sgt{r_4}{b_3}$ as the locus of the points that are separated from the horizontal line $\lineT{r_3}{r_4}$ by the two lines $\lineT{r_3}{b_3}$ and $\lineT{r_4}{b_3}$ (\Figure~\ref{fig:upperConeAndRay}(a)).
We also define the \emph{upper ray} of a segment as the open ray with the blue point as its origin, the segment as its direction, and going upwards (\Figure~\ref{fig:upperConeAndRay}(b)).

\begin{figure}[!ht]
  \centering
  (a) \quad \includegraphics[scale=\graphicsScale]{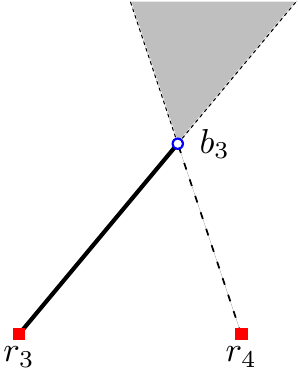}
  \hspace{10em}
  (b) \quad \includegraphics[scale=\graphicsScale]{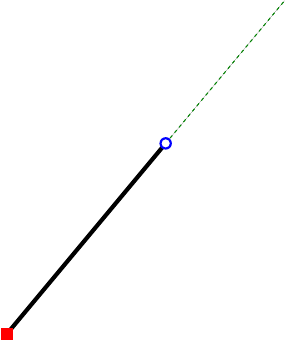}
  \caption{\label{fig:upperConeAndRay}(a) The upper cone of $\sgt{r_3}{b_3}$ and $\sgt{r_4}{b_3}$ is shaded.  (b) The upper ray of the segment is dotted. 
  }
\end{figure}

\begin{lemma}
  \label{lem:Tzone}
  In the red-on-a-line case, if the blue point $b$ of $s$ is not in any of the two upper cones of $s_1, s_2'$ and $s_2, s_1'$, then
   there always exists a tracking choice that avoids $\HH \to \T$ for the pairs $\pair{s}{s_1}$ and $\pair{s}{s_2}$ while still avoiding $\HH \to \X$.
\end{lemma}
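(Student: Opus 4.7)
The plan is to piggyback on the brute-force enumeration already described just before the lemma statement. Under the setup of Figure~\ref{fig:stateTrackingMapNotations}, for every cell of the arrangement of lines in Figures~\ref{fig:stateTrackingMapCase1}, \ref{fig:stateTrackingMapCase2}, and \ref{fig:stateTrackingMapCase3.1} (together with case~4, which is the mirror of case~2 as already noted), I would read off the quadruple of states $(\gS_1, \gS_2, \gS_1', \gS_2')$ and check whether at least one of the two tracking choices avoids simultaneously the transitions $\HH \to \X$ and $\HH \to \T$ for the pair of indices $\{\pair{s}{s_1}, \pair{s}{s_2}\}$. By Lemma~\ref{lem:tracking}, avoiding $\HH \to \X$ alone is always possible, so the only residual obstruction to handle is $\HH \to \T$; concretely, this obstruction arises in a cell exactly when every $\HH$ appearing in $(\gS_1, \gS_2)$ lacks an $\HH$ partner among $(\gS_1', \gS_2')$, forcing it to be paired with a $\T$ under both tracking choices.

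Running this check cell by cell, I expect to verify directly from the figures: in Figure~\ref{fig:stateTrackingMapCase1}, only the single shaded cell fails the test; in Figure~\ref{fig:stateTrackingMapCase2}, only the two shaded cells fail; in Figure~\ref{fig:stateTrackingMapCase3.1} every cell passes (hence no shading is drawn); and case~4, obtained from case~2 by swapping the labels $r_1 \leftrightarrow r_2$ and $b_1 \leftrightarrow b_2$, produces the mirror image of the two shaded cells and no new ones. This yields a finite explicit list of ``bad'' cells in which $\HH \to \T$ is unavoidable, while every other cell admits a tracking choice with the desired property.

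The last step is to recognize the union of these bad cells geometrically. By definition, the upper cone of $s_1, s_2'$ is the wedge lying above the red-point line $\lineT{r_1}{r_2}$, separated from it by the two lines $\lineT{r_1}{b_1}$ and $\lineT{r_2}{b_1}$ (so it is the upward wedge at $b_1$); similarly, the upper cone of $s_2, s_1'$ is the upward wedge at $b_2$ bounded by $\lineT{r_1}{b_2}$ and $\lineT{r_2}{b_2}$. Inspecting each shaded cell and the lines of the arrangement that bound it, I would verify that the shaded cell coincides with the intersection of one of these two wedges with the slab in $r$'s position case. Since the four position cases for $r$ partition the red-point line, taking the union over all cases shows that the full bad region is exactly the union of the two upper cones, which is the hypothesis of the lemma.

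The main obstacle I anticipate is the last identification: the lines of the arrangement in the figures pass through $r$, $r_1$, $r_2$, and $p$ rather than directly through $b_1$ and $b_2$, so matching each shaded cell to a piece of an upward wedge at $b_1$ or $b_2$ requires a short geometric argument. The cleanest way I foresee is to observe that within each $r$-case the relevant boundary lines of the shaded cells are precisely $\lineT{r_i}{b_j}$ (for the appropriate $i,j$), which are the same lines that define the upper cones; once this is established the inclusion becomes immediate and the lemma follows.
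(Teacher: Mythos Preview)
Your approach is correct and essentially the same as the paper's: both rely on the cell-by-cell case analysis encoded in Figures~\ref{fig:stateTrackingMapCase1}--\ref{fig:stateTrackingMapCase3.1} to locate the cells where $\HH \to \T$ is unavoidable and then identify those cells with (parts of) the two upper cones. The paper presents the argument slightly more conceptually, first observing that an initial $\HH$-state forces $r$ to lie outside $\opnint{r_1}{r_2}$ (so case~3 is vacuous) and then isolating the two bad patterns $\{\HH,\T\}\to\{\T,\T\}$ and $\{\HH,\X\}\to\{\T,\X\}$ before pointing to the shaded regions; but it too ultimately defers to the figures for the exhaustive check. Your anticipated obstacle in the last paragraph is not a real difficulty: the arrangement lines in those figures are precisely the lines $\lineT{r_i}{b_j}$ and $\lineT{r}{b_j}$, and the shaded cells are bounded by the former, which are exactly the lines defining the upper cones, so the identification is immediate.
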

\begin{proof}
  First, we check that there are only two possible upper cones defined by two segments of $s_1, s_2, s_1', s_2'$. 
  Indeed, only two pairs among them have a common blue point.

  Then, we note that, for $\pair{s}{s_1}$ or $\pair{s}{s_2}$ to be in state $\HH$, the red point $r$ of $s$ cannot be between $r_1$ and $r_2$, the red points of $s_1$ and $s_2$.
  Without loss of generality, we assume $r$ to lie on the left side of $r_1$ and $r_2$.
  
  For $\pair{s}{s_1'}$ or $\pair{s}{s_2'}$ to be in state $\T$, $s$ has to cross at least one of the upper rays of $s_1'$ or $s_2'$.
  
  The only two combinations of states for $\{\pair{s}{s_1}, \pair{s}{s_2}\}$ and $\{\pair{s}{s_1'}, \pair{s}{s_2'}\}$ which do not leave us the choice to avoid the $\HH \to \T$ transition are $\{\HH, \T\}$ and $\{\T, \T\}$, and $\{\HH, \X\}$ and $\{\T, \X\}$. 
  In any case, $b$ must be in the right most of the two upper cones of segments $s_1, s_2, s_1', s_2'$. 
  More precisely, $b$ lies in one of the three shaded regions of \Figures~\ref{fig:stateTrackingMapCase1} and \ref{fig:stateTrackingMapCase2}. 
  These three shaded regions also correspond to \Figure~\ref{fig:HT} where case~1 is omitted but similar.
  The other cases are either not feasible geometrically, or with a possibility to make tracking choices so as to avoid transition $\HH \to \T$. 
\end{proof}

\paragraph{Proof of Theorem~\ref{thm:algo}.}

We are now ready to prove Theorem~\ref{thm:algo}.

\begin{proof} 
   Let $\nbf(\M)$ be the total number of flips performed by the algorithm on an $n$-segment input matching $\M$ and let $\nbg(\M)$ be the number of flips performed by the algorithm before the recursive calls. Let $\M_r$ denote the matching before the recursive calls.
   The recursive calls take two submatchings of $\M_r$ that we call $\M_1$ and $\M_2$, yielding the following recurrence relation.
   \[
   \nbf(\M) = \nbf(\M_1) + \nbf(\M_2) + \nbg(\M)
   \]
   
   Let $\nbh(M)$ be the number of $\X$-pairs plus the number of $\T$-pairs in a matching $M$, that is, the number of pairs that are not $\HH$-pairs. 
   Lemma~\ref{lem:Tzone} ensures that 
   \[\nbg(\M) \leq \nbh(\M) - \nbh(\M_r) \leq \nbh(\M) - \nbh(\M_1) - \nbh(\M_2).\]
       
   Clearly, $\nbf(\emptyset) = 0$.
   We suppose that, for all $\M'$ with less than $n$ segments, we have
   $\nbf(\M') \leq \nbh(\M')$.
   Then by induction we get
   \[
   \nbf(\M) \leq \nbh(\M_1) + \nbh(\M_2) + \nbh(\M) - \nbh(\M_1) - \nbh(\M_2) =  \nbh(\M).
   \] 
   
   Theorem~\ref{thm:algo} follows since $\nbh(\M) \leq \binom{n}{2}$. 
\end{proof}  

\paragraph{State Tracking in the Convex Case.}

State tracking also applies to the widely studied convex case, providing a more conceptual proof of the following theorem from~\cite{BMS19}. 
Even though we will not use this well-known result, we may as well state it.
This theorem actually holds for any straight-line non-bipartite perfect matching.

\begin{theorem}[Theorem 5 of \cite{BMS19}]
  \label{thm:convexUpperBound}
  In the convex case, any untangle sequence is of length at most $\binom{n}{2}$.
\end{theorem}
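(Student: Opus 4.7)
The plan is to deploy the state-tracking machinery already developed (in particular Lemma~\ref{lem:tracking}) to give a uniform one-line potential argument. Since in the convex case every four endpoints lie in convex position, the state $\T$ is simply unavailable: every pair of segments is either $\X$ or $\HH$. This is the structural observation that makes everything collapse.

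First I would fix an arbitrary untangle sequence and, at each flip, apply Lemma~\ref{lem:tracking} to pick a tracking choice which forbids any $\HH \to \X$ transition among the affected pairs. Because $\T$ does not occur, the only conceivable transitions for unaffected and affected pairs are $\HH \to \HH$, $\X \to \X$, and $\X \to \HH$; in particular the count $h(\M)$ of $\HH$-pairs is non-decreasing across the flip when restricted to those pairs. Next I would observe that the flipping pair itself contributes a guaranteed $\X \to \HH$ transition, since before the flip $s_1$ and $s_2$ cross and after the flip $s_1', s_2'$ are a non-crossing pair on four points in convex position, hence in state $\HH$. Combining these two facts gives $h(\M') \ge h(\M) + 1$ at every flip.

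To conclude, I would use the trivial bound $h(\cdot) \le \binom{n}{2}$ (there are only $\binom{n}{2}$ pairs of segments in total). Since $h$ strictly increases by at least one unit at each step of any untangle sequence, the sequence can contain at most $\binom{n}{2}$ flips, which is exactly Theorem~\ref{thm:convexUpperBound}. The extension to arbitrary (non-bipartite) straight-line matchings in convex position is identical, since Lemma~\ref{lem:tracking} is proved in the non-bipartite setting and convex position still precludes $\T$.

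The only point requiring care — and the one I expect to be the main obstacle — is justifying rigorously that in convex position no pair of segments is ever in state $\T$, not only at the start but throughout the sequence. This is immediate because a flip preserves the underlying point set and convex position is a property of points, not of the matching; hence every quadruple of endpoints remains in convex position forever, so every pair of segments stays in $\X \cup \HH$. Once this is written out, the potential argument above is essentially a one-liner.
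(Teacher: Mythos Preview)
Your proposal is correct and follows exactly the approach of the paper's proof: observe that in convex position the state $\T$ never occurs, apply Lemma~\ref{lem:tracking} to forbid $\HH\to\X$, and conclude that the number of $\HH$-pairs strictly increases at each flip, bounding the sequence length by $\binom{n}{2}$. Your write-up simply spells out in more detail what the paper condenses into two sentences.
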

\begin{proof}
  In the convex case, the $\T$-state does not exist. Lemma~\ref{lem:tracking} thus ensures that the number of $\HH$-pairs increases of at least $1$ unit at each flip.
\end{proof}

\section{Upper Bound on \texorpdfstring{$\D(n)$}{D(n)}}
\label{sec:upperB}

In this section we prove the following theorem.

\begin{theorem}
  \label{thm:upperB}
  In the red-on-a-line case, $\D(n) \leq \binom{n}{2}\frac{n+4}{6}$.
\end{theorem}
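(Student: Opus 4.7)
The plan is a potential function argument: I would define $\Phi:\V\to\mathbb{Z}_{\geq 0}$ such that (a) every flip $\M\to\M'$ satisfies $\Phi(\M')\leq\Phi(\M)-1$, and (b) $\max_{\M}\Phi(\M)\leq\binom{n}{2}(n+4)/6$. Together these immediately yield the claimed bound on the length of any untangle sequence, since the reconfiguration graph is acyclic and each directed edge strictly decreases a bounded non-negative integer potential.

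\paragraph{Candidate potential.} The target bound factors suggestively as $\binom{n}{2}\frac{n+4}{6}=\binom{n}{2}+\frac{1}{2}\binom{n}{3}$, whose first summand matches Theorem~\ref{thm:convexUpperBound} (the convex-case bound). This suggests splitting $\Phi=\Phi_1+\Phi_2$, where $\Phi_1(\M)$ is the number of non-$\HH$ pairs of segments (so $\Phi_1\leq\binom{n}{2}$) and $\Phi_2$ is a non-negative ``corrector'' bounded by $\frac{1}{2}\binom{n}{3}$ that is built from triples of segments. A natural candidate for $\Phi_2$ is a weighted count of ordered triples $(s,s_1,s_2)$ for which the blue endpoint of $s$ lies in the upper cone of the pair $(s_1,s_2)$ (or the closely related region arising from Lemma~\ref{lem:Tzone}); each such triple is the combinatorial witness of a potential $\HH\to\T$ transition that Lemma~\ref{lem:Tzone} identifies as unavoidable.

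\paragraph{Key steps.} First, I would bound $\Phi_2$ by $\frac{1}{2}\binom{n}{3}$ by fixing an orientation on triples (for instance, by $y$-rank of the highest blue endpoint) so that each unordered triple is counted at most once. Second, I would verify the strict decrease $\Phi(\M)-\Phi(\M')\geq 1$ under each flip. Using the state-tracking machinery of Section~\ref{sec:algo}, Lemma~\ref{lem:tracking} supplies a tracking choice that avoids the $\HH\to\X$ transition, so the only way $\Phi_1$ can increase across a flip is via an $\HH\to\T$ transition. Lemma~\ref{lem:Tzone} localises these bad transitions to the two upper cones, and $\Phi_2$ is designed so that each such transition destroys (or strictly advances to a lower state) at least one triple, producing a decrease in $\Phi_2$ of at least the increase in $\Phi_1$. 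The flipping pair itself contributes the final $\X\to\HH$ transition, yielding the net decrease of at least $1$.

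\paragraph{Main obstacle.} The delicate part is calibrating $\Phi_2$ so that the bookkeeping balances exactly: every increase of $\Phi_1$ by a $\HH\to\T$ transition must be matched by at least one removed (or advanced) triple in $\Phi_2$, while the chosen orientation on triples remains consistent across all flips so that the $\frac{1}{2}\binom{n}{3}$ upper bound holds globally and no triple is ``recharged''. The red-on-a-line constraint is essential here: the ``topmost'' segment of a pair has a well-defined upper cone opening upward into the half-plane containing all blues, which gives a canonical orientation on triples and, combined with the case analysis illustrated by Figures~\ref{fig:stateTrackingMapCase1}--\ref{fig:stateTrackingMapCase3.1}, should let the monotone progress of $\Phi_2$ be read off directly from the position of the blue point $b$ of $s$ relative to the upper cones of $s_1,s_2'$ and $s_2,s_1'$.
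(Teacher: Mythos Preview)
Your proposal is not a proof but a programme, and the key step---defining $\Phi_2$ and proving it has the required monotonicity---is precisely the part you leave open. You acknowledge this yourself in the ``main obstacle'' paragraph: you need each unavoidable $\HH\to\T$ transition to destroy at least one triple in $\Phi_2$, with no triple ever being recharged, while simultaneously maintaining a global orientation that caps $\Phi_2$ at $\frac{1}{2}\binom{n}{3}$. None of this is established. When a flip replaces $s_1,s_2$ by $s_1',s_2'$, every triple involving $s_1$ or $s_2$ disappears and a new batch of triples involving $s_1',s_2'$ appears; controlling the net change of an upper-cone count across this replacement, for \emph{arbitrary} flips (not just the top-segment flips of Section~\ref{sec:algo}), is the entire difficulty, and Lemma~\ref{lem:Tzone} alone does not do it. As written, the argument is a plausible heuristic, not a proof.

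The paper's proof takes a completely different route and does \emph{not} use state tracking at all. For each $k\in\{1,\dots,n\}$ it projects the blue points onto a high horizontal line $\ell$ from the focal point $r_k$, and defines $\Phi_k(\M)$ as the number of $k$-pairs (pairs $\ipair{i}{j}$ with $i\le k\le j$) whose projected images cross. Encoding the projected blue points as a word over $\{\lft,\ctr,\rgt\}$ according to whether the matched red point lies left of, at, or right of $r_k$, one sees that $\Phi_k$ equals the inversion count of this word; a flip that is not a $k$-flip leaves the word unchanged, and a $k$-flip swaps two symbols that form an inversion, so $\Phi_k$ is non-increasing and drops by at least $1$ at each $k$-flip. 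Since any flip of $\ipair{i}{j}$ is simultaneously an $i$-flip and a $j$-flip, $\Phi=\sum_k\Phi_k$ drops by at least $2$ per flip, and a direct summation gives $\Phi\le\binom{n}{2}\frac{n+4}{3}$, yielding the bound. This projection/inversion idea sidesteps entirely the $\HH\to\T$ bookkeeping you are trying to tame.
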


To prove Theorem~\ref{thm:upperB}, we define a potential function $\Phi$ that maps a red-on-a-line matching to an integer from $0$ to $\binom{n}{2} \frac{n+4}{3}$. 
Since $\Phi$ decreases by at least $2$ units at each flip, the theorem follows. 
We first give the definitions needed to present $\Phi$. 
Then, we prove four lemmas yielding Theorem~\ref{thm:upperB}. 

Let $\M$ be a red-on-a-line matching.
Let $r_1, \dots, r_n$ be the red points, from left to right.
Let $\lt$ be a line, parallel to the line of the red points and above all the points. 
For each $k$ in $\{1, \dots, n\}$, we project the blue points onto $\lt$, using $r_k$ as a focal point. 
More precisely, each blue point $b$ maps to a point $t_k(b)$, the intersection between the ray $\ray{r_k}{b}$ and the line $\lt$ (\Figure~\ref{fig:projectionk}(a)). 
We also define the function $t_k$ of a red-blue segment $\sgt{r}{b}$ as the segment $t_k(\sgt{r}{b}) = \sgt{r}{t_k(b)}$ (\Figure~\ref{fig:projectionk}(b)). 

\begin{figure}[!ht]
  \centering
  (a)\hspace{-1em}\includegraphics[scale=\graphicsScale,page=5]{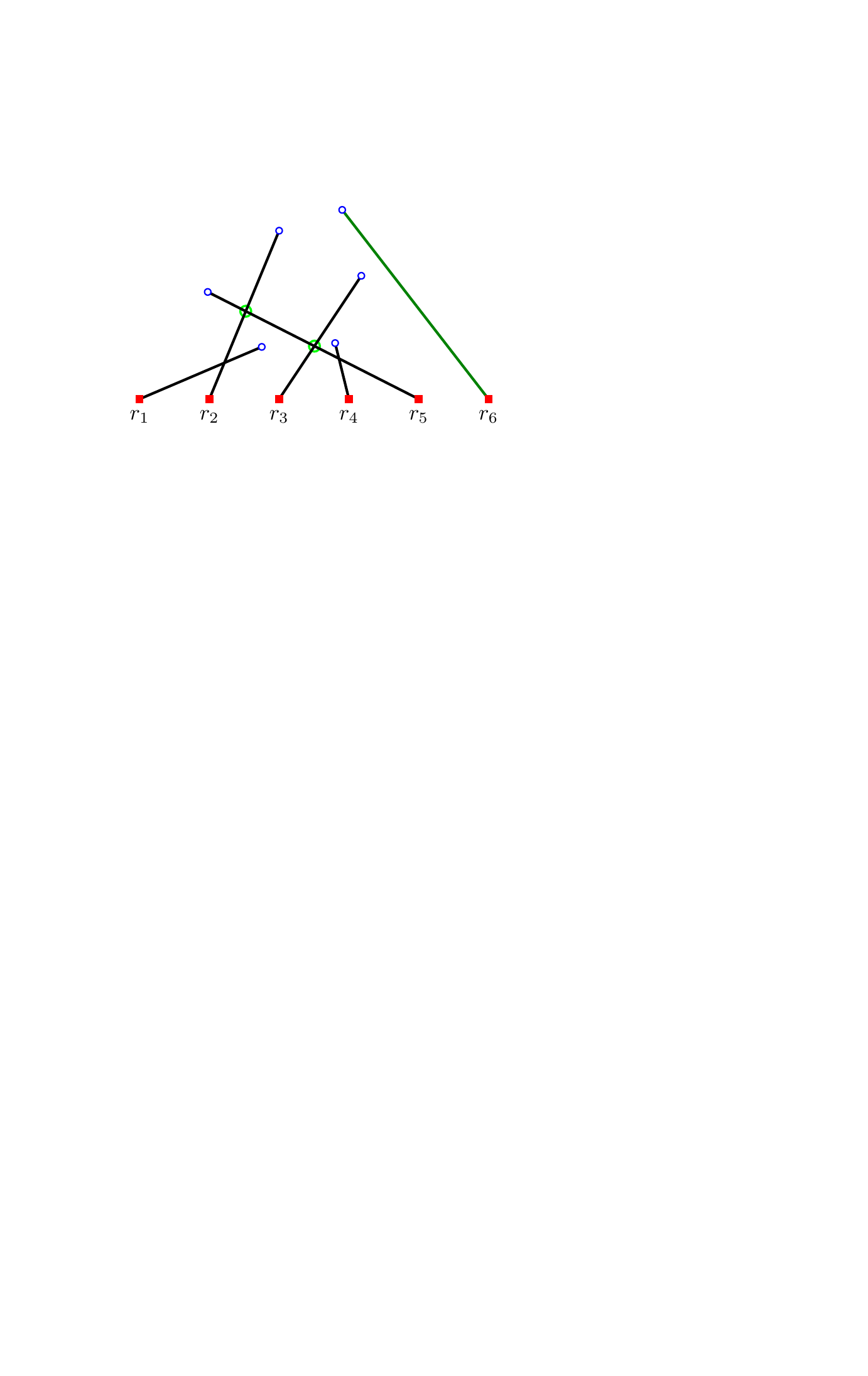}\qquad
  (b)\hspace{-1em}\includegraphics[scale=\graphicsScale,page=6]{projectionkWithWord}\\
  \caption{(a) The projection $t_k$ for $k=3$. (b) The  segments $t_3(\cdot)$. The three $3$-observed crossing $3$-pairs are circled.}
  \label{fig:projectionk}
  \label{fig:kcrossingkpairs}
\end{figure}

We may abbreviate a pair of segments $\pair{\sgt{r_i}{b}}{ \sgt{r_j}{b'}}$ as $\ipair{i}{j}$ when the points $b$ and $b'$ can be deduced from the underlying matching.
Let $k$ be an integer in $\{1, \dots,n\}$. 
We say that two segments are \emph{$k$-observed crossing} if the extended projection $t_k(\cdot)$ maps them to crossing segments (\Figure~\ref{fig:projectionk}(b)).
A pair of segments $\ipair{i}{j}$ 
is a \emph{$k$-pair} if $i \leq k \leq j$. 
A \emph{$k$-flip} is then a flip of a $k$-pair.
We have the following lemma.

\begin{lemma}
  \label{lem:kcrossing}
  A crossing $k$-pair is necessarily $k$-observed crossing. 
\end{lemma}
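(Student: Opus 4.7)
The plan is to place coordinates so that $t_k$ becomes a simple algebraic operation, and to reduce both the crossing of $s,s'$ and the crossing of $t_k(s),t_k(s')$ to inequalities in a couple of key determinants; a weighted sum then kills the dependence on the blue $y$-coordinates.

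Put $r_k$ at the origin, the red-point line along the $x$-axis, and (up to rescaling $\lt$) $\lt = \{y = 1\}$. Write $r_i = (-a, 0)$ and $r_j = (c, 0)$; the hypothesis $i \le k \le j$ gives $a, c \ge 0$, and $i \ne j$ gives $a + c > 0$. Write $b = (b_x, b_y)$ and $b' = (b'_x, b'_y)$ with $b_y, b'_y > 0$; then $t_k(b) = (b_x/b_y,\, 1)$ and $t_k(b') = (b'_x/b'_y,\, 1)$. Let $D := b_x b'_y - b'_x b_y$ and $E := b'_y - b_y$.

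The standard determinant criterion for proper crossing (each segment's endpoints straddle the other segment's supporting line) yields, after computing $\det(b - r_i,\, r_j - r_i) = -b_y(a+c) < 0$ and $\det(b' - r_j,\, r_i - r_j) = b'_y(a+c) > 0$, that $s$ and $s'$ cross if and only if the opposite-sign partner determinants are both positive, i.e.
\[
  D + a E > 0 \quad\text{and}\quad D - c E > 0.
\]
Multiplying the first inequality by $c$ and the second by $a$ and adding cancels the $E$-terms, giving $(a+c)\,D > 0$; since $a+c > 0$, this forces $D > 0$.

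Applying the same crossing criterion to $r_i t_k(b)$ and $r_j t_k(b')$ uses the same $a, c$ but replaces $b, b'$ by their images under $t_k$; the new ``$E$'' becomes $1 - 1 = 0$, while the new ``$D$'' becomes $b_x/b_y - b'_x/b'_y = D/(b_y b'_y) > 0$. Both inequalities then reduce to this single positive quantity, so $t_k(s)$ and $t_k(s')$ cross, i.e.\ the pair is $k$-observed crossing. Equivalently, $r_i$ lies strictly left of $r_j$ on the $x$-axis while $t_k(b)$ lies strictly right of $t_k(b')$ on $\lt$, so the four points form a convex quadrilateral whose diagonals $r_i t_k(b)$ and $r_j t_k(b')$ must cross.

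The only delicate step is the sign bookkeeping that produces the pair of inequalities $D + aE > 0$ and $D - cE > 0$; once those are in hand, the weighted-sum cancellation of $E$ and the vanishing of the new $E$ after projection make the conclusion immediate.
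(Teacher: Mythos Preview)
Your proof is correct. The paper argues geometrically instead: since $i \le k \le j$, the point $r_k$ lies on the segment $r_i r_j$, which is the bottom edge of the convex hull of $\{r_i, r_j, b, b'\}$; because $r_k$ sits on the hull boundary, the central projection from $r_k$ onto $\ell$ preserves the cyclic order in which $b$ and $b'$ are seen from $r_k$, so $r_i, r_j, t_k(b), t_k(b')$ remain in convex position in the same counter-clockwise order and the two diagonals $r_i\,t_k(b)$ and $r_j\,t_k(b')$ still cross. Your coordinate approach replaces this projective observation with explicit determinant inequalities and the weighted-sum cancellation of $E$; it is longer but fully self-contained and makes the role of the hypothesis $i\le k\le j$ completely transparent (it is exactly what gives $a,c\ge 0$, without which the cancellation could fail). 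The paper's version is shorter and more conceptual but leaves the order-preservation claim to the reader; yours leaves nothing to the imagination. One small remark: when you multiply by $c$ and by $a$ and add, one of these multipliers may be zero (the cases $i=k$ or $j=k$); the conclusion $(a+c)D>0$ still holds because at least one multiplier is strictly positive, but you might say so explicitly.
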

\begin{proof}
  Let $\pair{\sgt{r_i}{b}}{ \sgt{r_j}{b'}}$ be a crossing $k$-pair.
  We suppose, without loss of generality, that $i<j$ (e.g. $i=2$, $k=3$, and $j=5$ in \Figure~\ref{fig:projectionk}). 
  
  The fact that the $k$-pair $\pair{\sgt{r_i}{b}}{ \sgt{r_j}{b'}}$ is crossing means that the four points are in convex position, and that they appear as $r_i, r_j, b, b'$ on their convex hull in counter-clockwise order.
  Since $i \leq k \leq j$, the point $r_k$ is also on the boundary of the convex hull of the four points.
  Therefore, the projection $t_k(\cdot)$ will not change the convex-hull order and the segments $\sgt{r_i}{t_k(b)}$ and $\sgt{r_j}{t_k(b')}$ will cross. 
\end{proof}

We define $\Phi_k(\M)$, the \emph{$k$-th potential} of $\M$, as the number of $k$-observed crossing $k$-pairs (\Figure~\ref{fig:kcrossingkpairs}(b)).
Lemma~\ref{lem:upperB1} shows that the $k$-th potential $\Phi_k$ is at most $(k-1)(n-k) + n-1$. 
Lemma~\ref{lem:upperB2} shows that $\Phi_k$ never increases, and decreases by at least $1$ unit at each $k$-flip. 

\begin{lemma}
  \label{lem:upperB1}
  The $k$-th potential $\Phi_k$ takes integer values from $0$ to $k(n+1) -k^2 - 1$.
\end{lemma}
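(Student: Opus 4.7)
}
The plan is entirely a counting argument. Since $\Phi_k(\M)$ is by definition the cardinality of the set of $k$-observed crossing $k$-pairs of $\M$, it is automatically a non-negative integer, giving the lower bound $0$ and integrality for free. The only content of the lemma is therefore the upper bound, and the obvious strategy is to bound $\Phi_k(\M)$ by the total number of $k$-pairs of $\M$ (whether $k$-observed crossing or not), then compute this total exactly.

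For the count of $k$-pairs, I would fix $i<j$ and enumerate the unordered pairs $\ipair{i}{j}$ satisfying $i \leq k \leq j$ by splitting on whether the inequalities are strict. The case $i < k < j$ contributes $(k-1)(n-k)$ pairs; the case $i = k < j$ contributes $n-k$ pairs; the case $i < k = j$ contributes $k-1$ pairs. Adding these three disjoint contributions gives $(k-1)(n-k) + (n-k) + (k-1) = (k-1)(n-k) + n - 1$, which after one line of algebra equals $k(n+1) - k^2 - 1$, matching the claimed upper bound exactly.

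The only step that needs a sentence of justification is the inequality $\Phi_k(\M) \leq \#\{k\text{-pairs}\}$, which is immediate from the definition of $\Phi_k$: every $k$-observed crossing $k$-pair is, in particular, a $k$-pair. There is no real obstacle here; the lemma is essentially a bookkeeping statement whose purpose is to set up the amortised argument in the subsequent lemmas (where the fact that $\Phi_k$ is \emph{small} for each individual $k$, combined with a decrease-per-flip statement, will sum to the global bound $\binom{n}{2}\frac{n+4}{6}$).
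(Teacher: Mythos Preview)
Your proposal is correct and essentially identical to the paper's own proof: the paper also bounds $\Phi_k(\M)$ by the total number of $k$-pairs, splits the count into the same three cases $i<k<j$, $i<k=j$, $i=k<j$, and sums to $k(n+1)-k^2-1$. The only difference is cosmetic (the paper does not explicitly comment on non-negativity or integrality, nor write out the intermediate algebra).
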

\begin{proof}
  The $k$-th potential $\Phi_k(\M)$ is at most the number of $k$-pairs in $\M$, crossing or not.
      There are exactly $(k-1)(n-k)$ $k$-pairs of the form $\ipair{i}{j}$ with $i < k < j$.
      There are exactly $k-1$ $k$-pairs of the form $\ipair{i}{k}$ with $i < k$. 
      There are exactly $n-k$ $k$-pairs of the form $\ipair{k}{j}$ with $k < j$.
  In total, there are $k(n+1) -k^2 - 1$ $k$-pairs in $\M$. 
\end{proof}

\begin{lemma}
  \label{lem:upperB2}
  The $k$-th potential $\Phi_k$ never increases, and decreases by at least $1$ unit at each $k$-flip.
\end{lemma}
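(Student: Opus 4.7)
The plan is to translate the counting of $k$-observed crossings into a clean inversion count on the projected line $\lt$. For each red index $a \in \{1, \dots, n\}$, let $q_a = t_k(b_a)$, where $b_a$ denotes the blue point currently paired with $r_a$. Since the projected segments all lie in the strip between the red-point line and $\lt$ (two parallel lines), for any pair $\ipair{a}{b}$ with $a < b$ (using the left-to-right labeling of the red points) the projected segments $\sgt{r_a}{q_a}$ and $\sgt{r_b}{q_b}$ cross if and only if $q_a > q_b$. Setting $K_{ab} = 1$ iff $a \le k \le b$ and $X_{ab} = 1$ iff $q_a > q_b$, I will use the expression $\Phi_k(\M) = \sum_{a < b} K_{ab}\, X_{ab}$.

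A flip that exchanges the blue partners of $r_i$ and $r_j$ (with $i < j$) corresponds precisely to swapping the values $q_i \leftrightarrow q_j$, since $t_k$ is a fixed function of the blue points. The indicators $K_{ab}$ depend only on indices and are therefore untouched, so I only need to track which $X_{ab}$ change. Because the flipped pair was crossing in the original plane, Lemma~\ref{lem:kcrossing} yields $q_i > q_j$ just before the swap, an inequality I will use throughout.

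I then compute $\Delta \Phi_k$ pair by pair. Pairs involving neither $r_i$ nor $r_j$ contribute $0$. The pair $\ipair{i}{j}$ itself goes from $X_{ij} = 1$ to $X_{ij} = 0$, contributing $-K_{ij}$, which equals $-1$ for a $k$-flip and $0$ otherwise. For any remaining index $m \notin \{i,j\}$, I split by the geometric position of $m$ into the three cases $m < i$, $i < m < j$, and $m > j$, further splitting the middle case by the sign of $m - k$. In each subcase one of two observations closes the analysis: (a) when both pairs involving $m$ are $k$-pairs, their $X$-values simply permute under the swap of $q_i$ and $q_j$, so the net contribution is $0$; (b) when exactly one of them is a $k$-pair, the contribution reduces to a comparison of the form $[q_m > q_i]$ versus $[q_m > q_j]$ (or the symmetric pair), which is $\le 0$ since $q_i > q_j$. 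The only subcase requiring a slightly separate check is the sandwiched case $i < m = k < j$, where both $\ipair{i}{k}$ and $\ipair{k}{j}$ are $k$-pairs; enumerating the three possible positions of $q_k$ relative to $q_j < q_i$ yields a combined change of $0$ when $q_k \notin (q_j, q_i)$ and $-2$ when $q_k \in (q_j, q_i)$.

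Summing all contributions gives $\Delta \Phi_k \le -K_{ij}$, which establishes both claims: $\Phi_k$ never increases because every summand is non-positive, and on a $k$-flip the term $-K_{ij} = -1$ forces a decrease of at least $1$. The main obstacle is purely bookkeeping, namely organizing the case split so that every subcase reduces to the single inequality $q_i > q_j$ supplied by Lemma~\ref{lem:kcrossing}; once the projection viewpoint is adopted, no further geometric information is needed.
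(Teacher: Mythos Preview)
Your argument is correct, and it shares the paper's core idea of reading $\Phi_k$ as an inversion count through the projection $t_k$; however, the two executions diverge in a meaningful way. The paper indexes by the left-to-right order of the \emph{projected blue points} on $\lt$ and records, for each, only whether its matched red point lies left of, at, or right of $r_k$. This coarse three-letter alphabet makes the non-$k$-flip case a one-liner (the two swapped letters are identical), and the $k$-flip case reduces to the standard fact that swapping an adjacent-in-order inverted pair removes at least one inversion. Your encoding instead indexes by the \emph{red} points and keeps the full real value $q_a$, which forces the explicit case split on the position of $m$ relative to $i,j,k$. Both routes land on the same inequality $\Delta\Phi_k \le -K_{ij}$; the paper's buys brevity, yours buys a quantitative refinement (you actually see the extra $-2$ drop when $q_k$ is sandwiched).

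One point to tighten: you invoke Lemma~\ref{lem:kcrossing} to obtain $q_i>q_j$ ``throughout,'' but that lemma only applies to crossing $k$-pairs, so for a non-$k$-flip the inequality is not guaranteed. This does not break your proof, because a quick check of your own case split shows that whenever the flip is not a $k$-flip, every $m$ falls into the ``both $k$-pairs'' or ``neither $k$-pair'' bucket, and the $\ipair{i}{j}$ term carries weight $K_{ij}=0$; hence $q_i>q_j$ is never actually used there. It would be cleaner to state explicitly that $q_i>q_j$ is only needed (and only claimed) when $i\le k\le j$.
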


\begin{proof}
  We order the projected blue points on $\ell$ from left to right.
  We then map each projected blue point $t_k(b)$ to an element in $\{\lft,\ctr,\rgt\}$: 
  \begin{itemize}
      \item $t_k(b)$ is mapped to $\lft$ if $b$ is matched to a red point on the left of $r_k$, 
      \item $t_k(b)$ is mapped to $\ctr$ if $b$ is matched to $r_k$, 
      \item $t_k(b)$ is mapped to $\rgt$ if $b$ is matched to a red point on the right of $r_k$. 
  \end{itemize}
  Let $w=w_1 \dots w_n$ be the word on the alphabet
  $\{\lft,\ctr,\rgt\}$
  induced by the order of the projected blue points and the map. 
  For instance, in Figure~\ref{fig:kcrossingkpairs} with $k=3$, $w = \rgt\lft\lft\rgt\ctr\rgt$.
  
  Let the total order of the symbols be $\lft \; \prec \; \ctr \; \prec \; \rgt$.
  An \emph{inversion} in $w$ is a pair $\pair{w_i}{w_j}$ with $i<j$ and $w_j \prec w_i$. 
  The inversions in $w$ are in bijection with the $k$-observed crossing $k$-pairs in $\M$.
  Thus, by definition, $\Phi_k(\M)$ is the number of inversions in $w$. 
  Lemma~\ref{lem:upperB2} follows from the following two observations.
  
  (i) Any flip which is not a $k$-flip swaps two $\lft$ or two $\rgt$ in $w$, resulting in word $w'$ identical to $w$.
  
  (ii) Lemma~\ref{lem:kcrossing} ensures that a crossing $k$-pair corresponds to an inversion in $w$. 
  Thus, a $k$-flip exchanges the two symbols of an inversion in $w$, resulting in word $w'$ with at least one inversion less than in $w$.
\end{proof}

We now define $\Phi(\M)$, the \emph{potential} of $\M$, as the sum of $\Phi_k(\M)$, for $k$ in $\{1, \dots, n\}$. The following lemma presents the key properties of $\Phi$.

\begin{lemma}
  \label{lem:upperB3}
  The potential $\Phi$ takes integer values from $0$ to $
  \binom{n}{2} \frac{n+4}{3}$, and decreases by at least $2$ units at each flip.
\end{lemma}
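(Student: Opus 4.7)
}

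The plan is to derive both the range and the decrease of $\Phi$ directly from the per-index bounds established in Lemmas~\ref{lem:upperB1} and~\ref{lem:upperB2}, since $\Phi = \sum_{k=1}^n \Phi_k$ by definition.

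For the upper bound, I would simply sum the bound of Lemma~\ref{lem:upperB1} over $k$. Using the standard identities $\sum_{k=1}^n k = \frac{n(n+1)}{2}$ and $\sum_{k=1}^n k^2 = \frac{n(n+1)(2n+1)}{6}$, one computes
\[
  \sum_{k=1}^n \bigl(k(n+1) - k^2 - 1\bigr) = \frac{n(n+1)^2}{2} - \frac{n(n+1)(2n+1)}{6} - n = \frac{n(n+1)(n+2)}{6} - n,
\]
which factors as $\frac{n(n-1)(n+4)}{6} = \binom{n}{2}\frac{n+4}{3}$. That $\Phi$ takes only integer values is immediate since each $\Phi_k$ is an integer.

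For the decrease at each flip, I would leverage Lemma~\ref{lem:upperB2}: every $\Phi_k$ is non-increasing, and $\Phi_k$ strictly decreases by at least $1$ whenever the flip is a $k$-flip. So it suffices to check that every flip is a $k$-flip for at least two values of $k$. A flip acts on a pair of segments $\sgt{r_i}{b}$ and $\sgt{r_j}{b'}$ with $i \neq j$; assuming without loss of generality that $i < j$, the pair is a $k$-pair exactly for $k \in \{i, i+1, \dots, j\}$, a set of $j - i + 1 \geq 2$ indices. Summing the per-$k$ decrease gives a total decrease of at least $2$ in $\Phi$.

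The argument is essentially a telescoping of the per-index statements, so there is no real obstacle beyond the arithmetic check that $\sum_{k=1}^n (k(n+1) - k^2 - 1) = \binom{n}{2}\frac{n+4}{3}$ and the observation $j - i + 1 \geq 2$. Both are routine; I would present the calculation of the constant in a single display and the decrease argument in one sentence.
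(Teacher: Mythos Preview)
Your proposal is correct and follows essentially the same approach as the paper: sum the per-index bound of Lemma~\ref{lem:upperB1} over $k$ to obtain $\binom{n}{2}\frac{n+4}{3}$, and invoke Lemma~\ref{lem:upperB2} together with the observation that any flip on $\ipair{i}{j}$ is a $k$-flip for at least the two indices $k=i$ and $k=j$. The paper only cites $k=i$ and $k=j$ explicitly, whereas you note the slightly stronger fact that all $j-i+1\ge 2$ intermediate indices work, but this is the same argument.
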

\begin{proof}
  We know that $\Phi$ takes non-negative integer values by definition and, by Lemma~\ref{lem:upperB1}, an upper bound on $\Phi$ is
  \begin{align*}
    \sum_{k=1}^{n} \left( k(n+1) -k^2 - 1 \right) 
    & = (n+1)\sum_{k=1}^{n} k - \sum_{k=1}^{n} k^2 - n \\
    & = (n+1)\frac{n(n+1)}{2} - \frac{n(n+1)(2n+1)}{6} - n \\
    & = \frac{n}{6} ( n^2 + 3n - 4 ) \\
    & = \binom{n}{2} \frac{n+4}{3}.
  \end{align*}
  
  Finally, Lemma~\ref{lem:upperB2} ensures that $\Phi$ decreases by at least $2$ units at each flip. 
  Indeed, a flip of a pair $\ipair{i}{j}$ is counted at least twice: once in $\Phi_i$ as an $i$-flip, and once in $\Phi_j$ as a $j$-flip.
\end{proof}

Theorem~\ref{thm:upperB} follows from Lemma~\ref{lem:upperB3}.

\section{Lower Bounds}
\label{sec:lowerB}

In this section, we prove the following two lower bounds. 

\begin{theorem}
  \label{thm:lowerBD}
  In the red-on-a-line case, for even $n$, $\D(n) \geq \frac{3}{2}\binom{n}{2} - \frac{n}{4}$.
\end{theorem}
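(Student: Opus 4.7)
The plan is to exhibit, for every even $n$, an explicit red-on-a-line matching together with an untangle sequence of length at least $\frac{3}{2}\binom{n}{2} - \frac{n}{4}$. The construction, a \emph{butterfly matching}, organizes the $n$ segments into $n/2$ non-convex pair-gadgets. Each \emph{butterfly} comprises two red points close together on the $x$-axis, together with two blue points, one placed very high above and one placed just above the $x$-axis, so that the blue points are markedly out of convex position with respect to those of the other butterflies. This non-convexity is the source of the improvement over the convex lower bound of $\binom{n}{2}$: it lets certain segment pairs acquire a crossing only \emph{after} another flip has been performed, so the total flip count along a single untangle sequence can exceed the initial number of crossing pairs.

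I would construct the matching inductively. Given a butterfly matching $M_{n-2}$ together with its untangle sequence, I add one butterfly gadget to obtain $M_n$, choosing the coordinates of the two new red points and the two new blue points so that (i) the two new segments cross each other and a prescribed subset of the existing segments, and (ii) the intra-butterfly flip creates exactly the ``sleeper'' crossings needed to propagate through the remainder of the matching. A short computation shows that the target bound corresponds to the recurrence $f(n) = f(n-2) + 3n - 5$ with $f(2) = 1$, so each new butterfly must contribute exactly $3n - 5$ flips: roughly $n-1$ from the direct initial crossings and roughly $2(n-1)$ from the non-convex interactions that arise only during the untangling.

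The untangle sequence would then be described in $n/2$ phases, processing the butterflies from the outermost to the innermost (or in whichever order yields the cleanest analysis). Phase $i$ first resolves the initial crossings involving the $i$-th butterfly's segments, and then performs a bubble-sort-like sweep that pushes the newly created segments through the remaining butterflies, triggering at each step one additional sleeper crossing thanks to the low blue points lying inside the appropriate triangles. Summing the contributions over $i = 1, \dots, n/2$ then recovers the claimed bound.

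The main obstacle will be choosing the blue-point coordinates precisely enough that every intended flip is legal at the moment it is performed and that no unintended crossings are created or destroyed. This amounts to a careful geometric case analysis of the interaction between a butterfly's low blue point and the segments of every other butterfly. A useful sanity check throughout will be the potential $\Phi = \sum_k \Phi_k$ of Lemmas~\ref{lem:upperB2} and~\ref{lem:upperB3}: since $\Phi$ decreases by at least $2$ per flip, an optimal construction must arrange for each flip of a pair $\langle i,j\rangle$ to decrease only $\Phi_i$ and $\Phi_j$, and each by exactly one, providing a local certificate that the sleeper crossings created by the butterflies are being ``spent'' as efficiently as possible.
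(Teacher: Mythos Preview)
Your plan diverges substantially from the paper's construction, and in its present form it is too vague to constitute a proof. You propose $n/2$ local two-segment ``butterfly'' gadgets added inductively, with each new gadget contributing $3n-5$ flips via ``sleeper'' crossings. But you never give the coordinates, you never specify which segments the new gadget crosses initially nor which crossings appear later, and your own rough accounting ($n-1$ direct plus $2(n-1)$ non-convex) sums to $3n-3$, not $3n-5$. You explicitly flag the geometric realizability --- that every intended flip is legal when performed and no stray crossings arise --- as ``the main obstacle''; that is precisely the content of the proof, and it is absent here. Without an explicit point set and an explicit flip sequence, there is nothing to verify.

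The paper's construction is both different and much more concrete. Its $m$-butterfly (with $n=2m$) is \emph{not} $m$ small gadgets: it is two $m$-stars (each an $m$-segment convex matching with all $\binom{m}{2}$ pairs crossing) placed so that the two stars also fully cross one another, with explicit coordinates given. The untangle sequence has two phases. Phase~1 untangles each star separately using the standard bubble-sort argument (Lemma~\ref{lem:star}), costing $2\binom{m}{2}$ flips. Phase~2 repeats $m$ times: flip the pair on the innermost red points ($1$ flip), then sweep the resulting long segment through each side ($2(m-1)$ flips). The total is $2\binom{m}{2}+m(2m-1)=3m^2-2m=\frac{3}{2}\binom{2m}{2}-\frac{m}{2}$. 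Every flip in this sequence is described explicitly, and its validity is checkable from the given coordinates.

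If you want to salvage your inductive-gadget idea, you would at minimum need to write down actual coordinates for one added gadget, list the $3n-5$ flips it contributes, and argue that the existing $M_{n-2}$ sequence still executes unchanged inside $M_n$. The paper's global two-star approach sidesteps this bookkeeping entirely.
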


\begin{theorem}
  \label{thm:lowerBd}
  In the convex case, for even $n$, $\dd(n) \geq \frac{3n}{2} - 2$. 
\end{theorem}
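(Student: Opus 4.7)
The plan is to exhibit an explicit convex matching $M^\star$ on $n$ red and $n$ blue points (with $n$ even), which we call the \emph{fence matching}, together with a potential function $\phi$ that lower-bounds the number of flips in any untangle sequence starting at $M^\star$. Since the points are in convex position, the unique crossing-free matching $M_0$ on the same colored point set is determined by the nesting structure, and $M^\star$ will differ from $M_0$ in a carefully controlled way.

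First I would specify $M^\star$. Place the $2n$ points on a convex arc in a colored pattern reminiscent of the pickets of a fence (hence the name of Section~\ref{sec:fenceLowerB}) and pair them so that the resulting segments form a zigzag with many crossings that cannot be resolved in parallel by a single flip. The design goal is that each flip untangles only one ``picket'', producing a matching that still disagrees with $M_0$ in many local places. Second, I would define the potential $\phi(M)$ as a weighted count combining a crossing-type term with a ``displacement'' term measuring how far each segment of $M$ sits from its counterpart in $M_0$; the weights are tuned so that $\phi(M_0) = 0$, $\phi(M^\star) \geq \frac{3n}{2}-2$, and $\phi$ is non-negative throughout.

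The third and main step is the per-flip bound: every flip decreases $\phi$ by at most $1$. A flip exchanges two crossing segments for two non-crossing ones on the same four endpoints; in convex position, this change is rigid, so the change in $\phi$ can be read off from the four endpoints' cyclic order on the convex hull. Combining the three ingredients, any untangle sequence from $M^\star$ has length at least $\phi(M^\star) \geq \frac{3n}{2} - 2$, which establishes Theorem~\ref{thm:lowerBd}. To pass from even $n$ to the final constant $\tfrac{3n}{2}-2$ (rather than, say, $\tfrac{3n}{2}$), the construction will need two ``boundary pickets'' of the fence that each save one unit in the counting.

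The hard part will be the per-flip analysis. The potential $\phi$ combines local contributions from several segments, so a single flip could a priori affect many of them; one must rule out ``double-win'' flips that simultaneously resolve two independent pieces of the fence. This will be handled by a case analysis on the position of the flipped pair within $M^\star$, using the zigzag structure to show that any such shortcut flip would require two crossings that the fence construction deliberately keeps apart. A secondary subtlety is ensuring that the displacement term of $\phi$ is well-defined away from $M^\star$ (i.e.\ for arbitrary intermediate matchings in the untangle sequence), which is where the uniqueness of the crossing-free convex matching $M_0$ becomes essential.
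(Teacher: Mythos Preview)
Your proposal rests on a false premise: the crossing-free red-blue matching on a convex point set is \emph{not} unique. For points in cyclic order $r_1, b_1, r_2, b_2, r_3, b_3$, both $\{r_ib_i\}_i$ and $\{r_ib_{i-1}\}_i$ (indices mod $3$) are crossing-free. You define the displacement term of $\phi$ relative to a single target $M_0$ and explicitly say its well-definedness ``is where the uniqueness of the crossing-free convex matching $M_0$ becomes essential''; without uniqueness, the potential is ill-defined on intermediate matchings, and the argument collapses. You could attempt a repair by proving that every untangle sequence from your fence ends at the same sink, but that is additional work you have not outlined, and it is not obvious.

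More to the point, the displacement term is unnecessary. The paper's potential is simply the number of crossings, which vanishes at \emph{every} crossing-free matching, so no target $M_0$ is needed. The $m$-fence ($n=2m$) is constructed with exactly $3m-2$ crossings, and the whole proof reduces to showing that every flip removes \emph{exactly one} crossing. The paper achieves this via a structural invariant rather than a weighted potential: it defines a family of ``derived $m$-fences'' (matchings obeying a local column-adjacency condition), proves that every flip stays inside this family (Lemma~\ref{lem:flipStableDerivedFence}), classifies the crossings of a derived fence into two types (Lemma~\ref{lem:crossingsDerivedFence}), and then checks that for either type no third segment $s$ can sit in the single ``crossing-destructive'' convex configuration relative to the flipping pair (the one case among the five in \Figure~\ref{fig:5ConvexConfigurations} where a flip kills two crossings at once). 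Your intuition that the hard step is ruling out ``double-win'' flips is exactly right, but the mechanism is this invariant plus a five-case convex analysis, not a composite potential.
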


To prove Theorem~\ref{thm:lowerBD}, it suffices to present a long untangle sequence. The initial matching of the sequence is represented in \Figure~\ref{fig:butterfly}(a).
To prove Theorem~\ref{thm:lowerBd}, we need to show that every untangle sequence starting at a given configuration (represented in \Figure~\ref{fig:butterfly}(b)) is long enough. We do so by showing that every flip reduces the number of crossings by exactly one unit.

\begin{figure}[!htb]
  \centering
  \quad(a) \includegraphics[scale=\graphicsScale,page=1]{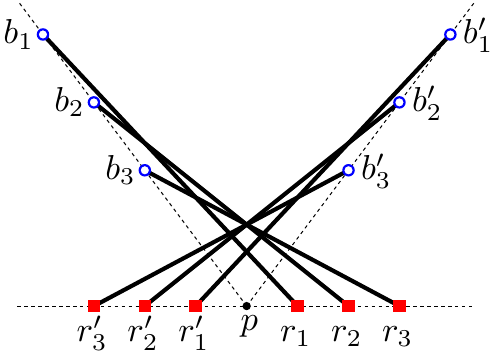}\qquad\qquad 
  (b) \includegraphics[page=4,scale=\graphicsScale]{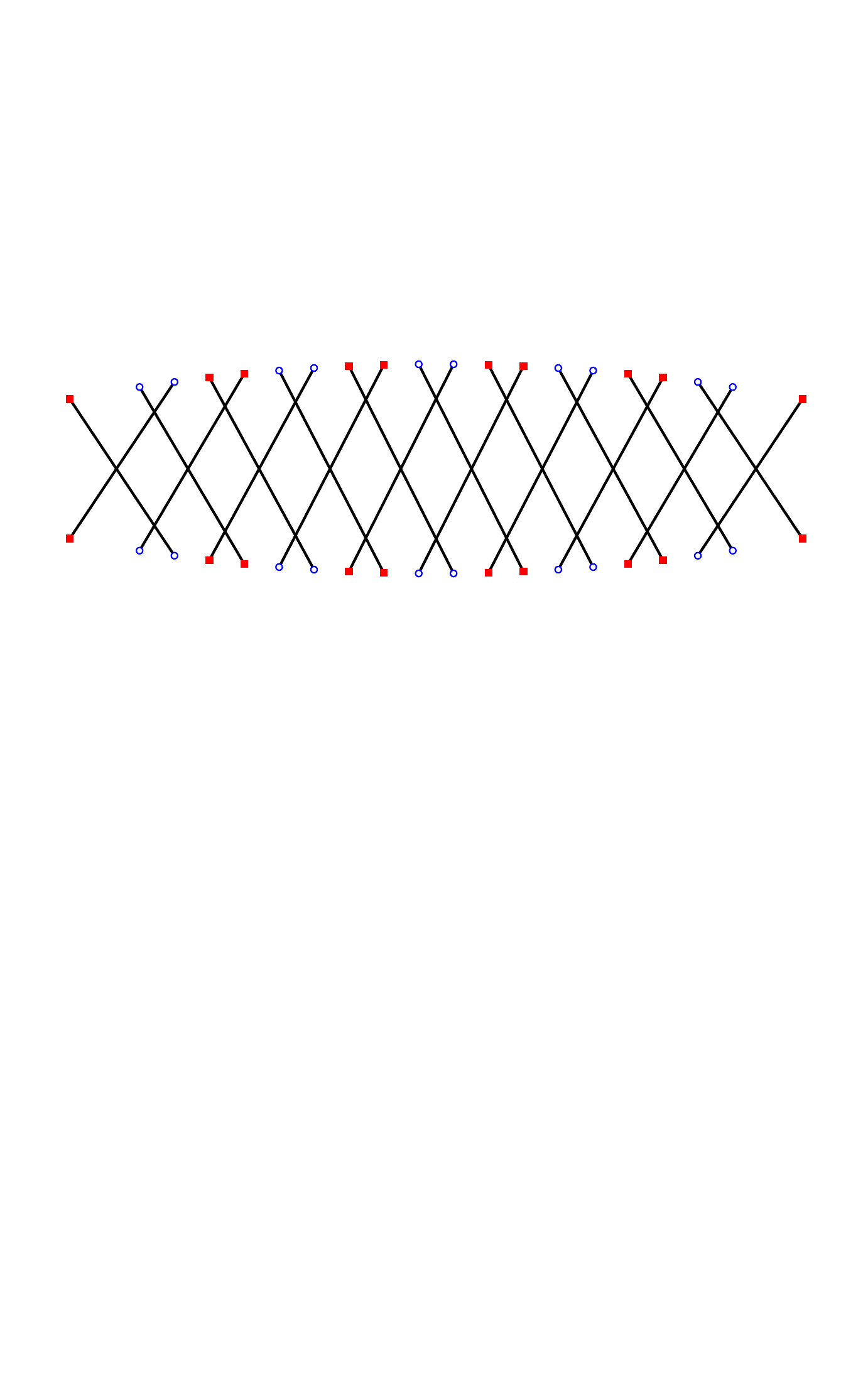}\\
  \caption{(a) A $3$-butterfly to lower bound $\D(6)$. (b) A $5$-fence to lower bound $\dd(10)$.}
  \label{fig:butterfly}
  \label{fig:fence}
\end{figure}

\subsection{Lower Bound on \texorpdfstring{$D(n)$}{D(n)}}
\label{sec:butterflyLowerB}

We provide a $2m$-segment red-on-a-line matching which we call an \emph{$m$-butterfly}. 
There exists an untangle sequence starting at an $m$-butterfly of length $\frac{3}{2}\binom{2m}{2} - \frac{m}{2}$. 
Next, we give the precise definition of an $m$-butterfly  and some of its important properties.
Then, we give some intuition of how to come up with an untangle sequence longer than the number of pairs of segments. 
Finally, we prove that there exists an untangle sequence starting at an $m$-butterfly of length $\frac{3}{2}\binom{2m}{2} - \frac{m}{2}$ with two lemmas. 

\paragraph{Butterfly.}

For an integer $m$, we define an \emph{$m$-butterfly} as the following matching with $n=2m$ segments.
For $i$ from $1$ to $m$ we have red points $r_i = (i/(m+1),0)$ and $r'_i = (-i/(m+1),0)$ as well as blue points $b_i = (i-(m+1),(m+1)-i)$ and $b'_i = ((m+1)-i,(m+1)-i)$. We match $r_i$ to $b_i$ and $r'_i$ to $b'_i$. 
Next, we discuss important properties of an $m$-butterfly.

We call a red-on-a-line \emph{convex} matching an $n$-\emph{star} if all the $\binom{n}{2}$ pairs of segments cross. 
We say that an $n$-star \emph{looks} at a point $p$ if the blue points are all on a common line, and if $p$ is the intersection of this line with the line of the red points.
We also say that two red-blue point sets $\R,\B$ and $\R',\B'$ are \emph{fully crossing} if all the pairs of segments of the form $\{\sgt{r}{b},\sgt{r'}{b'}\}$ cross, where $(r,b,r',b') \in \R \times \B \times \R' \times \B'$.
Two matchings are fully crossing if their underlying red-blue point sets are fully crossing.
An $m$-butterfly is a red-on-a-line matching consisting of two fully crossing $m$-stars both looking at the same point $p = (0,0)$ (\Figure~\ref{fig:butterfly}(a) represents these properties but it is not drawn to scale). 

\paragraph{Intuition.}

In the following, we use the state tracking framework from Section~\ref{sec:algo} to describe how to come up with an untangle sequence starting at an $m$-butterfly with more than $\binom{2m}{2}$ flips. 
We consider a sequence of tracking choices with no $\HH \to \X$ transition (Lemma~\ref{lem:tracking}) for the long untangle sequence we build. 
We take advantage of the non-convex position of the blue points to create flip situations such as in Fig.~\ref{fig:HT}(a), where an $\HH$-pair is turned into a $\T$-pair. 

For instance, let us consider an $\X$-pair of one of the $m$-stars composing the $m$-butterfly.
At some point of the untangle sequence, we flip this $\X$-pair, turning it into an $\HH$-pair.
Later on, we turn this $\HH$-pair into a $\T$-pair, as in Fig.~\ref{fig:HT}(a).
Still later on, we turn this $\T$-pair into an $\X$-pair again, similarly to the pairs involving the horizontal segment in Fig.~\ref{fig:flip}.
This $\X$-pair will be flipped again.

We manage to carry out this whole process to flip twice all the $2 \binom{m}{2}$ pairs of the two $m$-stars composing the $m$-butterfly while still having one flip for every other pair. 
In total, we reach $\frac{3}{2}\binom{2m}{2} - \frac{m}{2}$ flips. 

\paragraph{Proof of Theorem~\ref{thm:lowerBD}.}
\label{app:lowerBD}

We prove Theorem~\ref{thm:lowerBD} with two lemmas, showing that there exists an untangle sequence of length $\frac{3}{2}\binom{2m}{2} - \frac{m}{2}$, starting at an $m$-butterfly.

\begin{lemma}[\cite{BoM16}]
  \label{lem:star}
  There exists an untangle sequence starting at any $n$-star of length $\binom{n}{2}$.
\end{lemma}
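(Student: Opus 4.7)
The plan is to realize the untangling as a bubble sort on the permutation induced by the matching. Label the red points $r_1, \ldots, r_n$ from left to right, and, since the blue points of an $n$-star lie on a common line through the focal point, label them $b'_1, \ldots, b'_n$ along that line. Any matching of these $2n$ points then corresponds to a permutation $\sigma$, with $r_i$ matched to $b'_{\sigma(i)}$. Because the endpoints of any pair are in convex position (red on one line, blue on another, both lines meeting at the focal point), two segments $\sgt{r_i}{b'_{\sigma(i)}}$ and $\sgt{r_j}{b'_{\sigma(j)}}$ cross if and only if $(i,j)$ is an inversion of $\sigma$. In particular, an $n$-star, in which every pair crosses, corresponds to the reverse permutation $\sigma_0 = (n, n-1, \ldots, 1)$, which has exactly $\binom{n}{2}$ inversions. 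A flip of a crossing pair transposes the two corresponding values of $\sigma$.

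The strategy is to restrict the untangle sequence to \emph{adjacent} flips, i.e., flips of pairs whose red endpoints are consecutive, $r_i$ and $r_{i+1}$. Such a flip is available precisely when $\sigma(i) > \sigma(i+1)$, and any non-identity permutation admits at least one adjacent descent. When we swap $\sigma(i)$ and $\sigma(i+1)$, the multiset $\{\sigma(i),\sigma(i+1)\}$ is preserved, so for every position $k \notin \{i, i+1\}$ the number of inversions among the pairs involving $k$ and one of $\{i, i+1\}$ is unchanged. The only pair whose inversion status changes is $(i, i+1)$ itself, which was an inversion before and is not one after. Hence each adjacent flip reduces the number of crossings by exactly one.

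Starting from $\sigma_0$ and repeatedly performing adjacent flips (always possible as long as $\sigma \neq \mathrm{id}$, by the standard adjacent-descent fact), we reach the identity permutation---equivalently, the crossing-free matching---after exactly $\binom{n}{2}$ flips. The only subtle point is the persistence of an adjacent descent at each intermediate step, which follows from the same classical observation, so no step poses a genuine obstacle.
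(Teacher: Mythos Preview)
Your argument is essentially the paper's bubble-sort proof: encode the convex matching as a permutation, identify crossings with inversions, and sort by adjacent transpositions. One small correction: by the paper's definition an $n$-star need not have its blue points on a common line (that is the separate ``looks at a point'' property), so instead of ordering the blue points ``along that line'' you should order them along the convex hull; with that fix your inversion--crossing correspondence and the rest of the argument go through unchanged.
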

\begin{proof}
  This result has been shown in~\cite{BoM16}.
  We present a short proof for the sake of completeness.
  
  Provided we number from $1$ to $n$ the red points in their convex hull counter-clockwise order, and do the same for the blue points but clock-wise, then a red-on-a-line convex matching can be seen as a way to draw a permutation of $n$ elements.
  An inversion, then, corresponds to a crossing. 
  A bubble sort, thus, corresponds to an untangle sequence starting at such a matching. 
  
  The case of an $n$-star leads to $\binom{n}{2}$ inversion swaps, or, in other words, flips. 
\end{proof}

\begin{lemma}
  \label{lem:lowerB}
  There exists an untangle sequence starting at any $m$-butterfly of length $\frac{3}{2}\binom{2m}{2} - \frac{m}{2}$ (\Figure~\ref{fig:butterflyPart2} and \Figure~\ref{fig:butterflyPart3}).
\end{lemma}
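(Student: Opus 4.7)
My plan is to present an explicit untangle sequence of the claimed length $3m^2 - 2m$ starting at an $m$-butterfly and verify it step by step. Writing the target as $m^2 + 4\binom{m}{2}$ suggests the combinatorial shape of the construction: each of the $m^2$ cross-star segment pairs should be flipped exactly once, while each of the $2\binom{m}{2}$ within-star pairs should be flipped exactly twice. This is consistent with the intuition already given, in which a within-star $\X$-pair is first flipped to $\HH$, later coerced to $\T$ by a transition of the form shown in Figure~\ref{fig:HT}(a), later still regenerated as an $\X$-pair, and finally flipped a second time.

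I would build the sequence in three phases. In Phase~1, I untangle each star independently by the bubble-sort-style scheme from the proof of Lemma~\ref{lem:star}, using $\binom{m}{2}$ within-star flips per star for $2\binom{m}{2}$ in total; after this phase every within-star pair is in state $\HH$ while every cross-star pair is still in state $\X$ (the two stars are fully crossing to begin with, and within-star flips do not destroy any cross-star crossing). In Phase~2, I perform the $m^2$ cross-star flips in a carefully chosen order. Because both stars look at the origin and their blue points therefore lie on lines through that point, each cross-star flip sits in the non-convex geometry of Figure~\ref{fig:HT}(a), so it can be scheduled to push one within-star pair $\HH \to \T$, with a later cross-star flip pushing that same pair $\T \to \X$. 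I order Phase~2 so that, at its end, every within-star pair has returned to state $\X$ and every cross-star pair is $\HH$. Phase~3 then re-untangles each star by bubble sort, using another $\binom{m}{2}$ within-star flips per star. The total count is $2\binom{m}{2} + m^2 + 2\binom{m}{2} = m^2 + 4\binom{m}{2} = 3m^2 - 2m = \frac{3}{2}\binom{2m}{2} - \frac{m}{2}$.

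The main obstacle is Phase~2: I must commit to a concrete order of cross-star flips and verify at every intermediate matching that the designated pair of segments actually crosses, that the $\HH \to \T \to \X$ cascade lands on exactly the intended within-star pairs, and that no unintended extra crossings are created or existing crossings accidentally resolved. I expect to carry out this verification by an induction on $m$, peeling off a symmetric pair of outermost segments (those incident to the blue points closest to the line $\lineT{b_1}{b_m}$ or the line $\lineT{b'_1}{b'_m}$) so that the induction hypothesis applies to the remaining $(m-1)$-butterfly, and then inserting the $6m-1$ extra flips that these two outermost segments contribute between the phases. Figures~\ref{fig:butterflyPart2} and~\ref{fig:butterflyPart3} mentioned in the lemma statement presumably illustrate this inductive step and make the ordering explicit.
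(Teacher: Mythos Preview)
Your Phase~2 is infeasible, already for $m=2$. After Phase~1 every segment is ``long'' (a right red matched to a left blue or vice versa), and a cross-star flip of two long segments produces two ``short'' segments (each red matched to a blue on its own side). In the $2$-butterfly one checks directly that a cross-star pair made of one short and one long segment is never in state $\X$ (it is in state $\T$), so every available cross-star flip is long--long. Consequently, after exactly two cross-star flips all four segments are short, no cross-star pair crosses, and Phase~2 halts with $2$ flips instead of $m^2=4$. The best your three phases then achieve is $2+2+2=6$ flips, not the required $8$. So the combinatorial decomposition you propose---each of the $m^2$ cross-star pairs flipped once, each of the $2\binom{m}{2}$ within-star pairs flipped twice---is not realized by a pure cross-star Phase~2, and the inductive verification you sketch cannot get off the ground.

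The paper's sequence has a different shape and, crucially, interleaves the two kinds of flips. Its second phase consists of $m$ rounds; each round performs a \emph{single} cross-star flip of the innermost pair $(r_1,r_1')$, followed by $m-1$ flips among the left red points and $m-1$ flips among the right red points to re-untangle the two halves. These $2(m-1)$ ``within'' flips are what regenerate the crossing at $(r_1,r_1')$ so that the next round can start. In the resulting sequence the pair $(r_1,r_1')$ is flipped $m$ times and no other cross-star pair is ever flipped, so the flip multiplicities are quite different from what you conjectured. The count is $2\binom{m}{2}+m\bigl(1+2(m-1)\bigr)=3m^2-2m$.
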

\begin{proof}
  The untangle sequence can be divided into two phases.
  
  The first phase consists of (i) $\binom{m}{2}$ flips applied to the $m$-star submatching defined by the $m$ leftmost red points (see Lemma~\ref{lem:star}, and \Figure~\ref{fig:butterflyPart2}, steps 0 to 3), and of (ii) $\binom{m}{2}$ more flips applied to the $m$ rightmost red points (\Figure~\ref{fig:butterflyPart2}, steps 3 to 6). 
  At this point, we have two sets of $m$ crossing-free segments, each set fully crossing the other.
  
  The second phase repeats $m$ times the following routine.
  \begin{enumerate}
  \item Flip the segments defined by the innermost red points $r_1$ and $r_1'$ (\Figure~\ref{fig:butterflyPart2} and~\ref{fig:butterflyPart3}, steps 6 to 7, 11 to 12, and 16 to 17). 
  After this flip, the submatching defined by the $m$ leftmost red points and their matched points consists of $m-1$ crossing-free segments intersected by the segment from $r_1'$. A similar statement holds for the submatching defined by the $m$ rightmost red points.
  \item Untangle the submatching defined by the $m$ leftmost red points with $m-1$ flips in the following manner (\Figure~\ref{fig:butterflyPart2} and~\ref{fig:butterflyPart3}, steps 7 to 9, 12 to 14, and 17 to 19). 
  Flip of the two crossing segments with the rightmost red points, say $r_k'$ and $r_{k+1}'$ with $k \in \{1, \dots, m-1\}$, and repeat. 
  Such a flip produces a segment from $r_{k+1}'$ crossing the segments whose red points are on the left of $r_{k+1}$, and an other segment from $r_k'$ crossing none of segments of the submatching. 
  The number of crossings in the submatching decreases by $1$ unit at each flip.
  \item Similarly, untangle the $m$ rightmost red points with $m-1$ more flips (\Figure~\ref{fig:butterflyPart2} and~\ref{fig:butterflyPart3}, steps 9 to 11, 14 to 16, and 19 to 21).
  \end{enumerate}
  Each loop decreases the number of ``long'' segments (i.e., segments joining one of the leftmost red points to one of the rightmost blue points, or vice-versa) by $2$.
  At the end of the process, the left submatching is crossing-free; so is the right one; and the two of them do not intersect anymore.
  
  Summing up, the total number of flips is $2 \binom{m}{2} + 2 m (m-1) + m$. Simple calculation yields the lemma.
\end{proof}

Theorem~\ref{thm:lowerBD} follows from Lemma~\ref{lem:lowerB}.

\begin{figure}[p]
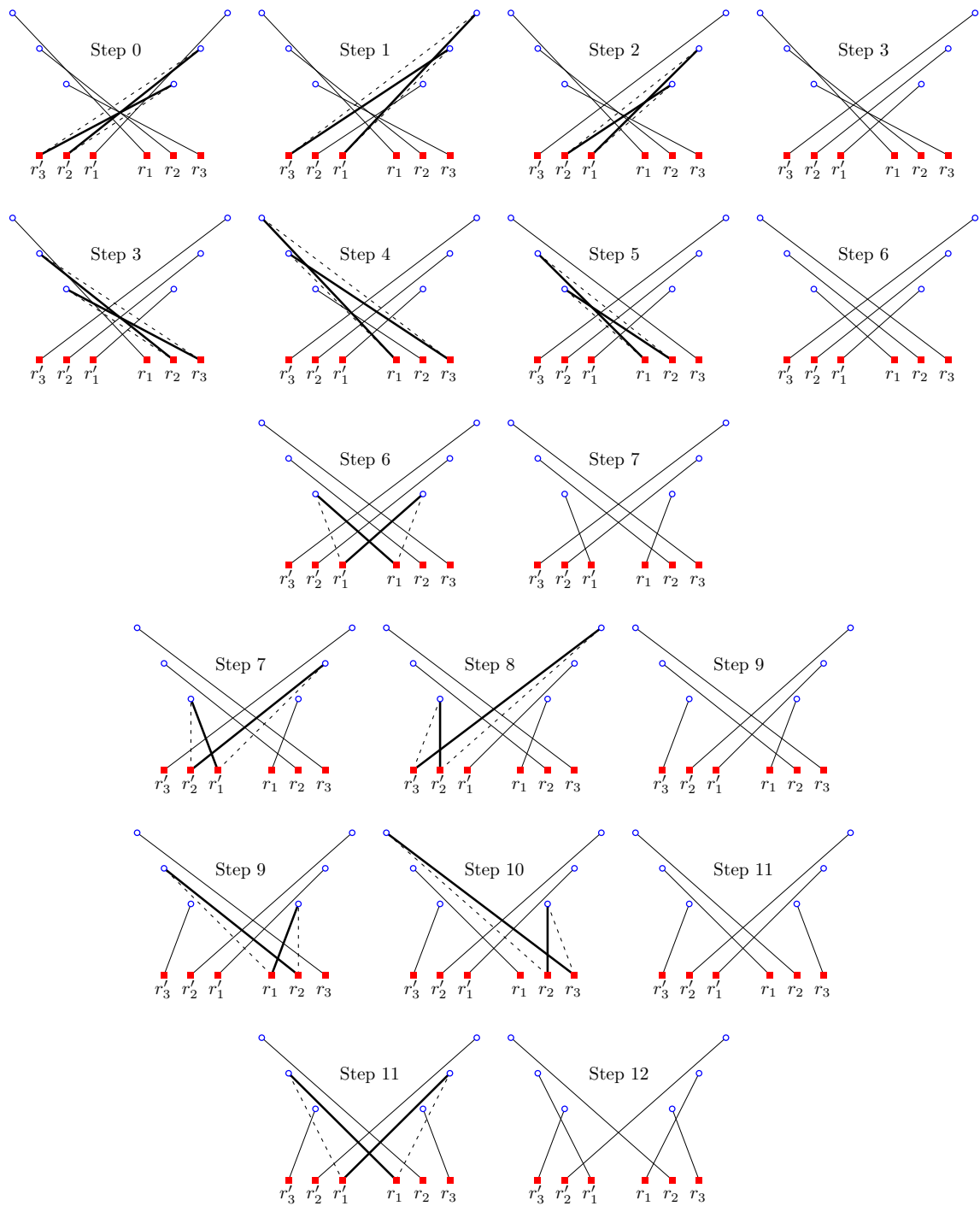

  \centering
  \includegraphics[scale=\butterflyScale,page=2]{butterfly3}\quad
  \includegraphics[scale=\butterflyScale,page=3]{butterfly3}\quad
  \includegraphics[scale=\butterflyScale,page=4]{butterfly3}\quad
  \includegraphics[scale=\butterflyScale,page=5]{butterfly3}\\\vspace{\baselineskip}
  \includegraphics[scale=\butterflyScale,page=6]{butterfly3}\quad
  \includegraphics[scale=\butterflyScale,page=7]{butterfly3}\quad
  \includegraphics[scale=\butterflyScale,page=8]{butterfly3}\quad
  \includegraphics[scale=\butterflyScale,page=9]{butterfly3}\\\vspace{\baselineskip}
  \includegraphics[scale=\butterflyScale,page=10]{butterfly3}\quad
  \includegraphics[scale=\butterflyScale,page=11]{butterfly3}\\\vspace{\baselineskip}
  \includegraphics[scale=\butterflyScale,page=12]{butterfly3}\quad
  \includegraphics[scale=\butterflyScale,page=13]{butterfly3}\quad
  \includegraphics[scale=\butterflyScale,page=14]{butterfly3}\\\vspace{\baselineskip}
  \includegraphics[scale=\butterflyScale,page=15]{butterfly3}\quad
  \includegraphics[scale=\butterflyScale,page=16]{butterfly3}\quad
  \includegraphics[scale=\butterflyScale,page=17]{butterfly3}\\\vspace{\baselineskip}
  \includegraphics[scale=\butterflyScale,page=18]{butterfly3}\quad
  \includegraphics[scale=\butterflyScale,page=19]{butterfly3}
  \caption{First part of an untangle sequence starting at a $3$-butterfly of length $21$ illustrating Lemma~\ref{lem:lowerB} and its proof. Each line corresponds to a portion of the proof, with repetitions added for clarity.} 
  \label{fig:butterflyPart2}
\end{figure}
\begin{figure}[p]
  \centering
  \includegraphics[scale=\butterflyScale,page=20]{butterfly3}\quad
  \includegraphics[scale=\butterflyScale,page=21]{butterfly3}\quad
  \includegraphics[scale=\butterflyScale,page=22]{butterfly3}\\\vspace{\baselineskip}
  \includegraphics[scale=\butterflyScale,page=23]{butterfly3}\quad
  \includegraphics[scale=\butterflyScale,page=24]{butterfly3}\quad
  \includegraphics[scale=\butterflyScale,page=25]{butterfly3}\\\vspace{\baselineskip}
  \includegraphics[scale=\butterflyScale,page=26]{butterfly3}\quad
  \includegraphics[scale=\butterflyScale,page=27]{butterfly3}\\\vspace{\baselineskip}
  \includegraphics[scale=\butterflyScale,page=28]{butterfly3}\quad
  \includegraphics[scale=\butterflyScale,page=29]{butterfly3}\quad
  \includegraphics[scale=\butterflyScale,page=30]{butterfly3}\\\vspace{\baselineskip}
  \includegraphics[scale=\butterflyScale,page=31]{butterfly3}\quad
  \includegraphics[scale=\butterflyScale,page=32]{butterfly3}\quad
  \includegraphics[scale=\butterflyScale,page=33]{butterfly3}
  \caption{Second part of an untangle sequence starting at a $3$-butterfly of length $21$ illustrating Lemma~\ref{lem:lowerB} and its proof. Each line corresponds to a portion of the proof, with repetitions added for clarity.} 
  \label{fig:butterflyPart3}
\end{figure}

\subsection{Lower Bound on \texorpdfstring{$\dd(n)$}{d(n)}}
\label{sec:fenceLowerB}

We provide a convex red-blue matching which we call an \emph{$m$-fence}, with $2m$ segments and $3 m - 2$ crossings (\Figure~\ref{fig:fence}(b)). 
Next, we give the precise definition of an $m$-fence, together with some useful terminology.  
Then, we prove Theorem~\ref{thm:lowerBd} with three lemmas inferring that all untangle sequences starting at an $m$-fence have length $3 m - 2$, that is, each flip reduces the number of crossings by exactly one unit. 

\paragraph{Fence.}

Let $q_{2m+2}$, $q_{2m}$, $q_{2m-1}, \dots$, $q_4$, $q_3$, $q_1$, $p_1$, $p_3$, $p_4, \dots$, $p_{2m-1}$, $p_{2m}$, $p_{2m+2}$ be $4m$ points in convex position, ordered counter-clockwise, and with colors alternating every two points (\Figure~\ref{fig:fence}(b)). 
More precisely, points $p_i,q_i$ are red if $i \equiv 1,2 \mod 4$ and blue otherwise.
We deliberately avoid using the indices $2$ and $2m+1$ to simplify the description.
The segments of an $m$-fence are the $\sgt{p_i}{q_{i+3}}$ and the $\sgt{q_i}{p_{i+3}}$ where $i$ is odd and varies between $1$ and $2m-1$.

For $1 \leq k \leq m+1$, the \emph{$k$-th column} consists of the at most $4$ points with indices $2k-1$ and $2k$.
We say that a convex red-blue matching with the same point set as an $m$-fence is a \emph{derived $m$-fence} if,
for all $k \in \{2, \dots , m\}$, for all $w \in \{p, q\}$, one of the following statements holds:
\begin{enumerate}
    \item \label{item:v} $w_{2k-1}$ is matched to a point of the $(k-1)$-th column, and $w_{2k}$ is matched to a point of the $(k+1)$-th column, or
    \item \label{item:x} $w_{2k-1}$ is matched to a point of the $(k+1)$-th column, and $w_{2k}$ is matched to a point of the $(k-1)$-th column.
\end{enumerate}

Five examples of derived $m$-fences are presented in \Figure~\ref{fig:derivedFence}.
Note that an $m$-fence is in particular a derived $m$-fence. 

When statement~\ref{item:x} holds, the two segments cross.
We call such a crossing an \emph{end} crossing.
Similarly, a \emph{middle} crossing is a crossing of the form $\{\sgt{p_i}{q_j},\ \sgt{q_{i'}}{p_{j'}}\}$, where $i$ and $i'$ are of the same column, and $j$ and $j'$ are of the same column.

\paragraph{Proof of Theorem~\ref{thm:lowerBd}.}

To prove Theorem~\ref{thm:lowerBd}, we first show with two lemmas that a flip changes a derived $m$-fence into another derived $m$-fence. Finally, we show that a flip of a derived $m$-fence reduces its number of crossings by exactly one unit.

\begin{figure}[!ht]
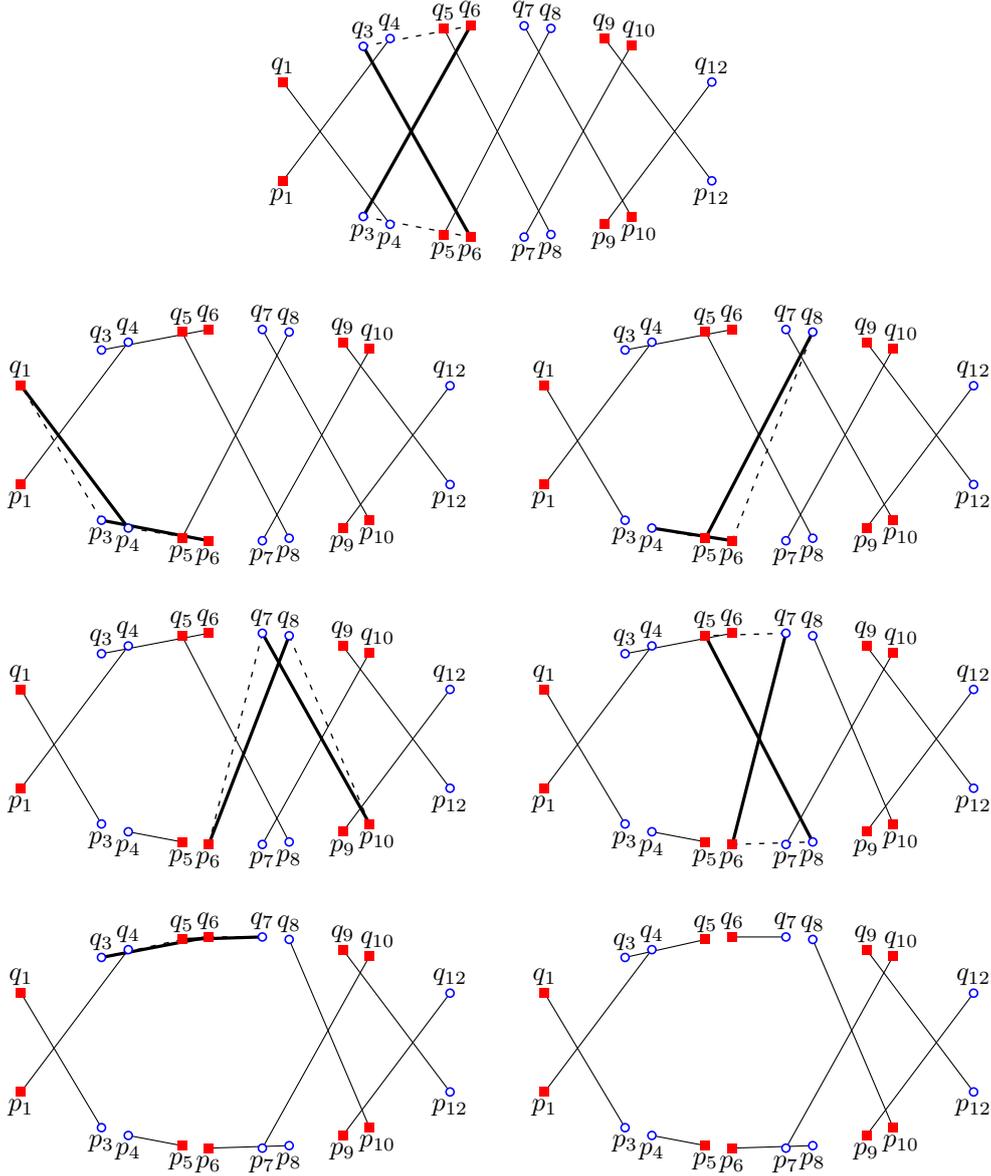

  \centering
  \includegraphics[page=5,scale=\graphicsScale]{fence}\\\vspace{\baselineskip}
  \includegraphics[page=6,scale=\graphicsScale]{fence}\qquad
  \includegraphics[page=7,scale=\graphicsScale]{fence}\\ \vspace{\baselineskip}
  \includegraphics[page=8,scale=\graphicsScale]{fence}\qquad
  \includegraphics[page=9,scale=\graphicsScale]{fence}\\\vspace{\baselineskip}
  \includegraphics[page=10,scale=\graphicsScale]{fence}\qquad
  \includegraphics[page=11,scale=\graphicsScale]{fence}
  \caption{The beginning of an untangle sequence starting at a $5$-fence. It is composed of derived $5$-fences.}
  \label{fig:derivedFence}
\end{figure}

\begin{lemma}
   \label{lem:crossingsDerivedFence}
   A crossing in a derived $m$-fence is either an end crossing or a middle crossing.
\end{lemma}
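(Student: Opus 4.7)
The plan is to classify every crossing pair of segments by which columns the two segments span, and then verify that the only surviving crossings are end crossings and middle crossings.

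The first step is to establish that every segment of a derived $m$-fence connects two adjacent columns. The four points of each middle column $k \in \{2,\dots,m\}$ satisfy this directly from the defining conditions. For the two points in the extremal columns $1$ and $m+1$, a short counting argument applied to the conditions at $k=2$ and $k=m$ shows that their matches must also lie in the adjacent column ($2$ and $m$ respectively), since exactly one $p$-endpoint and one $q$-endpoint from column $2$ (resp.\ $m$) is sent to column $1$ (resp.\ $m+1$). Armed with this observation, I can let $s$ join columns $a, a+1$ and $s'$ join columns $b, b+1$ with $a \leq b$, and distinguish three regimes according to $b-a$.

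For $b \geq a+2$, I would use the explicit counter-clockwise order of the convex hull --- going down the $q$-side from column $m+1$ to $1$ and then up the $p$-side from $1$ back to $m+1$ --- to show that all four endpoints of $s$ lie strictly inside one of the two arcs cut off by the endpoints of $s'$, so the pair fails to interleave and does not cross. For $b = a$ (same interface), the two segments use one $p$- and one $q$-endpoint from each of columns $a$ and $a+1$, and the only two possible pairings produce either a $p$-$p$ segment together with a $q$-$q$ segment, which sit on opposite sides of the polygon and do not cross, or two segments of the form $\sgt{p_i}{q_j}$ and $\sgt{q_{i'}}{p_{j'}}$ with $i, i'$ in column $a$ and $j, j'$ in column $a+1$, which is exactly a middle crossing. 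For $b = a+1$ (adjacent interfaces) the segments share one column $k$; if the two endpoints they contribute to column $k$ are of the same type (both $p$ or both $q$), then by the defining constraint they are $w_{2k-1}$ and $w_{2k}$ going to columns $k+1$ and $k-1$ respectively, which is configuration~\ref{item:x}, and the crossing is by definition an end crossing.

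The main obstacle, and the case I expect to need the most care, is the remaining adjacent-interface sub-case in which the two column-$k$ endpoints lie on opposite sides of the polygon (one $p$, one $q$). I plan to dispatch this by a finite exhaustive check: the two far endpoints in columns $k-1$ and $k+1$ each have four possible positions, and for each of the resulting sixteen configurations I would read the cyclic positions of the four endpoints directly from the explicit CCW order of the convex hull and verify that the two chords are nested rather than interleaved. Once this no-crossing check is complete, the three regimes above are exhaustive and every crossing is shown to be either an end crossing or a middle crossing, as claimed.
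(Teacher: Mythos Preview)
Your decomposition into the cases $b\ge a+2$, $b=a$, and $b=a+1$ is exactly the paper's argument: the paper just phrases it as ``a crossing must involve two or three consecutive columns'' and then dispatches the two-column and three-column cases by appeal to the definition, without writing out the interleaving checks you propose. So the route is the same; you are simply making explicit what the paper leaves implicit.

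There is one slip to fix in the $b=a+1$, same-type sub-case. You write that if both column-$k$ endpoints are of the same type $w$, then ``by the defining constraint they are $w_{2k-1}$ and $w_{2k}$ going to columns $k+1$ and $k-1$ respectively, which is configuration~\ref{item:x}''. That is not what the definition says: configuration~\ref{item:v} also has both endpoints of the same type, with $w_{2k-1}$ going to column $k-1$ and $w_{2k}$ to column $k+1$. You therefore still owe the (easy) check that configuration~\ref{item:v} produces no crossing between the two segments, before you can conclude that any crossing in this sub-case is an end crossing. This is the same kind of interleaving check you already plan for the opposite-type sub-case, so it folds into your exhaustive verification with no new ideas. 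A smaller point: your ``sixteen configurations'' undercounts the opposite-type check, since each of the two column-$k$ endpoints also has two possible indices; the enumeration is larger but still finite, and every case indeed yields nested rather than interleaved chords.
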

\begin{proof}
  The definition of a derived $m$-fence implies that a crossing must involve two or three consecutive columns. 
  If exactly three columns are involved, the same definition excludes any crossing aside from the end crossings.
  If exactly two columns are involved, the definition again excludes any crossing aside from the middle crossings.
\end{proof}

\begin{lemma}
   \label{lem:flipStableDerivedFence}
   A flip changes a derived $m$-fence into another derived $m$-fence.
\end{lemma}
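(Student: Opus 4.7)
My plan is to dispatch on the type of crossing being flipped, using the dichotomy provided by Lemma~\ref{lem:crossingsDerivedFence}: every crossing in a derived $m$-fence is either an end crossing or a middle crossing. For each of the two cases I will identify the unique non-crossing rematching of the four endpoints and then verify, column-letter pair by column-letter pair, that the defining conditions~\ref{item:v} and~\ref{item:x} still hold for the resulting matching.

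The driving observation throughout is that each column is monochromatic: column $k$ is entirely red if $k$ is odd and entirely blue if $k$ is even. Consequently, any rematching of the four endpoints of the flipped pair that would join two points of a common column, hence of the same color, is forbidden as a bipartite matching. Combined with the requirement that the flip produces a non-crossing pair with the same four endpoints, this pins the rematching down uniquely in each case, and removes the need for a geometric non-crossing check.

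For an end crossing at $(k,w)$, the two flipped segments are $\sgt{w_{2k-1}}{x}$ with $x$ in column $k+1$ and $\sgt{w_{2k}}{y}$ with $y$ in column $k-1$. The color argument forces the rematching $\{\sgt{w_{2k-1}}{y},\sgt{w_{2k}}{x}\}$, so statement~\ref{item:x} at $(k,w)$ turns into statement~\ref{item:v}; the condition at $(k,w')$ with $w'\neq w$ is untouched; and since $x$ and $y$ remain matched to column $k$, whichever of~\ref{item:v} or~\ref{item:x} held at columns $k-1$ and $k+1$ still holds. For a middle crossing between columns $k$ and $k+1$, the two flipped segments are $\sgt{p_i}{q_j}$ and $\sgt{q_{i'}}{p_{j'}}$ with $i,i'$ in column $k$ and $j,j'$ in column $k+1$; the color argument forces the rematching $\{\sgt{p_i}{p_{j'}},\sgt{q_{i'}}{q_j}\}$, and each of the four endpoints stays matched to exactly the same column as before, so every column-letter condition is preserved verbatim.

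The main obstacle I foresee is nothing conceptual but purely bookkeeping: in each case, listing the handful of column-letter conditions that can possibly be affected by the flip and checking them one by one, while also making sure that the marginal columns $1$ and $m+1$ (which carry no condition of their own) do not create an exception. Once column-monochromaticity has eliminated the forbidden rematchings, both cases reduce to a short inspection.
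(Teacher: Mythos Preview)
Your proposal is correct and follows essentially the same case split as the paper's proof: invoke Lemma~\ref{lem:crossingsDerivedFence}, then show that an end-crossing flip turns statement~\ref{item:x} into statement~\ref{item:v} at the affected $(k,w)$ while leaving every other column-letter condition intact, and that a middle-crossing flip preserves all conditions verbatim. Your explicit use of column monochromaticity to pin down the unique bipartite rematching is a nice clarification that the paper leaves implicit.
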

\begin{proof}
  Lemma~\ref{lem:crossingsDerivedFence} ensures that we only have the following two cases.
  (i) The flip of an end crossing on the $w_0$ side ($w_0 \in \{p, q\}$) of the $k_0$-th column only changes statement~\ref{item:x} of the definition of a derived $m$-fence into statement~\ref{item:v} for $k,r=k_0,r_0$.
  The statements for the other $k,r$ are unchanged.
  (ii) The flip of a middle crossing simply leaves unchanged the statements for all $k,r$.
  
  \Figure~\ref{fig:derivedFence} is actually a sequence of flips starting at an $m$-fence and it contains essentially all the possible cases (symmetries aside). 
\end{proof}

\begin{lemma}
   \label{lem:flipDrops1CrossingInDerivedFence}
   A flip of a derived $m$-fence reduces its number of crossings by exactly one unit.
\end{lemma}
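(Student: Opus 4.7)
The plan is to apply Lemma~\ref{lem:crossingsDerivedFence} to split the analysis into two cases according to whether the flipped crossing is an end crossing or a middle crossing. Writing $s_1, s_2$ for the two flipped (crossing) segments and $s_1', s_2'$ for their non-crossing replacements, the flip itself removes the crossing between $s_1$ and $s_2$, contributing $-1$ to the total crossing count. Crossings between pairs of segments not involving $s_1$ or $s_2$ are unchanged. Hence it suffices to show that for every other segment $s$ of the matching, the number of segments in $\{s_1, s_2\}$ crossed by $s$ equals the number in $\{s_1', s_2'\}$ crossed by $s$.

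The engine driving both cases is the locality of derived $m$-fences: by the very definition, every segment has its two endpoints in two consecutive columns, and by Lemma~\ref{lem:flipStableDerivedFence} this persists after the flip. Consequently, any candidate segment $s$ whose endpoints do not lie in a column adjacent to those touched by $\{s_1, s_2, s_1', s_2'\}$ trivially crosses none of these four segments, and I only need to analyze the few segments whose endpoints lie in the ``local region'' of the flip.

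For a middle crossing flip, the four segments $s_1, s_2, s_1', s_2'$ share the same four endpoints, all inside two consecutive columns, say $k$ and $k+1$. I would enumerate the possible positions of the endpoints of a candidate other segment $s$ within columns $k-1, k, k+1, k+2$ and use convex position together with the constraints imposed by the derived $m$-fence definition to check that $s$ crosses $\{s_1, s_2\}$ and $\{s_1', s_2'\}$ the same number of times. For an end crossing flip at column $k_0$ on side $w_0$, the four segments involve the three consecutive columns $k_0 - 1$, $k_0$, $k_0 + 1$, and in the proof of Lemma~\ref{lem:flipStableDerivedFence} it was already noted that only statement~\ref{item:x} at $(k_0, w_0)$ changes (to statement~\ref{item:v}). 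Under this flip the recombination shuffles which endpoint in column $k_0$ is paired with which outer endpoint, so I would handle this case by exploiting the unchanged statements at the neighboring columns $k_0 \pm 1$: those statements constrain the layout of the ``opposite side'' segments and the other segments from columns $k_0 - 1$ and $k_0 + 1$, which, combined with the convex position of all $4m$ points, pins down their crossing count with $\{s_1, s_2\}$ and $\{s_1', s_2'\}$ to the same value.

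The main obstacle will be the end crossing case, since unlike the middle crossing case the endpoint set changes under the flip and segments in three consecutive columns (plus possibly the two further adjacent columns) need to be audited. However, this is only a bounded amount of bookkeeping: up to the symmetry that swaps $p \leftrightarrow q$ and $w_{2k-1} \leftrightarrow w_{2k}$, the configurations to verify are exactly the ones drawn in \Figure~\ref{fig:derivedFence}, and each reduces to a direct check using convex position of the involved column points.
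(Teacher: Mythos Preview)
Your plan is sound and would succeed, but it takes a different route from the paper in how the per-segment check is executed. You propose a local enumeration: restrict attention to segments $s$ whose columns neighbour those of the flipped pair, then verify by direct inspection (using the column constraints of a derived $m$-fence and convex position) that each such $s$ crosses $\{s_1,s_2\}$ and $\{s_1',s_2'\}$ the same number of times. The paper instead invokes a general fact about convex matchings: among the five combinatorially distinct placements of a third segment $s$ relative to a flipping pair $\{s_1,s_2\}$ in convex position (\Figure~\ref{fig:5ConvexConfigurations}), exactly one---the configuration in which each endpoint of $s$ lies between two same-colour endpoints of $\{s_1,s_2\}$ on the convex hull---causes the number of crossings involving $s$ to drop. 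The paper then rules out this single ``crossing-destructive'' configuration with a one-line observation in each case: for an end crossing, two same-colour endpoints of $\{s_1,s_2\}$ are adjacent on the convex hull, so no $s$ can interpose; for a middle crossing, the definition of a derived $m$-fence forbids any $s$ from crossing both $s_1$ and $s_2$, which the destructive configuration requires. Your enumeration buys concreteness at the cost of more bookkeeping; the paper's argument buys brevity by isolating the one obstruction up front and dispatching it structurally. (One minor slip in your write-up: the endpoint set of $\{s_1,s_2\}$ never changes under a flip, even for end crossings; what differs between the two cases is only the number of columns spanned.)
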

\begin{proof}
  Let $\M$ be a derived $m$-fence.
  Let $s_1$ and $s_2$ be two crossing segments of $\M$.
  Let $s$ be any other segment of $\M$.
  Let $s_1'$ and $s_2'$ be the two segments replacing $s_1$ and $s_2$ after they have been flipped, changing $\M$ into $\M'$.
  We show that the number of crossings between $s$ and $\pair{s_1}{s_2}$ is the same as between $s$ and $\pair{s_1'}{s_2'}$, ensuring that $\M'$ has exactly $1$ crossing less than $\M$.
  
  Let us recall that, as for any convex matching, the number of crossings cannot increase~\cite{BMS19}. 
  The proof of this result consists of the analysis of the five possible typical convex matchings (symmetries aside) of the three segments $s_1, s_2, s$ (\Figure~\ref{fig:5ConvexConfigurations}).
  It is notable that only one of these five matchings, the one where each endpoint of $s$ lies in between two endpoints of $\pair{s_1}{s_2}$ of the same color, corresponds to an actual decrease in the number of crossings involving $s$.

  \begin{figure}[!ht]
  \centering
  \includegraphics[page=1,scale=\graphicsScale]{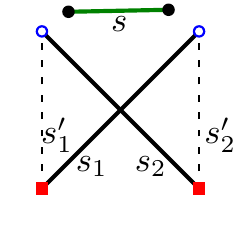} \quad 
  \includegraphics[page=2,scale=\graphicsScale]{5ConvexConfigurations} \quad 
  \includegraphics[page=3,scale=\graphicsScale]{5ConvexConfigurations} \quad 
  \includegraphics[page=4,scale=\graphicsScale]{5ConvexConfigurations} \quad 
  \includegraphics[page=5,scale=\graphicsScale]{5ConvexConfigurations}
  \caption{The five convex positions of $s$ with respect to the flipping pair $\pair{s_1}{s_2}$. This is used in the proof of Lemma~\ref{lem:flipDrops1CrossingInDerivedFence}.}
  \label{fig:5ConvexConfigurations}
  \end{figure}

  This crossing-destructive case cannot occur if two endpoints of $\pair{s_1}{s_2}$ of the same color are adjacent on the convex hull.
  Thus, Lemma~\ref{lem:flipDrops1CrossingInDerivedFence} holds for flips of end crossings.
  
  If $\pair{s_1}{s_2}$ is a middle crossing, then, by definition of a derived $m$-fence, no segment $s$ intersects both $s_1$ and $s_2$.
  Thus, the crossing-destructive case cannot occur, and Lemma~\ref{lem:flipDrops1CrossingInDerivedFence} holds for flips of middle crossings.
\end{proof}

Theorem~\ref{thm:lowerBd} follows from Lemma~\ref{lem:flipStableDerivedFence} and Lemma~\ref{lem:flipDrops1CrossingInDerivedFence}.

\section{Concluding Remarks}
\label{sec:conclusion}

Untangle sequences of TSP tours have been investigated since the 80s, when a cubic upper bound on $\D(n)$ has been discovered~\cite{VLe81}. This bound also holds for matchings (even non-bipartite ones) and has not been improved ever since.
Except for the convex case, there are big gaps between the lower and upper bounds, as can be seen in Table~\ref{tab:summary}. Experiments on tours and matchings have shown that, in all cases tested, the cubic upper bound is not tight and the lower bounds seem to be asymptotically tight.

Untangle sequences have many unexpected properties which make the problem harder than it seems at first sight. 
The following questions remain open.
\begin{enumerate}
    \item If we add a new segment to a crossing-free matching, what is the maximum length of an untangle sequence? Notice that an $\oo(n^2)$ bound would lead to an $\oo(n^3)$ bound for $\dd(n)$.
    
    \item Is it always possible to find an untangle sequence that does not flip the same pair of segments twice? Using a balancing argument, we can show that the number of \emph{distinct} flips in any untangle sequence is $\OO(n^{8/3})$~\cite{FGR22}.
    
    \item What is the maximum number of flips involving a given point? The cubic potential provides a quadratic bound which leads again to an $\OO(n^3)$ bound for $\D(n)$.
\end{enumerate}

We proved the NP-hardness of computing the shortest untangle sequence for a red-blue matching. 
What is the complexity of computing the shortest untangle sequence for a TSP tour, for a red-on-a-line matching, or even for a convex instance? 
What about the longest untangle sequence?

\bibliography{ref}

\end{document}